\documentclass[11pt]{article}
\usepackage{comment}
\usepackage[english]{babel}
\usepackage[latin1]{inputenc} 
\usepackage{graphicx}
\usepackage[table,xcdraw]{xcolor}
\usepackage[centertags]{amsmath}
\usepackage{amsfonts}
\usepackage{amssymb}
\usepackage{amsthm}
\usepackage{bm}
\usepackage{cite}
\usepackage{multirow}
\usepackage{geometry}
\usepackage{indentfirst}
\usepackage{dsfont}
\usepackage{textcomp}
\date{}
\usepackage{caption}

\begin{document}

\title{A taxonomy for controlling (in)consistency}

\author{
	{\large  \textsc{Marcelo E. Coniglio}}\thanks{coniglio@unicamp.br}\\
	{\small UNICAMP, CLE/IFCH} \\
	{\large \textsc{Rafael Ongaratto}}\thanks{ongarattorafa@gmail.com} \\
	{\small UNICAMP, CLE/IFCH} 
}

\maketitle



\begin{abstract}
  \noindent  In this article, the hierarchy of LFIs L$_n^k$, Logics of Controlled Consistency (LCC), is introduced. Inspired by da Costa's original C$_n$ systems, this hierarchy can represent different degrees of paraconsistent commitment and different related notions of consistency, inconsistency, and negation associated with each two-dimensional level of these logics. In one dimension, the logics become increasingly more paraconsistent by allowing the consistency operator to behave inconsistently up to a fixed iteration. In another dimension, the negation is increasingly strengthened. Initially, we present these logics with a swap structure semantics, showing their soundness and completeness. Some well-known LFIs are shown to be particular cases of LCCs. With some examples, we show how these different logics represent different types of paraconsistent commitment: from skepticism to dogmastism, these logics have the multiplicity to represent these different philosophical positions. Furthermore, the development of the hierarchy in a general manner allows pragmatism to take place when considering the different types of paraconsistent commitment. Each level we go up in this direction we get a stronger family of logics. Furthermore, we also present an extension of an LCC, a 5-valued LFI called LFI3, a sublogic of LFI1. LFI3 presents a paradigmatic case for the development of many-valued LFIs that have more than three values. Using a technique that combines Karnaugh Maps and Twist Structures, we give an axiomatization and a semantical account of LFI3. Finally, using RNmatrices, we give a general semantical account of the L$_n^k$ family of logics, and we also prove its soundness and completeness.

\end{abstract}

\textbf{Keywords:} paraconsistent logic, Logics of Formal Inconsistency, swap structures, many-valued logics, RNmatrices.

\section{Introduction}
Logics of Formal Inconsistency (LFIs) are a family of paraconsistent logics that includes a unary consistency operator $\circ$ that recovers classicality \cite{car:01}. By introducing such a primitive operator, LFIs are able to distinguish many scenarios in which consistency plays a role. We can consider stronger contexts in which non-contradiction implies consistency and consistency implies the consistency of consistency, such as the case of the robust 3-valued logic LFI1. However, we can also weaken the notion of consistency in such a way that consistency itself is sometimes contradictory: in mbC, the minimal LFI based on classical logic, consistency implies non-contradiction, which is guaranteed by the axiom (bc1) $\circ\alpha$ $\rightarrow$ ($\alpha$ $\rightarrow$ ($\neg\alpha$ $\rightarrow$ $\beta$)). However, neither non-contradiction implies consistency nor consistency implies consistency of consistency in mbC \cite[p. 33]{carnielli2016}.  

Furthermore, LFIs are also able to formalize inconsistency via the inconsistency operator $\bullet$. Whether to take $\circ$ or $\bullet$ as primitive operators is a matter of logical choice, but once the decision is taken, they are interdefinable. In that manner, we can also distinguish different inconsistent scenarios: in mbC, inconsistency does not imply contradiction. However, in LFI1, inconsistency and contradiction are both implied by each other. 

In intermediate LFIs, more axioms are introduced to restrict the behavior of $\circ$: mbCciw is the logic resulting from mbC with the addition of the axiom (ciw) $\circ\alpha$ $\lor$ ($\alpha$ $\land$ $\neg\alpha$). Adding (ciw) makes consistency determined in terms of non-contradiction. However, to determine inconsistency in terms of contradiction, we need to add the axiom (ci) $\neg$$\circ\alpha$ $\rightarrow$ ($\alpha$ $\land$ $\neg\alpha$), obtaining the stronger system mbCci \cite[p. 66-67]{carnielli2016}. 

The logics mbCciw and mbCci share a close relationship: mbCci is equivalent to mbCciw plus the axiom (cc) $\circ$$\circ\alpha$. Thus, they are an important step in the standard construction of LFIs, because the jump from mbCciw to mbCci is an important change in the behavior of consistency: in mbCci, any judgment of consistency is required to be itself consistent, characterized by the axiom (cc).

The jump from mbCciw to mbCci regarding consistency is not small: any formula of the form $\circ$$\circ\alpha$ is valid in mbCci. In particular, take a formula $\alpha$ = $\circ^n$$\beta$, $n$ $\geq$ 0. By (cc), $\vdash_{mbCci}$ $\circ^{n+2}$$\beta$. By a straightforward procedure, we can show that, for any formula $\alpha$, and for any k $\geq$ 2, $\vdash_{mbCci}$ $\circ^k$$\alpha$. 

This means that consistency in mbCci is only allowed to be inconsistent regarding judgments that are not about the consistency of some other proposition. Any level of meta-consistency, that is, consistency about the consistency of a statement, is not allowed to include divergences. The situation in mbCciw is the opposite: consistency is possibly contradictory at any level. There is no iteration $l$ of $\circ$ such that for all formulas $\alpha$, $\vdash_{mbCciw}$ $\circ^l$$\alpha$. Therefore, either you can choose consistency to behave inconsistently all the way in mbCciw, or you can choose consistency to be consistent in all meta-consistent situations in mbCci. This appears to be contrary to the spirit of paraconsistency, which already represents the notion of \textit{degrees} of paraconsistent commitment in da Costa's original work on C$_n$ systems \cite{da1974theory}. Even though C$_n$ systems were proven to be LFIs \cite[p. 25]{marcos2007}, the consistency operator in C$_n$ systems is a definable operator, that is, they are also dC-systems \cite[p. 72]{carnielli2016}. Thus, LFIs still lack a proper hierarchy of logics of degrees of paraconsistent commitment that is comparable to C$_n$ systems.

The above comparison already suggests that the construction of LFIs in this point should be supplemented. In many situations, meta-consistent statements are required to be not always consistent. For example, when a lawyer talks about the testimony of someone who he regards as unreliable, the prosecutor can talk about the lawyer's statements as themselves unreliable. That is, the prosecutor considers the lawyer's statements about the consistency of a testimony as themselves inconsistent.  In general, the context also dictates to which extent we can question the consistency of judgments. For example, when playing a game of detective in which the players can lie only about their current location, the players can question the consistency of each other's statement about their location. However, it makes no sense within this game to question the consistency of the statements concerning the consistency of their location, because they can only lie about their location. In this situation, consistency would be controlled at two iterations of the consistency operator. There may be other situations in which we want to control consistency at four iterations. In general, we want to control consistency statements to the point that we can control consistency at any level $n$, $n$ $\in$ $\mathds{N}$.  For each level of control $n$, we will present a family of LFIs to characterize such a context. mbCciw and mbCci stand as extreme points of this hierarchy of logics. Starting from mbCciw, for each logic L$_n^k$, we add the axiom (cc$_n$) $\circ^{n+2}$$\alpha$.  

Nevertheless, we can also go beyond this comparison. It is possible to strengthen mbCciw in other directions, while preserving the need for a (cc$_n$) axiom. One level up, we consider the system mbCciw with the addition of the axioms (cf) $\neg\neg$$\alpha$ $\rightarrow$ $\alpha$ and (ce) $\alpha$ $\rightarrow$ $\neg\neg\alpha$, resulting in the systems L$_n^1$. In level L$_n^2$, we consider the system L$_n^1$ plus the axiom (ip$^2$) $\neg{\circ}$$\neg\alpha$ $\leftrightarrow$ $\neg{\circ}$$\alpha$. Finally, in L$_k^3$, we consider the system L$_k^2$ plus the axiom (ip$^3$) $\neg{\circ}$$\circ$$\neg\alpha$ $\leftrightarrow$ $\neg{\circ}$$\circ$$\alpha$. Each level we go up in this direction we get a stronger family of logics.

Finally, we present a semantics based on swap structures \cite{coniglio2020} for some basic LCCs. Moreover, by inducing Nmatrices, we will show how to present these logics in terms of many-valuedness, providing an intuitive reading of these logics. From another perspective, a uniform semantics based on RNmatrices is presented for the logics L$_k^n$. Along the way, we will meet some already known logics such as Cbr, Cie \cite{coniglio2025logicparaconsistentbeliefrevision}, and we present a new 5-valued logic, LFI3, which has interesting relations with LFI1, an LFI whose history is deeply entangled with the history of paraconsistency: originally presented in \cite{itala1970} under the name of J3 and a different signature, LFI1 represents a starting point in the history of LFIs (see~\cite{carnielli2000} and~\cite[Section~4.4]{carnielli2016}).

\newtheorem{corollary}{Corollary}
\newtheorem{definition}{Definition}

\section*{A First Characterization of L$_n^k$}

In the most basic case of LFIs, consistency simply implies explosion.
\begin{center}
    (bc1) $\circ\alpha \rightarrow (\alpha \rightarrow (\neg\alpha \rightarrow \beta))$
\end{center}
The basic LFI based on classical logic is mbC, which is composed of Classical Positive Logic (CPL$^+$) and the axiom (TND) $\alpha$ $\lor$ $\neg\alpha$, together with the axiom (bc1).
\begin{definition}\label{mbc}
    mbC := CPL$^+$ + (TND) + (bc1) 
\end{definition}
In the standard construction of LFIs, some axioms are added to restrict the behavior of $\circ$ or of $\neg$. The first step towards this construction reaches mbCciw, which adds the following axiom to mbC:
\begin{center}
    (ciw) $\circ\alpha \lor (\alpha \land \neg\alpha)$
\end{center}
\begin{definition}\label{mbcciw} mbCciw := mbC + (ciw)
\end{definition}

As stated before, mbCciw only determines consistency in terms of non-contradiction. It does not guarantee that consistency (at any level) is consistent, that is, $\nvdash_{mbCciw}$ $\circ^n$$\alpha$, n $>$ 0. In order to avoid this feature, mbCci adds the axiom (cc) $\circ\circ$$\alpha$, thus guaranteeing that any meta-consistent statement is itself consistent. 

\begin{definition}\label{mbcci}
    mbCci := mbCciw + (cc) $\circ$$\circ$$\alpha$
\end{definition} 

\newtheorem{theorem}{Theorem}

\begin{theorem}\label{bolinha dupla teorema}
    Let k $\geq$ 2. Then, $\vdash_{mbCci}$ $\circ^k$$\alpha$.
\end{theorem}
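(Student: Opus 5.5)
The argument is a direct instantiation of the axiom schema (cc); no induction is really needed, though I will phrase it so that it covers every $k \geq 2$ uniformly. The key observation is that (cc) is a \emph{schema}: $\circ\circ\beta$ is an axiom of mbCci for every formula $\beta$ whatsoever, including formulas that themselves begin with $\circ$. Under the convention $\circ^0\alpha = \alpha$ and $\circ^{m+1}\alpha = \circ(\circ^m\alpha)$, we have $\circ^k\alpha = \circ\circ(\circ^{k-2}\alpha)$ for every $k \geq 2$.

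The plan is as follows. Fix $k \geq 2$ and set $n = k-2 \geq 0$. Take $\beta := \circ^{n}\alpha$, which is a well-formed formula of the language; it is just $\alpha$ when $n = 0$. Instantiating (cc) with $\beta$ yields $\circ\circ\circ^{n}\alpha = \circ^{n+2}\alpha = \circ^k\alpha$ as an axiom instance. Hence $\vdash_{mbCci}\circ^k\alpha$ by a one-line derivation, since axioms are derivable.

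If one prefers to match the introduction's phrasing (``by a straightforward procedure''), the same fact can be organized as an induction on $k$. For $k=2$ the formula is literally (cc). For the step, assume $\vdash_{mbCci}\circ^k\alpha$ for all $\alpha$; then $\circ^{k+1}\alpha = \circ^k(\circ\alpha)$, so the claim follows from the hypothesis applied to $\circ\alpha$. This makes explicit that we rely only on closure of theoremhood under substitution of formulas for the schematic variable. That closure holds because mbCci is presented by axiom schemas with modus ponens as its only rule.

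There is no genuine obstacle here. The only point requiring care is the bookkeeping of the iteration notation $\circ^k$, namely checking that $\circ^{k}\alpha$ and $\circ\circ\circ^{k-2}\alpha$ are syntactically identical. The other point is confirming that (cc) is stated as a schema over arbitrary formulas and not only over atoms. The paper's Definition \ref{mbcci} presents it schematically, like the other axioms, so this is justified.
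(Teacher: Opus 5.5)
Your proof is correct and matches the paper's argument. The paper does not give a separate proof after the theorem; in the introduction it obtains $\vdash_{mbCci}\circ^{n+2}\beta$ by instantiating the schema (cc) with $\circ^n\beta$, which is exactly your substitution $\beta := \circ^{k-2}\alpha$ (your optional induction is harmless but unnecessary).
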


In the standard construction of LFIs, there is a significant step made between mbCci and mbCciw, namely, there is a jump from absolute freedom of inconsistency to restriction of inconsistency only to first-level sentences. Similarly to Costa's C$_n$ systems and its hierarchy of increasingly stronger paraconsistent logics, we introduce the hierarchy L$_n^k$ of LFIs of Controlled Consistency:

  \begin{definition}\label{definicao geral}
  (1) Let L$_n^0$ := mbCciw + (cc$^n$) $\circ^{n+2}$$\alpha$, where $n \geq 0$.\\
  (2) Let
        L$_n^1$ := L$_n^0$ + (dn) $\neg\neg\alpha$ $\leftrightarrow$ $\alpha$. \\
  (3) Let
        L$_n^k$ := L$_n^0$ + (dn) $\neg\neg\alpha$ $\leftrightarrow$ $\alpha$  + (ip$^j$) $\neg$$\circ$$^{j}$$\neg\alpha$ $\leftrightarrow$ $\neg{\circ}^{j}$$\alpha$, where $n \geq 0$ and $1 \leq j < k$.
    \end{definition}
    
This hierarchy is two-dimensional: given L$_n^k$, the dimension $n$ tells us the level at which consistency becomes itself consistent, because it determines (cc$^n$) $\circ^{n+2}$$\alpha$. On the other hand, the dimension $k$ tells us the strength of the negation: as the level $k$ grows, the negation behaves more deterministically.

\section*{L$_1^k$: Basic Logics of Controlled Consistency}

Consider the language $\mathcal{L}_\Sigma$ defined by a set of propositional atoms $P = \{p_1,p_2,p_3, \ldots\}$ (using $p, q, r$ as metavariables) and the following BNF definition: 
\[
\alpha := p\ |\ \neg\alpha\ |\ \circ\alpha\ |\ (\alpha \land \alpha)\ |\ (\alpha \rightarrow \alpha)\ |\ (\alpha \lor \alpha)\  
\]
The logics LCC are defined over the language $\mathcal{L}_{\Sigma}$, $\Sigma$ is the signature. To start, let us consider L$_1^0$.

\begin{definition}\label{l_um_zero}
    L$_1^0$ := mbCciw + (cc$^1$) $\circ$$\circ$$\circ$$\alpha$
\end{definition}
L$_1^0$ is called the basic Logic of Controlled Consistency. Contrary to mbCci, this logic does not validate (cc) $\circ$$\circ$$\alpha$, that is, it allows inconsistency one level up from mbCci. From (cc$^1$), it follows that $\circ$$\circ$$\alpha$, $\neg{\circ}$$\circ$$\alpha$ $\vdash$ $\beta$. However, $\circ$$\alpha$, $\neg{\circ}$$\alpha$ $\nvdash$ $\beta$. That is, contradictoriness about the consistency of the consistency of some statement implies explosion, but not contradictoriness about the consistency of some statement. 
In the following definitions, a swap structure semantics will be constructed for L$_1^0$ and its respective Nmatrices induced. 

It will be useful in the sequel consider the two-element Boolean algebra $\mathcal{A}_2$ with domain ${\bf 2}=\{0,1\}$ and where meet, join, Boolean implication and Boolean complement will be denoted (without risk of confusion from the context) by $\land$, $\vee$, $\to$ and $\sim$, respectively.

Recall that swap structures are non-deterministic matrices whose elements are $(n+1)$-tuples (called snapshots) representing a 0-1 value for the formulas $\alpha$, $\psi_1(\alpha)$, \ldots, $\psi_n(\alpha)$ for given formulas  $\psi_1(p)$, \ldots, $\psi_n(p)$ depending on a single variable $p$. In turn, its multioperations are defined for each coordinate as a suitable Boolean combination of the input coordinates (see a detailed description of swap structures in~\cite{coniglio2025}).

Snapshots for L$_1^0$ will be triples $x=(x_1,x_2,x_3)$ representing 0-1 values for $\alpha, \neg\alpha$ and $\neg{\circ}\alpha$. By (TND) $x_1 \vee x_2=1$.
Since $\circ\alpha$ is equivalent to ${\sim}(\alpha \land \neg \alpha)$ (where ${\sim}\beta:=\neg \beta \land {\circ}\beta$ is the classical negation definable in mbCciw), it is not necessary to keep control of $\circ\alpha$ in the snapshots: $(\circ x)_1={\sim}(x_1 \land x_2)$ for any snapshot $x$. This implies, by (TND), that $x_3 \vee {\sim}(x_1 \land x_2)=1$. Using this  and the fact that $\circ$$\circ$$\circ$$\alpha$ is valid, the formula ${\circ}{\circ}\alpha \land \neg{\circ}{\circ}\alpha$ is a bottom. By (TND), it follows that $\neg{\circ}{\circ}\alpha$ is equivalent to ${\sim}{\circ}{\circ}\alpha$, which in turn is equivalent to $\circ \alpha \land \neg{\circ}\alpha$. That is, $(\circ x)_3={\sim}(x_1 \land x_2) \land x_3$, because of the snapshots definition. Since there are no restrictions on the values of $\neg(\alpha \#\beta)$ and $\neg{\circ}(\alpha \#\beta)$ for $\# \in \{\land,\vee,\to\}$, we arrive at the following definition:

\begin{definition}\label{semantica_1}
    Let $\mathds{B}_1^0 = \{x \in {\bf 2}\times {\bf 2}\times {\bf 2} \ : \ x_1 \lor x_2 = 1 \mbox{ and } x_3 \lor {\sim}(x_1 \land x_2) = 1\}$. The swap structure for L$_1^0$  is the multialgebra 
        $\mathcal{B}_1^0$ = $\langle \mathds{B}_1^0$, $\land$, $\lor$, $\rightarrow$, $\neg$, $\circ$$\rangle$  over $\Sigma$
such that the multioperations are defined as follows, for every $x$ and $y$ in $\mathds{B}_1^0$:

$\begin{array}{lll}
   x \# y  & = & \{z \in \mathds{B}_1^0 \ : \ z_1 = x_1 \# y_1\}, \text{ for each }\# \in \{\land, \lor, \rightarrow\};\\
   \neg x  & = & \{z \in \mathds{B}_1^0 \ : \ z_1 = x_2\};\\
  \circ x  & = & \{({\sim}(x_1 \land x_2), x_3, x_3 \land {\sim}(x_1 \land x_2))\}.
\end{array}$

\noindent Let $D_1^0 = \{x \in \mathds{B}_1^0 \ : \ x_1 = 1\}$ be the set of designated values. The non-deterministic matrix associated to $\mathcal{B}_1^0$ is $\mathcal{M}_1^0 = (\mathcal{B}_1^0, D_1^0)$, where $\vDash$$_{\mathcal{M}_1^0}$ is the corresponding consequence relation.
\end{definition}

Observe that $\mathds{B}_1^0 = \{(1, 0, 0), (1, 0, 1), (1, 1, 1), (0, 1, 1), (0, 1, 0)\}$, while the set of designated values is $D_1^0= \{(1, 0, 0), (1, 0, 1), (1, 1, 1)\}$.
Labeling $T = (1, 0, 0), t = (1, 0, 1), b = (1, 1, 1), f = (0, 1, 1), F = (0, 1, 0)$, the Nmatrix below is induced, where $D_1^0= \{$$T, t, b$$\}$. Let us look at these values in more detail. By Definition~\ref{semantica_1}, an ordered triple $(x_1, x_2, x_3)$ represents the value $(\alpha, \neg\alpha, \neg{\circ}\alpha)$ -- $\circ\alpha$ is congruent to $\sim$($\alpha$ $\land$ $\neg\alpha$). Therefore, the state (1, 0, 0) is the state in which $\alpha$ holds, $\neg\alpha$ does not hold, and $\neg$$\circ\alpha$ does not hold. Thus, $\alpha$ is a positive information that is reliable in all levels. Therefore, $\alpha$ is true. The state (1, 0, 1), on the other hand, is a state in which $\alpha$ holds and $\neg\alpha$ does not hold. However, there is contradictoriness in the level of the consistency of $\alpha$, because $\circ\alpha$ holds and $\neg$$\circ\alpha$ also holds. Thus, $\alpha$ is a positive information that has conflict about its reliability. The state (1, 1, 1) is the state in which contradiction occurs in the basic level of $\alpha$ and $\neg\alpha$, therefore, $\alpha$ is a conflicting information. Similarly to the others, (0, 1, 0) is the state in which $\alpha$ is false, and (0, 1, 1) is the state in which there is negative information about $\alpha$ that has conflict about its reliability.

\begin{figure}[h]
    \centering
    \includegraphics[width=\textwidth]{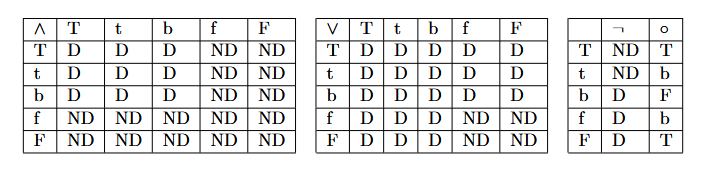}
    \includegraphics[width=.4\textwidth]{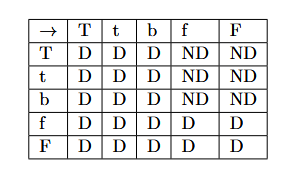}
    \caption{}
\end{figure}

\noindent\textbf{Example} Suppose that John and Joseph are arguing over the question whether $p$: ``The dog is outside'' is true. They settle this question by hearing if a dog barks outside. However, they also agree that through this method of verification, the proposition ``The dog is outside" is not reliable: we may conclude there is no dog outside when there is a dog that does not bark. We may conclude there is a dog if there is a machine outside reproducing the sound of a bark and no dog. Suppose that they hear, in fact, the barking of a dog. In that case, they both agree that there is positive information about the proposition ``The dog is outside''. However, they also agree that the consistency of this proposition is conflicting, which means that $\vartheta(p) = t$. Another question would be to ask ``Is $p$ consistent?''. To intuitively understand the answer, let us look at $p$: in one way, it is consistent at the first level of $p$ and $\neg p$. In another way, it is inconsistent at the second level of $\circ p$ and $\neg$$\circ p$. Therefore, it is expected that $\vartheta(\circ p) = b$, that is, there is conflicting information about the consistency of $p$.

\begin{theorem}{$\text{(Soundness)}$}\label{soundness}
        If $\Gamma$ $\vdash_{L_1^0}$ $\alpha$, then $\Gamma$ $\vDash_{\mathcal{M}_1^0}$ $\alpha$.
\end{theorem}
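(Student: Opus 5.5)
The proof is by induction on the length of a derivation of $\alpha$ from $\Gamma$ in the Hilbert calculus of L$_1^0$. Since Modus Ponens is the only inference rule, two things must be shown. First, every instance of an axiom schema is valid in $\mathcal{M}_1^0$: for every valuation $\vartheta$ over $\mathcal{B}_1^0$ (a map respecting the multioperations, $\vartheta(\#(\alpha_1,\ldots)) \in \#(\vartheta(\alpha_1),\ldots)$), the axiom gets a value in $D_1^0$. Second, MP preserves designation. For MP, if $\vartheta(\alpha)_1=1$ and $\vartheta(\alpha\to\beta)_1=1$, then $\vartheta(\alpha\to\beta)_1=\vartheta(\alpha)_1\to\vartheta(\beta)_1$ forces $\vartheta(\beta)_1=1$. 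Given these two facts, the induction goes through at once: premises in $\Gamma$ are designated by hypothesis, axioms are always designated, and MP steps preserve designation.

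For the axioms, the key observation is that the first coordinate of a value of $\alpha\#\beta$ is computed Booleanly from the first coordinates. So the projection $\vartheta\mapsto\vartheta(\cdot)_1$ behaves like a classical bivaluation on the positive connectives, and every instance of a CPL$^+$ axiom gets first coordinate $1$ by the usual truth-table check in $\mathcal{A}_2$. The remaining axioms are handled as follows.

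\textbf{(TND).} We have $\vartheta(\alpha\vee\neg\alpha)_1=x_1\vee x_2$, where $x=\vartheta(\alpha)$ and $\vartheta(\neg\alpha)_1=x_2$. This equals $1$ by the definition of $\mathds{B}_1^0$.

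\textbf{(bc1).} We have $\vartheta(\circ\alpha)_1={\sim}(x_1\wedge x_2)$. The first coordinate of the axiom is then ${\sim}(x_1\wedge x_2)\to(x_1\to(x_2\to b))$, which is $1$.

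\textbf{(ciw).} The first coordinate is ${\sim}(x_1\wedge x_2)\vee(x_1\wedge x_2)=1$.

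\textbf{(cc$^1$).} This is the only step that needs real computation, and it is where the particular shape of the $\circ$ multioperation matters; I expect it to be the main obstacle. Write $x=\vartheta(\alpha)$.
\begin{itemize}
\item $\circ x=(u,x_3,x_3\wedge u)$ with $u={\sim}(x_1\wedge x_2)$.
\item Applying $\circ$ again gives $\circ\circ x=({\sim}(u\wedge x_3),\,x_3\wedge u,\,\ldots)$. Its first two coordinates are ${\sim}(u\wedge x_3)$ and $u\wedge x_3$, so they are never both $1$.
\item Hence $(\circ\circ\circ x)_1={\sim}\big({\sim}(u\wedge x_3)\wedge(u\wedge x_3)\big)=1$.
\end{itemize}
Note that $\circ$ is deterministic and, by construction, its outputs lie in $\mathds{B}_1^0$. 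For the ``each point is a snapshot'' requirement, it is worth checking explicitly that $(u,x_3,x_3\wedge u)$ satisfies both defining constraints of $\mathds{B}_1^0$:
\begin{itemize}
\item $u\vee x_3=1$, by the second constraint on $x$;
\item $(x_3\wedge u)\vee{\sim}(u\wedge x_3)=1$, which holds trivially.
\end{itemize}
Similarly, the sets $x\#y$ and $\neg x$ must be nonempty. This holds because for either prescribed first coordinate some snapshot exists: take $(1,0,0)$ or $(0,1,0)$.

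Once these checks are recorded, the induction closes and yields $\Gamma\vDash_{\mathcal{M}_1^0}\alpha$.
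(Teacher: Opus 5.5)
Your proof is correct and takes essentially the paper's approach: you show every axiom schema is always designated and MP preserves designation, and the only nontrivial cases are (cc$^1$) and (ciw). The difference is only presentational: you compute with snapshot coordinates in $\mathcal{B}_1^0$ (for instance, showing the first two coordinates of ${\circ}{\circ}x$ are complementary), while the paper reads the same facts off the induced 5-valued truth tables.
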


\begin{proof}
    This proof is straightforward using Nmatrices by checking, for each axiom and rule of L$_1^0$, its correspondingtruth-table, given the correspondence between swap structures and its induced Nmatrices.

\begin{figure}[h]
    \centering
    \includegraphics[width=.4\textwidth]{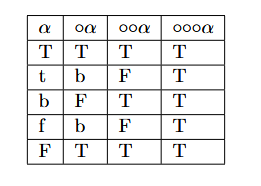}
    \includegraphics[width=.6\textwidth]{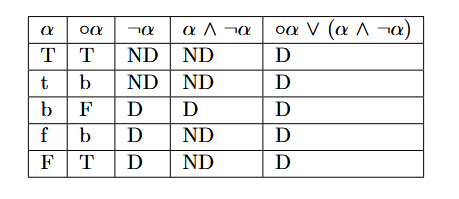}
    \caption{}
\end{figure}

    (i) $\vDash_{\mathcal{M}_1^0}$ $\circ$$\circ$$\circ$$\alpha$


(ii) $\vDash_{\mathcal{M}_1^0}$ $\circ\alpha$ $\lor$ ($\alpha$ $\land$ $\neg\alpha$)

\end{proof}

The next step is to prove completeness in relation to the swap structure. Our strategy is to prove indirectly: first, considering completeness in relation to valuations. Then, proving that both consequence relations are equivalent.

\newtheorem{lemma}{Lemma}

\begin{definition}\label{maximal non trivial}
    Let \textbf{L} be a Tarskian logic over the language $\mathcal{L}_\Sigma$, $\Gamma$ $\cup$ $\alpha$ $\subseteq$ $\mathcal{L}_\Sigma$. $\Gamma$ is maximal non-trivial with respect to $\alpha$ in $\mathcal{L}_\Sigma$ if $\Gamma$ $\nvdash$ $\alpha$ and $\Gamma$, $\beta$ $\vdash$ $\alpha$, for any $\beta$ $\notin$ $\Gamma$. \cite[p. 36]{carnielli2016}
\end{definition}

\begin{definition}\label{valuation function}
A function $\vartheta$: $\mathcal{L}_\Sigma$ $\mapsto$ \{0, 1\} is a valuation for L$_1^0$, or an L$_1^0$-valuation, if it satisfies the following clauses:
\[\begin{array}{ll}
 (vAnd)    &  \vartheta(\alpha \land \beta) = 1 \ \Leftrightarrow \  \vartheta(\alpha) = 1 \text{ and } \vartheta(\beta) = 1\\
   (vOr)  & \vartheta(\alpha \lor \beta) = 1 \ \Leftrightarrow \ \vartheta(\alpha) = 1 \text{ or } \vartheta(\beta) = 1\\
   (vImp)      &  \vartheta(\alpha \rightarrow \beta) = 1 \ \Leftrightarrow \ \vartheta(\alpha) = 0 \text{ or } \vartheta(\beta) = 1\\
  (vNeg)   & \vartheta(\neg\alpha) = 0 \ \Rightarrow \ \vartheta(\alpha) = 1\\
   (vConCiw)      &  \vartheta(\circ\alpha) = 1  \ \Leftrightarrow \ \vartheta(\alpha) = 0 \text{ or } \vartheta(\neg\alpha) = 0\\
    (vCc^1)     &  \vartheta(\neg{\circ^2}\alpha) = 1   \ \Leftrightarrow \ \vartheta(\circ^2\alpha) = 0.

\end{array}
\]
\end{definition}

Suppose $\Gamma$ $\nvdash$ $\alpha$. Then, by Theorem 2.2.6 \cite[p. 37]{carnielli2016}, for any Tarskian and finitary language over $\mathcal{L}_\Sigma$, there exists a set $\Delta$ such that $\Gamma$ $\subseteq$ $\Delta$ $\subseteq$ $\mathcal{L}_\Sigma$ and $\Delta$ is maximal non-trivial with respect to $\alpha$. 

\begin{theorem}\label{valuation2}
    Let $\Gamma$ $\cup$ $\{\alpha\}$ $\subseteq$ $\mathcal{L}_\Sigma$, $\Gamma$ being maximal non-trivial with respect to $\alpha$ in L$_1^0$. The mapping $\vartheta$: $\mathcal{L}_\Sigma$ $\mapsto$ \{0, 1\} defined by:
\[\vartheta(\beta) = 1 \Leftrightarrow \beta \in \Gamma, \text{ for all } \beta \in \mathcal{L}_\Sigma
\]

    is a valuation for L$_1^0$.
\end{theorem}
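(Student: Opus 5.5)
We verify each clause of Definition~\ref{valuation function} for the characteristic function $\vartheta$ of $\Gamma$. The argument rests on two standard facts about a set $\Gamma$ that is maximal non-trivial with respect to $\alpha$ in a Tarskian, finitary logic extending CPL$^+$ with the deduction metatheorem (as L$_1^0$ does, being an axiomatic extension of mbC with modus ponens as its only rule). First, $\Gamma$ is a closed theory: $\Gamma \vdash \beta$ implies $\beta \in \Gamma$. Indeed, if $\beta \notin \Gamma$ then $\Gamma, \beta \vdash \alpha$, and together with $\Gamma \vdash \beta$ this gives $\Gamma \vdash \alpha$, contradicting non-triviality. Second, $\Gamma$ is prime: if $\beta \lor \gamma \in \Gamma$ and neither disjunct is in $\Gamma$, then $\Gamma,\beta\vdash\alpha$ and $\Gamma,\gamma\vdash\alpha$. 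By the deduction metatheorem and the positive axiom for proof by cases, $\Gamma \vdash \alpha$, which is a contradiction. Finally, $\Gamma$ never contains both $\beta$ and $\beta \to \gamma$ without containing $\gamma$, and if $\beta \notin \Gamma$ then $\beta \to \gamma \in \Gamma$ for every $\gamma$. The latter holds because $\Gamma,\beta \vdash \alpha$, and via Peirce-style reasoning in CPL$^+$ together with closure one gets $\beta\to\gamma \in \Gamma$; this is the standard argument in \cite[p. 37--38]{carnielli2016}.

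With these facts, the positive clauses (vAnd), (vOr) and (vImp) follow exactly as in the classical case. (vAnd) uses the conjunction axioms and closure, (vOr) uses the disjunction introduction axioms plus primeness, and (vImp) uses modus ponens plus the fact just mentioned. For (vNeg), (TND) gives $\beta \lor \neg\beta \in \Gamma$, so by primeness $\beta \in \Gamma$ or $\neg\beta \in \Gamma$.

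For (vConCiw) we treat the two directions separately. Suppose $\circ\beta\in\Gamma$ and also $\beta, \neg\beta \in \Gamma$. Axiom (bc1) with modus ponens then yields every formula, in particular $\alpha$, contradicting $\Gamma\nvdash\alpha$; so $\vartheta(\beta)=0$ or $\vartheta(\neg\beta)=0$. Conversely, if $\circ\beta\notin\Gamma$, then by (ciw) and primeness $\beta\land\neg\beta\in\Gamma$, hence both conjuncts are in $\Gamma$.

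For (vCc$^1$), we instantiate (TND) at $\circ^2\beta$: $\circ^2\beta \lor \neg\circ^2\beta \in \Gamma$, so by primeness at least one disjunct belongs to $\Gamma$. This gives the direction from $\vartheta(\circ^2\beta)=0$ to $\vartheta(\neg\circ^2\beta)=1$. For the other direction, (cc$^1$) gives $\circ\circ\circ\beta = \circ(\circ^2\beta)\in\Gamma$. Applying (bc1) to the formula $\circ^2\beta$, having both $\circ^2\beta$ and $\neg\circ^2\beta$ in $\Gamma$ would make $\Gamma$ trivial, so $\vartheta(\neg\circ^2\beta)=1$ forces $\vartheta(\circ^2\beta)=0$.

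The main obstacle is not any of the specific clauses but the preliminary lemma on maximal non-trivial sets, namely closure, primeness and the implication property. These depend on the deduction metatheorem for L$_1^0$. We would justify that metatheorem by noting that L$_1^0$ is a Hilbert calculus whose only rule is modus ponens and which contains the K and S axioms of CPL$^+$, and then cite the corresponding general result for such extensions in \cite{carnielli2016}. Once these properties are in place, each clause reduces to a single axiom instance.
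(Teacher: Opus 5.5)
Your proposal is correct and takes essentially the same route as the paper. The paper simply cites the Carnielli--Coniglio book for (vAnd), (vOr), (vImp), (vNeg) and (vConCiw), which you instead derive from closure, primeness and the implication property. For (vCc$^1$) you argue exactly as the paper does: the instance $\circ(\circ^2\beta)$ of (cc$^1$) together with (bc1) rules out having both $\circ^2\beta$ and $\neg\circ^2\beta$ in $\Gamma$, and the other direction comes from (TND) and primeness, i.e.\ from (vNeg).
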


\begin{proof}
For proofs of clauses (vAnd), (vOr), (vImp), (vNeg), (vConCiw), the reader is referred to \cite[p. 38]{carnielli2016} and \cite[Theorem~3.1.3]{carnielli2016}.\\
(vCc$^1$) Suppose, by contradiction, that $\vartheta$($\circ$$\circ\beta$) = 1 and $\vartheta$($\neg{\circ}$$\circ\beta$) = 1. Then $\circ\circ$$\beta$ $\in$ $\Gamma$ and $\neg$$\circ\circ$$\beta$ $\in$ $\Gamma$, that is, $\Gamma$ $\vdash$ $\circ\circ$$\beta$ and $\Gamma$ $\vdash$ $\neg{\circ}$$\circ\beta$. By axiom (cc$^1$), (bc1) and MP, $\vdash$ $\circ\circ$$\beta$ $\rightarrow$ ($\neg{\circ}$$\circ\beta$ $\rightarrow$ $\alpha$), therefore, by two applications of MP, $\Gamma$ $\vdash$ $\alpha$, which is absurd given that $\Gamma$ is maximal non-trivial with respect to $\alpha$. Therefore, $\vartheta({\circ^2}\beta) = 1$ implies that $\vartheta(\neg{\circ^2}\beta) = 0$. Since the converse holds by (vNeg), we proved that $\vartheta$ satisfies condition (vCc$^1$).
    
\end{proof}

\begin{theorem}\label{completeness-val}
    Completeness of L$_1^0$ w.r.t. valuations: For every $\Gamma$ $\cup$ $\alpha$ $\subseteq$ $\mathcal{L}_\Sigma$:
\[
 \Gamma \vDash_{L_1^0} \alpha \Rightarrow \Gamma \vdash_{L_1^0} \alpha
\]

\end{theorem}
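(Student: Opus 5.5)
We argue by contraposition, which is the standard route for completeness with respect to bivaluations. Suppose $\Gamma \nvdash_{L_1^0} \alpha$. Since L$_1^0$ is Tarskian and finitary, being given by a Hilbert calculus with modus ponens as its only rule, Theorem 2.2.6 of \cite[p. 37]{carnielli2016} (recalled above) yields a set $\Delta$ with $\Gamma \subseteq \Delta \subseteq \mathcal{L}_\Sigma$ that is maximal non-trivial with respect to $\alpha$ in L$_1^0$.

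Next we define $\vartheta$ as the characteristic function of $\Delta$: $\vartheta(\beta)=1$ if and only if $\beta \in \Delta$. By Theorem~\ref{valuation2}, $\vartheta$ is an L$_1^0$-valuation. This is where the real work lies, because the clauses (vAnd), (vOr), (vImp), (vNeg) and (vConCiw) depend on $\Delta$ being closed under derivability. Closure holds because $\Delta$ is maximal non-trivial: if $\Delta \vdash \beta$ but $\beta \notin \Delta$, then $\Delta,\beta \vdash \alpha$, and cutting gives $\Delta \vdash \alpha$, which is impossible.

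Since $\Gamma \subseteq \Delta$, we have $\vartheta(\gamma)=1$ for every $\gamma \in \Gamma$. On the other hand $\alpha \notin \Delta$, because otherwise $\Delta \vdash \alpha$. Hence $\vartheta(\alpha)=0$.

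So $\vartheta$ is an L$_1^0$-valuation that assigns $1$ to all of $\Gamma$ and $0$ to $\alpha$, which means $\Gamma \nvDash_{L_1^0} \alpha$. This establishes the contrapositive, and therefore the statement.

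All the clause-checking has already been done in Theorem~\ref{valuation2}, so the argument is essentially a corollary of it. The only point needing care is to state explicitly that $\vDash_{L_1^0}$ means preservation of the value $1$ under every valuation in the sense of Definition~\ref{valuation function}. I therefore expect no real obstacle.
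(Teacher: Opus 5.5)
Your proposal is correct and follows essentially the same route as the paper: contraposition, a maximal non-trivial extension $\Delta \supseteq \Gamma$ with respect to $\alpha$, and Theorem~\ref{valuation2} applied to the characteristic function of $\Delta$, which gives a valuation that is $1$ on $\Gamma$ and $0$ on $\alpha$. Your remark that maximal non-trivial sets are closed under derivability is a correct detail that the paper leaves implicit in the results it cites.
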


\begin{proof}
    By contraposition, suppose that $\Gamma$ $\nvdash$ $\alpha$. Then, there is a maximal non-trivial set $\Delta$ w.r.t. $\alpha$ such that $\Gamma$ $\subseteq$ $\Delta$. By Theorem \ref{valuation2}, there exists a valuation such that $\vartheta$($\beta$) = 1 $\Leftrightarrow$ $\beta$ $\in$ $\Delta$. Therefore, there exists a valuation such that v[$\Gamma$] $\subseteq$ \{1\}, given that $\Gamma$ $\subseteq$ $\Delta$ but $\vartheta$($\alpha$) = 0, as $\alpha$ $\notin$ $\Delta$. Therefore, $\Gamma$ $\nvDash$ $\alpha$. 
\end{proof}

    Let us prove now that any valuation $\vartheta$ for L$_1^0$ induces a valuation $h$ over  $\mathcal{M}_1^0$ which satisfies the same formulas as  $\vartheta$.  This  will be sufficient to prove completeness of L$_1^0$ w.r.t. its Nmatrix semantics.

    \begin{theorem}\label{representation theorem}
    Given a valuation $\vartheta$ for L$_1^0$, consider the function $h:\mathcal{L}_\Sigma \to \mathds{B}_1^0$  given as follows: $h(\alpha)=(\vartheta(\alpha),\vartheta(\neg\alpha),\vartheta(\neg{\circ}\alpha))$, for every $\alpha$ $\in$ $\mathcal{L}_\Sigma$. Then,  $h$ is a valuation over  $\mathcal{M}_1^0$ such that, for every $\alpha$ $\in$ $\mathcal{L}_\Sigma$, $\vartheta$($\alpha$) = 1 iff $h$($\alpha$) $\in D_1^0$.
\end{theorem}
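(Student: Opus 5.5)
We must show three things in turn: that $h(\alpha)\in\mathds{B}_1^0$ for every $\alpha$, that $h$ respects each multioperation of $\mathcal{B}_1^0$ (i.e.\ $h(\#(\alpha,\beta))\in h(\alpha)\,\#\,h(\beta)$ and $h(\neg\alpha)\in\neg h(\alpha)$, $h(\circ\alpha)\in\circ h(\alpha)$), and that designation matches $\vartheta$. The last point is immediate, since $h(\alpha)\in D_1^0$ iff the first coordinate $\vartheta(\alpha)$ equals $1$.

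\textbf{Membership in $\mathds{B}_1^0$.} The condition $x_1\vee x_2=1$ is exactly (vNeg) applied to $\alpha$. For $x_3\vee{\sim}(x_1\land x_2)=1$, suppose $\vartheta(\neg{\circ}\alpha)=0$. Then (vNeg) applied to $\circ\alpha$ gives $\vartheta(\circ\alpha)=1$, and (vConCiw) yields $\vartheta(\alpha)=0$ or $\vartheta(\neg\alpha)=0$, i.e.\ ${\sim}(x_1\land x_2)=1$.

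\textbf{Binary connectives and negation.} For $\#\in\{\land,\vee,\to\}$ the multioperation only constrains the first coordinate. Clauses (vAnd), (vOr), (vImp) give $\vartheta(\alpha\#\beta)=\vartheta(\alpha)\#\vartheta(\beta)$, and $h(\alpha\#\beta)\in\mathds{B}_1^0$ by the previous step. Likewise, $h(\neg\alpha)$ has first coordinate $\vartheta(\neg\alpha)=h(\alpha)_2$, so $h(\neg\alpha)\in\neg h(\alpha)$.

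\textbf{The consistency operator.} Since $\circ h(\alpha)$ is a singleton, we must verify the three coordinate equalities
\[
\vartheta(\circ\alpha)={\sim}(\vartheta(\alpha)\land\vartheta(\neg\alpha)),\quad \vartheta(\neg{\circ}\alpha)=\vartheta(\neg{\circ}\alpha),\quad \vartheta(\neg{\circ}{\circ}\alpha)=\vartheta(\neg{\circ}\alpha)\land{\sim}(\vartheta(\alpha)\land\vartheta(\neg\alpha)).
\]
The first is (vConCiw) and the second is trivial. The third is where (vCc$^1$) is needed, and I expect it to be the only real obstacle. By (vCc$^1$), $\vartheta(\neg{\circ}{\circ}\alpha)=1$ iff $\vartheta(\circ\circ\alpha)=0$. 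Applying (vConCiw) to $\circ\alpha$, this holds iff $\vartheta(\circ\alpha)=1$ and $\vartheta(\neg{\circ}\alpha)=1$. Using the first equality, that is in turn equivalent to ${\sim}(\vartheta(\alpha)\land\vartheta(\neg\alpha))=1$ together with $\vartheta(\neg{\circ}\alpha)=1$, which is the required right-hand side.

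Combining the three steps, $h$ is a homomorphism-like valuation over the Nmatrix $\mathcal{M}_1^0$ agreeing with $\vartheta$ on designation, which proves the statement.
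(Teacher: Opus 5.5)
Your proof is correct and follows essentially the same route as the paper: you check the multioperations coordinatewise, using (vConCiw) for the first coordinate of $\circ h(\alpha)$, and (vCc$^1$) together with (vConCiw) applied to $\circ\alpha$ for the third. You also explicitly verify that $h(\alpha)\in\mathds{B}_1^0$ via (vNeg) and (vConCiw), a step the paper leaves implicit.
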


\begin{proof} Let $\alpha$ $\in$ $\mathcal{L}_\Sigma$. By definition of $h$, $h(\alpha)=(\vartheta(\alpha),\vartheta(\neg\alpha),\vartheta(\neg{\circ}\alpha))$ and $h(\neg\alpha)=(\vartheta(\neg\alpha),\vartheta(\neg\neg\alpha),\vartheta(\neg{\circ}\neg\alpha))$. By Definition~\ref{semantica_1}, $h(\neg\alpha) \in \neg h(\alpha)$. The proof that $h(\alpha \# \beta) \in h(\alpha) \# h(\beta)$ is analogous. Now, note that  $h(\circ\alpha)=(\vartheta(\circ\alpha),\vartheta(\neg{\circ}\alpha),\vartheta(\neg{\circ}{\circ}\alpha))$ while ${\circ}h(\alpha)=\{({\sim}(\vartheta(\alpha) \land \vartheta(\neg\alpha)),\vartheta(\neg{\circ}\alpha),\vartheta(\neg{\circ}\alpha) \land {\sim}(\vartheta(\alpha) \land \vartheta(\neg\alpha)))\}$. As a consequence of Definition~\ref{valuation function}, $\vartheta({\sim}\beta)={\sim}\vartheta(\beta)$, for every $\beta$. Hence, by (vConCiw) and the other clauses, $\vartheta({\circ}\alpha) = {\sim}(\vartheta(\alpha) \land \vartheta(\neg\alpha))$. Now, by $(vCc^1)$, $\vartheta(\neg{\circ^2}\alpha) = {\sim}\vartheta(\circ^2\alpha) = {\sim}{\sim}(\vartheta({\circ}\alpha) \land \vartheta(\neg{\circ}\alpha)) = \vartheta({\circ}\alpha) \land \vartheta(\neg{\circ}\alpha) = {\sim}(\vartheta(\alpha) \land \vartheta(\neg\alpha)) \land \vartheta(\neg{\circ}\alpha)$. This proves that  $h({\circ}\alpha) \in {\circ}h(\alpha)$ and so $h$ is a valuation  over  $\mathcal{M}_1^0$. By the very definition, $\vartheta$($\alpha$) = 1 iff $h$($\alpha$) $\in D_1^0$.
\end{proof}

\begin{theorem}{$\text{(Completeness)}$}\label{completeness}
        If $\Gamma$ $\vDash_{\mathcal{M}_1^0}$ $\alpha$, then $\Gamma$ $\vdash_{L_1^0}$ $\alpha$.
\end{theorem}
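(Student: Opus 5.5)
We argue by contraposition, chaining Theorem~\ref{completeness-val} with Theorem~\ref{representation theorem}. Suppose $\Gamma \nvdash_{L_1^0} \alpha$. By Theorem~2.2.6 of \cite{carnielli2016}, since L$_1^0$ is Tarskian and finitary, there is a set $\Delta \supseteq \Gamma$ that is maximal non-trivial with respect to $\alpha$. By Theorem~\ref{valuation2}, its characteristic function $\vartheta$ is an L$_1^0$-valuation. It satisfies $\vartheta[\Gamma]\subseteq\{1\}$ and $\vartheta(\alpha)=0$. In other words, $\Gamma \nvDash_{L_1^0} \alpha$, which is the content of Theorem~\ref{completeness-val} read contrapositively.

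Next we transfer this countermodel to the Nmatrix. Define $h(\beta)=(\vartheta(\beta),\vartheta(\neg\beta),\vartheta(\neg{\circ}\beta))$ for every $\beta$, as in Theorem~\ref{representation theorem}. That theorem gives two facts. First, $h$ is a legitimate valuation over $\mathcal{M}_1^0$: each $h(\beta)$ lies in $\mathds{B}_1^0$, and $h$ respects every multioperation. Second, $\vartheta(\beta)=1$ iff $h(\beta)\in D_1^0$. Hence $h(\gamma)\in D_1^0$ for all $\gamma\in\Gamma$, while $h(\alpha)\notin D_1^0$. Therefore $\Gamma \nvDash_{\mathcal{M}_1^0}\alpha$, which proves the theorem.

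The only step I expect to need care is confirming that the $h$ produced by Theorem~\ref{representation theorem} really takes values in $\mathds{B}_1^0$, i.e.\ that it satisfies both defining constraints.
\begin{itemize}
\item The constraint $x_1\vee x_2=1$ follows from (vNeg).
\item The constraint $x_3\vee{\sim}(x_1\wedge x_2)=1$ needs an argument. Suppose $\vartheta(\beta)=\vartheta(\neg\beta)=1$. Then (vConCiw) gives $\vartheta(\circ\beta)=0$, and (vNeg) applied to $\circ\beta$ gives $\vartheta(\neg{\circ}\beta)=1$.
\end{itemize}
This membership is implicit in the statement of that theorem, so I would simply note it and invoke the result. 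With both facts in hand, the proof is a two-line composition of the earlier results.
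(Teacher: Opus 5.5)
Your proof is correct and is essentially the paper's argument. The paper likewise composes completeness with respect to L$_1^0$-valuations with the representation theorem that turns each valuation $\vartheta$ into $h(\beta)=(\vartheta(\beta),\vartheta(\neg\beta),\vartheta(\neg{\circ}\beta))$; it states this directly rather than contrapositively and does not re-unfold the Lindenbaum step. Your explicit check that $h$ lands in $\mathds{B}_1^0$, using (vNeg) and (vConCiw), fills in a detail the paper leaves implicit.
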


\begin{proof} 
Suppose that $\Gamma$ $\vDash_{\mathcal{M}_1^0}$ $\alpha$, and let $\vartheta$ be a valuation for L$_1^0$ such that $\vartheta(\beta)=1$ for every $\beta \in \Gamma$. Let $h$ be the valuation  over  $\mathcal{M}_1^0$ induced by $\vartheta$ as in Theorem~\ref{representation theorem}. Then, $h(\beta) \in D_1^0$ for every $\beta \in \Gamma$ and so, by hypothesis, $h(\alpha) \in D_1^0$. This means that $\vartheta(\alpha)=1$, proving that $\Gamma \vDash_{L_1^0} \alpha $. By Theorem~\ref{completeness-val},  $\Gamma \vdash_{L_1^0} \alpha$.
\end{proof}
         
L$_1^k$ are logics in which $\circ$$\circ$$\circ$$\alpha$ holds. This means that conflict of information is allowed at most at one iteration of the $\circ$ operator: in the level of the reliability of the reliability of a statement, $\vartheta$($\circ$$\circ$$\alpha$) $\neq$ $\vartheta$($\neg$$\circ$$\circ$$\alpha$). In the example above, this means that we will not question the criterion to determine whether the criterion through which we judge the proposition ``The dog is barking outside'' is a reliable one or not. $\circ\circ$$\alpha$ is determined in terms of $\circ\alpha$ and $\neg$$\circ\alpha$, and the value of $\neg{\circ}$$\circ$$\alpha$ will be its Boolean complement. We give in Figure~3 a table up to $\circ$$\circ$$\circ$$\alpha$ to illustrate these notions.

\begin{figure}[h]
    \centering
    \includegraphics[width=.6\textwidth]{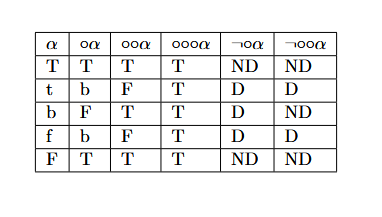}
    \caption{}
\end{figure}
One desideratum of such logics is that, when considering the negation of statements that have controversial criteria, they should also have controversial criteria. ``The dog is barking outside'', whether answered positively or negatively, remains with a weak criterion of truth. That is the reason we introduce the stronger logics L$_1^1$ and L$_1^2$:

\begin{definition}
L$_1^1$ := L$_1^0$ + (cf) $\neg\neg\alpha$ $\rightarrow$ $\alpha$ + (ce) $\alpha$ $\rightarrow$ $\neg\neg\alpha$
\end{definition}

Let Cbr = mbCciw +  (cf) $\neg\neg\alpha$ $\rightarrow$ $\alpha$ + (ce) $\alpha$ $\rightarrow$ $\neg\neg\alpha$   \cite{coniglio2025logicparaconsistentbeliefrevision}.
It is worth noting that L$_1^1 =$ Cbr + (cc$^1$) $\circ$$\circ$$\circ$$\alpha$.

In order to define a swap structure for L$_1^1$, we simply restrict the swap structure for L$_1^0$.

\begin{definition}
The swap structure for L$_1^1$  is the multialgebra 
        $\mathcal{B}_1^1$ = $\langle \mathds{B}_1^0$, $\land$, $\lor$, $\rightarrow$, $\neg$, $\circ$$\rangle$ defined as $\mathcal{B}_1^0$ (recall Definition~\ref{semantica_1}) but now with the multioperator for negation defined as follows,  for every $x$ in $\mathds{B}_1^0$:

$\begin{array}{lll}
   \neg x  & = & \{z \in \mathds{B}_1^0 \ : \ z_1 = x_2 \ \mbox{ and } \ z_2=x_1\}.
\end{array}$

\noindent The non-deterministic matrix associated to $\mathcal{B}_1^1$ is $\mathcal{M}_1^1 = (\mathcal{B}_1^1, D_1^0)$, where $\vDash$$_{\mathcal{M}_1^1}$ is the corresponding consequence relation.
\end{definition}

The Nmatrix induced by a swap structure for L$_1^1$ are the same as L$_1^0$, except for negation (see Figure~4).


\begin{figure}[!h]
    \centering
    \includegraphics[width=.2\textwidth]{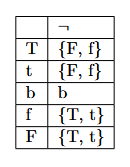}
    \caption{}
\end{figure}

\begin{theorem}
    Soundness. If $\Gamma \vdash_{L_1^1} \alpha$ then $\Gamma \vDash_{\mathcal{M}_1^1} \alpha$
\end{theorem}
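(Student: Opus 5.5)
The plan is to reuse Theorem~\ref{soundness} and only check what changes. The logic L$_1^1$ is L$_1^0$ plus (cf) and (ce). The structure $\mathcal{B}_1^1$ has the same domain $\mathds{B}_1^0$, the same designated set $D_1^0$, and the same multioperations $\land,\lor,\to,\circ$ as $\mathcal{B}_1^0$. Its negation multioperator is a restriction of the old one: for every $x$, the set $\neg x$ in $\mathcal{B}_1^1$ is contained in $\neg x$ in $\mathcal{B}_1^0$.

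It follows that every valuation over $\mathcal{M}_1^1$ is also a valuation over $\mathcal{M}_1^0$. Hence every formula valid in $\mathcal{M}_1^0$ is valid in $\mathcal{M}_1^1$, and modus ponens preserves designated values in $\mathcal{M}_1^1$ just as in $\mathcal{M}_1^0$. So, by Theorem~\ref{soundness}, every axiom of L$_1^0$ is valid in $\mathcal{M}_1^1$ and MP is sound. This covers CPL$^+$, (TND), (bc1), (ciw) and (cc$^1$) at once.

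It remains to check (cf) and (ce). Let $h$ be a valuation over $\mathcal{M}_1^1$ and write $h(\alpha)=x$. Then $h(\neg\alpha)=z$ with $z_1=x_2$ and $z_2=x_1$. Likewise $h(\neg\neg\alpha)=w$ with $w_1=z_2=x_1$. So the first coordinate of $h(\neg\neg\alpha)$ equals that of $h(\alpha)$, and $h(\alpha)\in D_1^0$ iff $h(\neg\neg\alpha)\in D_1^0$. Since implication acts classically on first coordinates, $(h(\alpha)\to h(\neg\neg\alpha))$ forces a first coordinate $x_1\to x_1=1$, and similarly in the other direction. Therefore both (cf) and (ce) are always designated.

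A side point is that $\neg x$ must be nonempty in $\mathcal{B}_1^1$, so that $\mathcal{M}_1^1$ is a genuine Nmatrix. For $x=(x_1,x_2,x_3)$ we need some $z\in\mathds{B}_1^0$ with $(z_1,z_2)=(x_2,x_1)$. Since $x_1\lor x_2=1$, the pair $(x_2,x_1)$ satisfies the first constraint. For the second, choose $z_3=1$, which always satisfies $z_3\lor{\sim}(z_1\land z_2)=1$. This small nonemptiness check is the only potential obstacle; the rest follows routinely from Theorem~\ref{soundness} by induction on derivations.
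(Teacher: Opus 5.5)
Your proof is correct and takes the same route as the paper: it inherits soundness of the L$_1^0$ axioms and MP from Theorem~\ref{soundness}, then shows (cf) and (ce) are designated because the first coordinate of $h(\neg\neg\alpha)$ equals that of $h(\alpha)$ under the $\neg$ multioperator of $\mathcal{B}_1^1$. Two things the paper leaves implicit are made explicit in your version, and both are correct: that this $\neg$ is a restriction of the old one, so valuations over $\mathcal{M}_1^1$ are valuations over $\mathcal{M}_1^0$, and that $\neg x$ is nonempty.
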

\begin{proof}
    It remains to consider the additional case of axioms (ce) and (cf). Clearly, if $h$ is  a valuation over $\mathcal{M}_1^1$ then $h(\alpha) \in D_1^0$ iff $h(\neg\neg\alpha) \in D_1^0$. This can be proven by inspection of Figure~5, but also by analyzing the definition of the multioperator for $\neg$ in $\mathcal{B}_1^1$. This shows that $\mathcal{B}_1^1$ validates  axioms (ce) and (cf). 
\end{proof}

\begin{figure}[h]
    \centering
    \includegraphics[width=.6\textwidth]{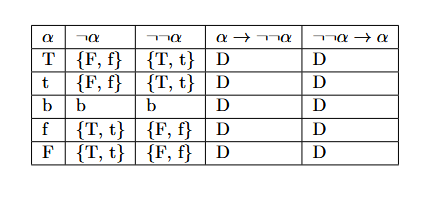}
    \caption{}
\end{figure}

To prove completeness, the same strategy used for L$_1^0$ can be used adding the following two conditions to the bivaluations. 

\begin{definition} 
A function $\vartheta: \mathcal{L}_\Sigma \mapsto \{0,1\}$ is a valuation for L$_1^1$, or an L$_1^1$-valuation, if it is an L$_1^0$ valuation and it satisfies the following clause:
\[\begin{array}{ll}
  (vdn)   & \vartheta(\neg\neg\alpha) = \vartheta(\alpha).
\end{array}
\]
\end{definition}

\begin{theorem}
     Completeness of L$_1^1$ w.r.t. valuations: For every $\Gamma$ $\cup$ $\{\alpha\}$ $\subseteq$ $\mathcal{L}_\Sigma$:

    \begin{center}
        $\Gamma$ $\vDash_{L_1^1}$ $\alpha$ $\Rightarrow$ $\Gamma$ $\vdash_{L_1^1}$ $\alpha$
    \end{center}
    
\end{theorem}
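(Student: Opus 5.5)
We follow the strategy used for L$_1^0$ in Theorems~\ref{valuation2} and~\ref{completeness-val}. The first step is a Lindenbaum-style extension. Suppose $\Gamma \nvdash_{L_1^1} \alpha$. Since L$_1^1$ is Tarskian and finitary (it is given by a Hilbert calculus with MP as its only rule), Theorem 2.2.6 of \cite[p. 37]{carnielli2016} yields a set $\Delta \supseteq \Gamma$ that is maximal non-trivial with respect to $\alpha$ in L$_1^1$. We then define $\vartheta(\beta)=1$ iff $\beta \in \Delta$.

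The second step is to show that $\vartheta$ is an L$_1^1$-valuation. Every axiom of L$_1^0$ is an axiom of L$_1^1$, and $\Delta$ is closed under $\vdash_{L_1^1}$ (it is a closed theory, since it is maximal non-trivial). Hence the argument of Theorem~\ref{valuation2} goes through verbatim, relativized to $\vdash_{L_1^1}$, and gives clauses (vAnd), (vOr), (vImp), (vNeg), (vConCiw) and (vCc$^1$). The only point to check is that the cited proofs use only closure under derivability and the fact that $\Delta$ is maximal non-trivial. They also rely on the standard consequence that $\beta \to \alpha \in \Delta$ whenever $\beta \notin \Delta$, which comes from the deduction metatheorem. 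That metatheorem holds for L$_1^1$, because the only rule is MP and the calculus contains the positive classical axioms.

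The new clause is (vdn). If $\vartheta(\beta)=1$, then $\beta \in \Delta$, so by (ce) and MP we get $\Delta \vdash \neg\neg\beta$; closure then gives $\neg\neg\beta \in \Delta$, that is, $\vartheta(\neg\neg\beta)=1$. Conversely, if $\neg\neg\beta \in \Delta$, then (cf) and MP give $\beta \in \Delta$. Thus $\vartheta(\neg\neg\beta)=\vartheta(\beta)$.

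The final step is the contrapositive argument of Theorem~\ref{completeness-val}. Here $\vartheta[\Gamma]\subseteq\{1\}$ because $\Gamma \subseteq \Delta$, while $\vartheta(\alpha)=0$ because $\alpha \notin \Delta$. So $\Gamma \nvDash_{L_1^1} \alpha$.

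I expect no real obstacle. The one point needing care is the claim that the earlier valuation-lemma proofs transfer unchanged to the stronger logic, and that rests only on the deduction metatheorem and closure properties, both of which remain available.
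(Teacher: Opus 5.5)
Your proposal is correct and follows essentially the same route as the paper: extend to a maximal non-trivial set as in Theorem~\ref{valuation2}, note that the L$_1^0$ clauses carry over since every L$_1^0$ axiom is an L$_1^1$ axiom, and check the only new clause (vdn) using (ce), (cf) and closure of $\Delta$ under derivability. Your extra justification for why the old clauses transfer, via closure and the deduction metatheorem, just makes explicit what the paper takes for granted.
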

\begin{proof}
    We use the same construction of maximal non-trivials sets as in the proof of Theorem~\ref{valuation2}. In the present case, we only need to prove the validity of clause (vdn) and the completeness proof follows standardly. Thus, let $\Gamma$ be a maximal non-trivial with respect to $\alpha$ in L$_1^1$, and define $\vartheta$ as in  the proof of Theorem~\ref{valuation2}. By axioms (ce) and (cf), $\alpha \in \Gamma$ iff $\neg\neg\alpha \in \Gamma$. Hence, $\vartheta$ satisfies (vdn).
\end{proof}

The proof of the following result follows by adapting the proof for L$_1^0$:

\begin{theorem}{$\text{(Completeness)}$}
        If $\Gamma$ $\vDash_{\mathcal{M}_1^1}$ $\alpha$, then $\Gamma$ $\vdash_{L_1^1}$ $\alpha$.
\end{theorem}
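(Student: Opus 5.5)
The argument mirrors the completeness proof for L$_1^0$ (Theorem~\ref{completeness}), reducing the Nmatrix completeness of L$_1^1$ to the completeness of L$_1^1$ with respect to L$_1^1$-valuations, which was just established. The key step is an analogue of Theorem~\ref{representation theorem}. Given an L$_1^1$-valuation $\vartheta$, I define $h:\mathcal{L}_\Sigma \to \mathds{B}_1^0$ as before by $h(\alpha)=(\vartheta(\alpha),\vartheta(\neg\alpha),\vartheta(\neg{\circ}\alpha))$. I then show that $h$ is a valuation over $\mathcal{M}_1^1$ and that $\vartheta(\alpha)=1$ iff $h(\alpha)\in D_1^0$.

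For the representation step, note that an L$_1^1$-valuation is in particular an L$_1^0$-valuation. So the proof of Theorem~\ref{representation theorem} already gives three facts. First, $h(\alpha)\in\mathds{B}_1^0$. Second, $h(\alpha\#\beta)\in h(\alpha)\# h(\beta)$ for $\#\in\{\land,\lor,\to\}$. Third, $h(\circ\alpha)\in{\circ}h(\alpha)$. These multioperations of $\mathcal{B}_1^1$ coincide with those of $\mathcal{B}_1^0$. The designation condition is also immediate from the definition of $h$, since $D_1^0$ is unchanged.

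The only new case is negation, where $\mathcal{B}_1^1$ additionally requires $z_2=x_1$. Here $h(\neg\alpha)=(\vartheta(\neg\alpha),\vartheta(\neg\neg\alpha),\vartheta(\neg{\circ}\neg\alpha))$. The first coordinate equals $h(\alpha)_2$. By clause (vdn), the second coordinate $\vartheta(\neg\neg\alpha)$ equals $\vartheta(\alpha)=h(\alpha)_1$. The third coordinate is unconstrained by the multioperator, so $h(\neg\alpha)\in\neg h(\alpha)$ in $\mathcal{B}_1^1$. This is the one place where anything beyond the L$_1^0$ argument is needed, and it is handled directly by (vdn). I expect no real obstacle; the only care required is checking that membership of $h(\neg\alpha)$ in $\mathds{B}_1^0$ still holds, which it does because it is just $h$ applied to a formula.

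Finally, I conclude as in Theorem~\ref{completeness}. Suppose $\Gamma\vDash_{\mathcal{M}_1^1}\alpha$, and let $\vartheta$ be an L$_1^1$-valuation with $\vartheta[\Gamma]\subseteq\{1\}$. The induced $h$ designates every formula in $\Gamma$, so by hypothesis $h(\alpha)\in D_1^0$, and hence $\vartheta(\alpha)=1$. Thus $\Gamma\vDash_{L_1^1}\alpha$. By the completeness of L$_1^1$ with respect to valuations, $\Gamma\vdash_{L_1^1}\alpha$.
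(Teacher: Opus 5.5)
Your proof is correct and is essentially the paper's approach. The paper only says that the result follows by adapting the L$_1^0$ argument, and your proof is that adaptation: you reuse the map $h(\alpha)=(\vartheta(\alpha),\vartheta(\neg\alpha),\vartheta(\neg{\circ}\alpha))$ from the L$_1^0$ representation theorem, use clause (vdn) to supply the extra requirement $z_2=x_1$ for negation, and conclude via completeness of L$_1^1$ with respect to valuations.
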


In the case of L$_1^1$, it still does not meet the requirement that the controversy of the criterion must spread to its negation. To do this, it is necessary to go one level up the hierarchy.

\begin{definition}
    L$_1^2$ := L$_1^1$ + (ip$^2$) $\neg$$\circ$$\neg\alpha$ $\leftrightarrow$ $\neg$$\circ$$\alpha$
\end{definition}

\begin{definition}
The swap structure for L$_1^2$  is the multialgebra 
        $\mathcal{B}_1^2$ = $\langle \mathds{B}_1^0$, $\land$, $\lor$, $\rightarrow$, $\neg$, $\circ$$\rangle$ defined as $\mathcal{B}_1^0$ (recall Definition~\ref{semantica_1}) but now with the multioperator for negation is deterministic, and defined as follows,  for every $x$ in $\mathds{B}_1^0$:

$\begin{array}{lll}
   \neg x  & = & \{(x_2,x_1,x_3)\}.
\end{array}$

\noindent The non-deterministic matrix associated to $\mathcal{B}_1^2$ is $\mathcal{M}_1^2 = (\mathcal{B}_1^2, D_1^0)$, where $\vDash$$_{\mathcal{M}_1^2}$ is the corresponding consequence relation.
\end{definition}

\noindent
The table for negation in $\mathcal{M}_1^2$ can be seen in  Figure~6. Completeness and soundness proof for L$_1^2$ can be obtained analogously to the previous cases.

\begin{figure}[h]
    \centering
    \includegraphics[width=.15\textwidth]{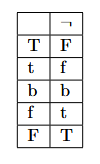}
    \caption{}
\end{figure}

\begin{theorem}{$\text{(Soundness and Completeness)}$}
        $\Gamma \vdash_{L_1^2} \alpha$ \ if and only if \  $\Gamma \vDash_{\mathcal{M}_1^2} \alpha$.
\end{theorem}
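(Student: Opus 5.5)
Soundness is handled by extending the soundness argument for L$_1^1$ to the new axiom (ip$^2$); completeness is handled by the two-step strategy used for L$_1^0$ and L$_1^1$. First I would show completeness with respect to bivaluations. Then I would show that every bivaluation induces a valuation over $\mathcal{M}_1^2$ satisfying the same formulas.

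\textbf{Soundness.} By the soundness results for L$_1^0$ and L$_1^1$, every axiom and rule of L$_1^1$ is valid in $\mathcal{M}_1^2$. This holds because $\mathcal{B}_1^2$ is a refinement of $\mathcal{B}_1^1$: the deterministic negation $\neg x=\{(x_2,x_1,x_3)\}$ is one of the options allowed by $\mathcal{B}_1^1$, so every valuation over $\mathcal{M}_1^2$ is also a valuation over $\mathcal{M}_1^1$. It therefore only remains to check (ip$^2$).

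Let $h$ be a valuation over $\mathcal{M}_1^2$ with $h(\alpha)=x$. Then $h(\neg\alpha)=(x_2,x_1,x_3)$. Applying $\circ$ to both gives
\[
{\circ}h(\alpha)=({\sim}(x_1\land x_2),x_3,\ldots)
\quad\text{and}\quad
{\circ}h(\neg\alpha)=({\sim}(x_2\land x_1),x_3,\ldots).
\]
These two snapshots share their second coordinate $x_3$. Now $h(\neg{\circ}\beta)$ has first coordinate equal to the second coordinate of $h(\circ\beta)$. Hence $h(\neg{\circ}\neg\alpha)$ and $h(\neg{\circ}\alpha)$ have the same first coordinate, so one is designated iff the other is. Both directions of the biconditional are therefore validated.

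\textbf{Completeness.} I would define an L$_1^2$-valuation as an L$_1^1$-valuation that also satisfies
\[
(vip^2)\quad \vartheta(\neg{\circ}\neg\alpha)=\vartheta(\neg{\circ}\alpha).
\]
Take $\Gamma$ maximal non-trivial with respect to $\alpha$ in L$_1^2$, and define $\vartheta$ as in Theorem~\ref{valuation2}. The clauses already established hold, since the L$_1^1$ proof carries over. Clause (vip$^2$) follows because maximal non-trivial sets are closed under derivability: by (ip$^2$) and MP, $\neg{\circ}\neg\beta\in\Gamma$ iff $\neg{\circ}\beta\in\Gamma$. Completeness with respect to valuations then follows exactly as in Theorem~\ref{completeness-val}.

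\textbf{Representation step.} Given an L$_1^2$-valuation $\vartheta$, define $h(\alpha)=(\vartheta(\alpha),\vartheta(\neg\alpha),\vartheta(\neg{\circ}\alpha))$ as in Theorem~\ref{representation theorem}. The clauses for $\#\in\{\land,\lor,\to\}$ and for $\circ$ are verified verbatim as there. For negation, we have
\[
h(\neg\alpha)=(\vartheta(\neg\alpha),\vartheta(\neg\neg\alpha),\vartheta(\neg{\circ}\neg\alpha)).
\]
By (vdn), $\vartheta(\neg\neg\alpha)=\vartheta(\alpha)$, and by (vip$^2$), $\vartheta(\neg{\circ}\neg\alpha)=\vartheta(\neg{\circ}\alpha)$. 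So $h(\neg\alpha)=(x_2,x_1,x_3)$ where $x=h(\alpha)$, which is the unique element of $\neg h(\alpha)$. Completeness for $\mathcal{M}_1^2$ then follows as in Theorem~\ref{completeness}.

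The main point to watch is this negation clause. The multioperator for $\neg$ is now deterministic, so the induced $h$ must hit exactly that value, and this is precisely where both (vdn) and (vip$^2$) are needed. I would also check that $(x_2,x_1,x_3)$ actually lies in $\mathds{B}_1^0$. It does, because both defining conditions of $\mathds{B}_1^0$ are symmetric in $x_1$ and $x_2$.
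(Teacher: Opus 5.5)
Your proposal is correct and takes the same route the paper intends. The paper only remarks that soundness and completeness for L$_1^2$ follow analogously to L$_1^0$ and L$_1^1$, and you carry that analogy out in full: $\mathcal{M}_1^2$ refines $\mathcal{M}_1^1$, plus a direct check of (ip$^2$) for soundness; bivaluations with the extra clause (vip$^2$) and the representation map $h$ for completeness. You also correctly handle the point the paper leaves implicit, namely that in the now-deterministic negation clause both (vdn) and (vip$^2$) are needed to land on $(x_2,x_1,x_3)$.
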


\noindent\textbf{Example} Suppose that Joseph is in his living room and he wants to know whether there is someone in his bedroom. To settle this question, he decides to use the following criterion, that he recognizes to be weak: the proposition $p$: ``There is someone in my bedroom'' is the case iff if I kick a ball inside it, someone complains about it. Suppose that Joseph kicks a ball and no one complains about it. What is the value of the proposition ``There is no one in my bedroom'', the logical negation of $p$?

Given that no one complained about Joseph's ball, it means that John will accept that $p$ is not the case, that is, $\vartheta(p) \notin$ D. Moreover, $p$, although being decided as false, is controversial with relation to its method of verification. Therefore, $\vartheta(p) = f$. Thus, $\vartheta(\neg p) = t$, which means that the proposition ``There is no one in my bedroom'' is decided in the affirmative by Joseph, however, its method of verification is not reliable, because $\vartheta(\circ\neg p) = b$.

As the examples above show, in the subfamily L$_1^k$ of logics, the consistency of the consistency of some statement can be questioned, that is, we can question whether some method that decides a given question is itself reliable. From a philosophical perspective, this relates to the problem of criterion, a classical discussion in epistemology: given a proposition, how do we determine the criteria of its truth? Suppose that, given a proposition $p$, we give a set of criteria $\Delta$ such that $\Delta$ specifies the truth conditions for $p$. In our context, $\Delta$ decides whether $p$ and $\neg$$p$ are designated or not. However, we can question whether $\Delta$ are reliable criteria. That is, we ask ``What are the conditions for $\Delta$ to be correct?''. In that case, we are asking whether $\neg$$\circ p$ is designated or not. Now, the problem is how to determine whether $\Delta$ are correct criteria. In L$_1^k$, however, the value of $\circ$$\circ$$\alpha$ and $\neg$$\circ$$\circ$$\alpha$, that is, the reliability of the $\Delta$ criteria, are not themselves subjected to controversiality. Therefore, we stop questioning the reliability of propositions in the second level. In the example above, this means that it does not make sense to ask ``Is the criteria through which I decide that kicking a ball is not a reliable criterion to determine $p$ reliable criteria themselves?''.

\section*{LFI3: an extension of L$_1^2$}

    LFI1 and LFI2 were introduced as 3-valued LFIs in \cite{carnielli2000} (i.e., logics characterized by 3-valued logical matrices). Notwithstanding, LFI1 was introduced many years before in 1970 in \cite{itala1970}. In that work, it was introduced as J3 under the signature $\Sigma_{J3}$ = \{$\neg, \lor, \triangledown$\}. The main motivation behind the conception of J3 is a problem as old as the history of paraconsistent logic itself: Ja\'skowski's challenge to find logics that are paraconsistent, intuitive and applicable \cite{jaskowski1969propositional}. J3 is one of the strong candidates to meet these requirements, because it is a 3-valued paraconsistent logic, which makes it arguably more intuitive than other non-deterministic semantics available at the time. By recovering classical reasoning, J3 can also meet the requirement of practical application of this logic, given that it is expressive enough to deal with paraconsistent and classical phenomena. 

    Nevertheless, LFI1 or J3 may be regarded as simplifying consistency \textit{too much}, due to the equivalence between consistency and non-contradiction in LFI1, that is, $\vdash_{LFI1} \circ\alpha \leftrightarrow {\sim}(\alpha \land \neg\alpha)$. A 5-valued logic, on the other hand, can distinguish more cases than a 3-valued logic. Thus, LFI3 is introduced as a more general solution to Ja\'skowski's challenge. It is possible to extend L$_1^2$ so that it becomes a 5-valued LFI (i.e., an LFI characterized by a  5-valued logical matrix).

    The key for moving from a 5-valued Nmatrix (for L$_1^2$) to a 5-valued matrix (for LFI3) is adding restrictions to the (strictly non-deterministic) multioperators for the binary connectives in order to obtain deterministic operators. A very natural way to do this, generalizing the idea of LFI1, is to consider a linear order in the domain $\mathds{B}_1^0$ as follows: $F < f < b < t < T$. This defines automatically the infimum (minimum) and the supremum (maximum) between two elements, interpreting conjunction and disjunction in a deterministic way. Moreover, the negation $\neg$ satisfies: $a \leq b$ iff $\neg b \leq \neg a$. From this, the De Morgan laws hold automatically (as in LFI1). Using the (deterministic) strong negation ${\sim} a := \neg a \land {\circ} a$, it is possible to define the (deterministic) implication as in LFI1 as follows: $a \to b:={\sim}a \vee b$. This produces the (deterministic) truth-tables displayed in Figure~7.

    \begin{figure}[h]
    \centering
    \includegraphics[width=.6\textwidth]{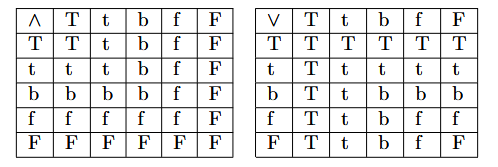}
    \includegraphics[width=.6\textwidth]{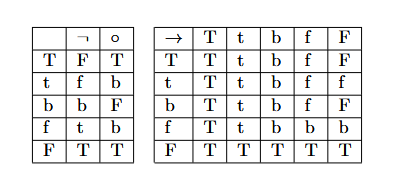}
    \caption{}
\end{figure}

Using these tables, a logical matrix can be defined, in which the set of designated values is $D_1^0$.  A natural question is how to represent  the new operators $\land$, $\vee$, $\sim$ and $\to$ in terms of snapshots. Since they are deterministic, they are generated by a twist structure instead of a swap structure. Such a  twist structure is a (deterministic) sub-multialgebra of the multialgebra of  L$_1^2$ given by its swap structure. Moreover, by the considerations above, it is enough defining the twist operator for conjunction which corresponds to the truth-table of Figure~7. Using that twist operator and the given one for negation, the other ones are obtained automatically.

Now, observe that, for $a,b \in \mathds{B}_1^0$ (seen as snapshots), its infimum $a \ \dot\land \ b$ is expressed, as a twist operator, as follows: 
$$(a_1, a_2, a_3)\ \land\ (b_1, b_2, b_3) = (a_1 \land b_1,a_2 \vee b_2,G(a_1, a_2, a_3,b_1, b_2, b_3))$$
for some Boolean function $G$. Note that the first coordinate of  $a \ \dot\land \ b$ follows from the swap structure of L$_1^2$, while the second one follows from the definition of the negation and the fact that the De Morgan law must  hold. By using  the technique,  based on Karnaugh Maps, for constructing twist structures from truth-tables presented in \cite[p. 974-75]{veronicakarnaugh}, it is possible to determine a Boolean function $G$ satisfying the requirements. This lead us to the following definition: 

\begin{definition} \label{twist:LFI3}
    The twist structure for LFI3 is the algebra $T_{\bf 2} = (\mathds{B}_1^0, \dot\land, \dot\lor, \dot\rightarrow, \dot\neg, \dot\circ)$ over $\Sigma$ such that the operations are defined as follows, for every $(a_1, a_2, a_3), (b_1, b_2, b_3)$ in $\mathds{B}_1^0$: 
    \[\begin{array}{ll}
        1. & (a_1, a_2, a_3)\ \dot\land\ (b_1, b_2, b_3) = (a_1 \land b_1, a_2 \lor b_2, ({\sim} a_2 \land b_3) \lor (a_3\ \land {\sim} b_2) \lor (a_3 \land b_3)) \\
        2. & (a_1, a_2, a_3)\ \dot\lor\ (b_1, b_2, b_3) = (a_1 \lor b_1, a_2 \land b_2, ({\sim} a_1 \land b_3) \lor (a_3\ \land {\sim} b_1) \lor (a_3 \land b_3))\\
        3.  & (a_1, a_2, a_3)\ \dot\rightarrow\ (b_1, b_2, b_3) = (a_1 \rightarrow b_1, b_2 \land ({\sim} a_2 \lor a_3),\\ 
        &  \hspace*{1cm} ({\sim} a_2 \land b_3) \lor ({\sim} a_2 \land a_3\ \land {\sim} b_1) \lor (a_3 \land b_3) \lor ({\sim} a_1 \land a_3\ \land {\sim} b_1))\\
        4. & \dot\neg\ (a_1, a_2, a_3) = (a_2, a_1, a_3)\\
        5. & \dot\circ\ (a_1, a_2, a_3) = ({\sim}(a_1 \land a_2),\ a_3,\ a_3\ \land {\sim}(a_1 \land a_2))
    \end{array}\]
    The matrix associated to $T_{\bf 2}$ is $\mathcal{M}_{\bf 2} = \langle T_{\bf 2},D_1^0 \rangle$.
\end{definition}


It is clear that the matrix $\mathcal{M}_{\bf 2}$ coincides with the original one for LFI3 displayed in Figure~7. Let $\sim$$\alpha$ denote a strong negation, defined by $\neg\alpha \land \circ\alpha$. The corresponding truth-table is displayed in Figure~8.

\begin{figure}[h]
    \centering
    \includegraphics[width=.15\textwidth]{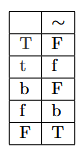}
    \caption{}
\end{figure}

It is easy to prove that the negation $\sim$ recovers explosion, and also satisfies excluded middle.

\begin{theorem}
    For all formulas $\alpha$, $\beta$, 
    \begin{enumerate}
        \item $\alpha$, $\sim$$\alpha$ $\vDash_{LFI3}$ $\beta$
        \item $\vDash_{LFI3}$ $\alpha$ $\lor$ $\sim$$\alpha$
    \end{enumerate}
\end{theorem}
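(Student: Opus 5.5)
Since LFI3 is characterized by the deterministic matrix $\mathcal{M}_{\bf 2}$ of Definition~\ref{twist:LFI3}, both items reduce to a finite check on the five snapshots of $\mathds{B}_1^0$. The plan is to first compute the truth-function of ${\sim}a := \dot\neg a \ \dot\land\ \dot\circ a$ in snapshot form, and then read off both claims from the first coordinates, since designation only depends on $x_1$.

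\emph{Computing ${\sim}$.} For $a=(a_1,a_2,a_3)$ we have $\dot\neg a=(a_2,a_1,a_3)$ and $\dot\circ a=({\sim}(a_1\land a_2),a_3,a_3\land{\sim}(a_1\land a_2))$. By clause 1 of Definition~\ref{twist:LFI3}, the first coordinate of ${\sim}a$ is $a_2\land{\sim}(a_1\land a_2)$. Since $a_1\vee a_2=1$ in $\mathds{B}_1^0$, this equals $1$ exactly when $a_1=0$. Hence $({\sim}a)_1 = {\sim}a_1$, the Boolean complement of $a_1$. (The other coordinates do not matter here, but they can be evaluated on $T,t,b,f,F$ to confirm Figure~8: ${\sim}$ sends $T,t,b$ to non-designated values and $f,F$ to designated ones.)

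\emph{Item 1.} Let $h$ be any valuation over $\mathcal{M}_{\bf 2}$ with $h(\alpha)\in D_1^0$ and $h({\sim}\alpha)\in D_1^0$. Then $h(\alpha)_1=1$ and $h({\sim}\alpha)_1={\sim}h(\alpha)_1=1$, which is impossible. So no valuation designates both premises, and $\alpha,{\sim}\alpha \vDash_{LFI3}\beta$ holds vacuously for every $\beta$.

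\emph{Item 2.} By clause 2, the first coordinate of $h(\alpha)\,\dot\lor\, h({\sim}\alpha)$ is $h(\alpha)_1\vee{\sim}h(\alpha)_1=1$. Hence $h(\alpha\lor{\sim}\alpha)\in D_1^0$ for every valuation $h$, which gives $\vDash_{LFI3}\alpha\lor{\sim}\alpha$.

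The only real work is the computation $({\sim}a)_1={\sim}a_1$, which uses the constraint $a_1\vee a_2=1$ defining $\mathds{B}_1^0$; everything else is immediate. If one prefers to avoid snapshots, the same argument can be run by inspecting the five rows of the tables in Figures~7 and~8.
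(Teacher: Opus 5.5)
Your proof is correct and takes essentially the paper's route. The paper gives no written proof and relies on inspecting the five-row truth-table for ${\sim}$ in Figure~8, which sends $T,t,b$ to non-designated values and $f,F$ to designated ones. Your identity $({\sim}a)_1={\sim}a_1$, derived from the twist operations together with the constraint $a_1\vee a_2=1$, is a compact, uniform version of that same check.
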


The following simple result (see~\cite[Lemma~2.3]{coniglio2019}) will be useful:

\begin{lemma} \label{IncMat}
Let $L_1=\langle {\bf A}_1, D_1\rangle$  and $L_2=\langle {\bf A}_2, D_2\rangle$ be two matrix logics defined over a signature $\Theta$ such that
${\bf A}_2$ is a subalgebra of ${\bf A}_1$ and $D_2 = D_1 \cap A_2$. Then $L_1$ is a sublogic of $L_2$, that is: for every $\Gamma \cup \{\psi\}$, if $\Gamma \vdash_{L_1} \psi$ then $\Gamma \vdash_{L_2} \psi$.
\end{lemma}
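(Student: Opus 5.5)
The plan is to prove the contrapositive directly from the definition of matrix consequence. Suppose $\Gamma \nvdash_{L_2} \psi$. Then there is a homomorphism (valuation) $v: \mathcal{L}_\Theta \to {\bf A}_2$ with $v[\Gamma] \subseteq D_2$ and $v(\psi) \notin D_2$. We will transport this witness into ${\bf A}_1$ and show that it also refutes $\Gamma \vdash_{L_1} \psi$.

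First, let $\iota: A_2 \to A_1$ be the inclusion map. Since ${\bf A}_2$ is a subalgebra of ${\bf A}_1$, every operation of $\Theta$ on $A_2$ is the restriction of the corresponding operation on $A_1$, so $\iota$ is a homomorphism of $\Theta$-algebras. Hence $v' := \iota \circ v$ is a homomorphism from the formula algebra into ${\bf A}_1$, i.e.\ a valuation for $L_1$. A routine induction on formulas confirms this, using that the formula algebra is absolutely free over the atoms. For every formula $\beta$ we have $v'(\beta) = v(\beta) \in A_2$.

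Next we check designation. For $\gamma \in \Gamma$, $v'(\gamma) = v(\gamma) \in D_2 \subseteq D_1$. For $\psi$, $v'(\psi) = v(\psi) \in A_2 \setminus D_2$. Since $D_2 = D_1 \cap A_2$, an element of $A_2$ lies in $D_1$ exactly when it lies in $D_2$, so $v'(\psi) \notin D_1$. Thus $v'$ witnesses $\Gamma \nvdash_{L_1} \psi$, which establishes the contrapositive.

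There is no real obstacle here. The only point needing care is the equation $D_2 = D_1 \cap A_2$, which is used to conclude $v'(\psi) \notin D_1$; mere inclusion $D_2 \subseteq D_1$ would not suffice. We also need ${\bf A}_2$ to be a genuine (deterministic) subalgebra, so that composing with the inclusion preserves the homomorphism property.
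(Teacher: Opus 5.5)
Your argument is correct and is the standard one. The paper gives no proof of its own here; it cites \cite[Lemma~2.3]{coniglio2019}. It does, however, use exactly your idea in the $(\Rightarrow)$ direction of Theorem~\ref{geralzao}: an $L_2$-valuation composed with the inclusion is an $L_1$-valuation, and $D_2 = D_1 \cap A_2$ makes designation agree on $A_2$. You are also right that both $D_2 \subseteq D_1$ (for the premises) and $D_1 \cap A_2 \subseteq D_2$ (for the conclusion) are needed.
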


\begin{theorem}
    LFI3 is a sublogic of classical logic. 
\end{theorem}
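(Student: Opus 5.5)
The plan is to apply Lemma~\ref{IncMat} with $L_1=\mathcal{M}_{\bf 2}$, the matrix for LFI3 from Definition~\ref{twist:LFI3}, and $L_2$ the two-element classical matrix. We take the subset $A_2=\{T,F\}=\{(1,0,0),(0,1,0)\}\subseteq\mathds{B}_1^0$. Then $D_2=D_1^0\cap A_2=\{T\}$, which is the classical designated set. Two things have to be checked. First, $A_2$ must be closed under every operation of $T_{\bf 2}$, so that it is a subalgebra. Second, the induced algebra must be the two-element Boolean algebra, with $\circ$ interpreted as constantly true; this matches the usual presentation of classical logic over $\Sigma$, where $\circ\alpha$ is a tautology.

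The closure check is a finite computation using the twist formulas, and it is helpful to note that every element of $A_2$ has $a_3=0$ and $a_2={\sim}a_1$.

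\textbf{Negation.} $\dot\neg(1,0,0)=(0,1,0)$ and $\dot\neg(0,1,0)=(1,0,0)$.

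\textbf{Consistency.} Since $a_1\land a_2=0$ and $a_3=0$, we get $\dot\circ(a)=(1,0,0)=T$ in both cases.

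\textbf{Conjunction.} With $a_3=b_3=0$, the third coordinate $({\sim}a_2\land b_3)\lor(a_3\land{\sim}b_2)\lor(a_3\land b_3)$ is $0$. The second coordinate $a_2\lor b_2={\sim}a_1\lor{\sim}b_1={\sim}(a_1\land b_1)$. So the result is $(a_1\land b_1,{\sim}(a_1\land b_1),0)\in A_2$.

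\textbf{Disjunction.} This is dual: the third coordinate is $0$ because every disjunct contains $a_3$ or $b_3$, and the second coordinate is $a_2\land b_2={\sim}(a_1\lor b_1)$.

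\textbf{Implication.} Each disjunct of the third coordinate contains $a_3$ or $b_3$, so it is $0$. The second coordinate is $b_2\land({\sim}a_2\lor a_3)={\sim}b_1\land a_1={\sim}(a_1\to b_1)$. So the result is again in $A_2$.

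In every case the first coordinate is the Boolean operation applied to the first coordinates. Hence the bijection $T\mapsto 1$, $F\mapsto 0$ is an isomorphism from the subalgebra onto the classical two-valued algebra, with $\circ$ sent to the constant $1$.

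Lemma~\ref{IncMat} then gives that whenever $\Gamma\vDash_{LFI3}\psi$, also $\Gamma\vDash_{CPL}\psi$, where CPL is read over $\Sigma$ with $\circ$ as constant truth. I do not expect a real obstacle here. The only points needing care are the implication's second and third coordinates, and stating clearly which presentation of classical logic over $\Sigma$ is meant (classical logic extended by the valid $\circ\alpha$, equivalently a $\circ$-free translation in which $\circ\alpha\mapsto\top$).
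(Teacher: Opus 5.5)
Your proposal is correct and follows the paper's own proof: the paper likewise observes that $\{T,F\}$ is a subalgebra of $T_{\bf 2}$ in which $T$ is the only designated value, and then applies Lemma~\ref{IncMat}. Your coordinate-by-coordinate closure checks, and your remark that $\circ$ is read as constantly true on $\{T,F\}$, spell out the verification that the paper leaves implicit.
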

\begin{proof}
    The two-element Boolean algebra $\{T, F\}$ is a subalgebra of the algebra $T_{\bf 2}$ of LFI3, and $T$ is the only designated value of LFI3 in $\{T, F\}$. The result follows by Lemma~\ref{IncMat}.
\end{proof}

\begin{theorem}\label{sublogiclfi1}
    LFI3 a sublogic of LFI1.
\end{theorem}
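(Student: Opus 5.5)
The plan is to use Lemma~\ref{IncMat} exactly as in the proof that LFI3 is a sublogic of classical logic. The three-element matrix of LFI1 should sit inside $\mathcal{M}_{\bf 2}$ as the subalgebra $\{T,b,F\}=\{(1,0,0),(1,1,1),(0,1,0)\}$, with designated set $\{T,b\}=D_1^0\cap\{T,b,F\}$. Recall that LFI1 is given by the matrix with values $1,\frac12,0$, designated $\{1,\frac12\}$. In it, $\land$ and $\lor$ are min and max in the order $0<\frac12<1$. Negation swaps $1$ and $0$ and fixes $\frac12$. Consistency sends $1,0$ to $1$ and $\frac12$ to $0$. Implication satisfies $a\to b=b$ if $a\neq 0$ and $a\to b = 1$ if $a=0$.

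The first step is to check closure of $\{T,b,F\}$ under all operations of $T_{\bf 2}$ from Definition~\ref{twist:LFI3}. For $\dot\land$ and $\dot\lor$ this is immediate, since they are min and max in the linear order $F<f<b<t<T$. For $\dot\neg$ we get $\dot\neg(1,0,0)=(0,1,0)$, $\dot\neg(0,1,0)=(1,0,0)$ and $\dot\neg(1,1,1)=(1,1,1)$. For $\dot\circ$ we get $\dot\circ T=({\sim}0,0,0)=T$, $\dot\circ F=T$ and $\dot\circ b=(0,1,0)=F$. For $\dot\to$, the snapshot formula (or the table in Figure~7, together with $a\to b={\sim}a\vee b$) gives the following. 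Strong negation satisfies ${\sim}T=F$, ${\sim}F=T$ and ${\sim}b=\dot\neg b\land\dot\circ b=b\land F=F$. Hence $x\to y = y$ whenever $x\in\{T,b\}$, and $F\to y=T$.

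The second step is to check that the bijection $T\mapsto 1$, $b\mapsto \frac12$, $F\mapsto 0$ is an isomorphism onto the LFI1 matrix. The computations of the first step already match LFI1's tables connective by connective, and designated values correspond to designated values. So LFI1 is, up to isomorphism, the matrix $\langle\{T,b,F\},\{T,b\}\rangle$, a subalgebra of $T_{\bf 2}$ with $D_2=D_1^0\cap\{T,b,F\}$. Lemma~\ref{IncMat} then gives that every consequence of LFI3 is a consequence of LFI1.

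The only delicate point is the implication, where the twist formula has a lengthy third coordinate. To handle it, I would either evaluate the formula directly on the nine pairs from $\{T,b,F\}$ or rely on the identity $a\dot\to b={\sim}a\,\dot\vee\, b$ from the construction. The second route reduces the check to computing ${\sim}$ on three values, as above. I would also note that LFI1's $\circ$ is often primitive or defined as $\neg(\alpha\land\neg\alpha)$; on $\{1,\frac12,0\}$ these agree, so the matching holds either way.
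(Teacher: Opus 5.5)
Your proposal is correct and is the paper's own argument: identify $1,\frac12,0$ with $T,b,F$, check that $\{T,b,F\}$ is a subalgebra of $T_{\bf 2}$ with $D_1^0\cap\{T,b,F\}=\{T,b\}$, and apply Lemma~\ref{IncMat}. Your explicit closure checks fill in what the paper leaves to the reader; in particular $x\,\dot\to\, y=y$ for $x\in\{T,b\}$ and $F\,\dot\to\, y=T$ can also be read off directly from the twist formula.

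Drop the closing aside about $\neg(\alpha\land\neg\alpha)$, because it is false. In LFI1 that formula takes the value $\frac12$ when $\alpha$ does, and it is in fact an LFI1-theorem, so it does not agree with $\circ$. The correct alternative descriptions are $\neg\bullet\alpha$ or ${\sim}(\alpha\land\neg\alpha)$ with the strong negation ${\sim}$. Your main argument uses the right table for $\circ$, so nothing depends on the aside.
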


\begin{proof} By identifying $1$, $1/2$ and $0$ with $T$, $b$ and $F$, respectively, we see that  the algebra $A_{LFI1} = \langle\{1, 1/2, 0\}, \land, \to, \neg, \circ\rangle$ of LFI1 is a subalgebra of the algebra $T_{\bf 2}$ of LFI3. Moreover, the designated values of LFI1 are the ones of LFI3 which belong to the domain of $A_{LFI1}$. The result follows by Lemma~\ref{IncMat}.
\end{proof}

\begin{definition}{(Definition 2.1 \cite[p. 126]{coniglio2019})}
    Let $L_1$ and $L_2$ be two distinct standard propositional logics defined over the signature $\Sigma$ such that $L_1$ is a proper sublogic of $L_2$. $L_1$ is maximal w.r.t. $L_2$ if, for every formula $\alpha$ over $\Sigma$: if $\vdash_{L_2} \alpha$ but $\nvdash_{L_1} \alpha$ then $L_1^+$ obtained from $L_1$ by adding $\alpha$ as a (schema) theorem coincides with $L_2$.  
\end{definition}

\begin{lemma}{(Theorem 2.4 \cite[p. 128]{coniglio2019})}\label{maximalitylemma}
    Let $L_1 = \langle\bm{A_1}, D_1\rangle$ and $L_2 = \langle\bm{A_2}, D_2\rangle$ be two distinct finite matrix logics over a same signature $\Sigma$ s.t. $\bm{A_2}$ is a subalgebra of $\bm{A_1}$ and $D_2 = D_1 \cap A_2$. Assume the following:
    \begin{enumerate}
        \item $A_1 = \{0, 1, a_1, ..., a_k, a_{k+1}, ..., a_n\}$ and $A_2 = \{0, 1, a_1, ..., a_k\}$ are finite such that $0 \notin D_1$, $1 \in D_2$ and $\{0,1\}$ is a subalgebra of $\bm{A_2}$.
        \item There are formulas $\top(p)$ and $\bot(p)$ in $\mathcal{L}_\Sigma$ depending at most on one variable $p$ such that $\vartheta(\top(p))=1$ and $\vartheta(\bot(p)) = 0$, for every valuation $\vartheta$  for $L_1$.
        \item For every $k+1 \leq i \leq n$ and $1 \leq j \leq n$ (with $i \neq j$), there exists a formula $\alpha_j^i(p)$ in $\mathcal{L}_\Sigma$ depending at most on one variable $p$ such that, for every valuation $\vartheta$ for $L_1$, $\vartheta(\alpha_j^i(p))=a_j$ if $\vartheta(p) = a_i$.
    \end{enumerate}
    Then, $L_1$ is maximal w.r.t $L_2$.
\end{lemma}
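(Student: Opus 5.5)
The plan is to prove both inclusions $L_1^+ \subseteq L_2$ and $L_2 \subseteq L_1^+$. Here $\alpha$ is a formula with $\vdash_{L_2}\alpha$ but $\nvdash_{L_1}\alpha$, and $L_1^+$ is $L_1$ with $\alpha$ added as a schema.

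\textbf{First inclusion.} The inclusion $L_1^+ \subseteq L_2$ is easy. By Lemma~\ref{IncMat}, $L_1$ is a sublogic of $L_2$. Moreover, $\alpha$ is valid in the matrix $L_2$, and matrix logics are closed under substitution, so every instance of the schema $\alpha$ is also $L_2$-valid. Hence every derivation in $L_1^+$ is sound in $L_2$.

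\textbf{Reduction to one variable.} For the converse, I first extract a one-variable formula that "kills" a non-$A_2$ value. Since $\nvdash_{L_1}\alpha$, there is a valuation $v$ over $\bm{A_1}$ with $v(\alpha)\notin D_1$. Since $\vdash_{L_2}\alpha$ and $D_2=D_1\cap A_2$, the valuation $v$ cannot take all its values on the variables of $\alpha$ inside $A_2$. So some variable gets a value $a_i$ with $k+1\le i\le n$; fix such an $i$. Now define a substitution $\sigma$ on the variables $q$ of $\alpha$, in terms of a single variable $p$:
\begin{itemize}
\item $\sigma(q)=p$ if $v(q)=a_i$;
\item $\sigma(q)=\top(p)$ if $v(q)=1$;
\item $\sigma(q)=\bot(p)$ if $v(q)=0$;
\item $\sigma(q)=\alpha^i_j(p)$ if $v(q)=a_j$ with $j\neq i$.
\end{itemize}
This uses hypotheses 2 and 3. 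Put $\alpha^*(p)=\sigma(\alpha)$. Then $\alpha^*(p)$ is a theorem of $L_1^+$ (a substitution instance of the new schema). Moreover, for every valuation $w$ over $\bm{A_1}$ with $w(p)=a_i$, the formula $\alpha^*(p)$ receives exactly the values $v$ gave to the variables of $\alpha$, so $w(\alpha^*(p))=v(\alpha)\notin D_1$.

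\textbf{Excluding every non-$A_2$ value.} Next I exclude all the extra values, not only $a_i$. For each $i'$ with $k+1\le i'\le n$, consider the theorem $\alpha^*(\alpha^{i'}_i(p))$ of $L_1^+$ (taking just $\alpha^*(p)$ when $i'=i$). If $w(p)=a_{i'}$, then hypothesis 3 gives $w(\alpha^{i'}_i(p))=a_i$, so this theorem is undesignated. Let $\mathrm{Inst}$ be the set of all substitution instances of all these formulas, over every formula $\beta$ in place of $p$.

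\textbf{Second inclusion.} Now suppose $\Gamma\vdash_{L_2}\psi$, and take any $\bm{A_1}$-valuation $w$ that designates $\Gamma\cup\mathrm{Inst}$. By the previous step, $w(\beta)\notin\{a_{k+1},\dots,a_n\}$ for every formula $\beta$. So $w$ is a valuation into the subalgebra $\bm{A_2}$, and since $D_2=D_1\cap A_2$ it designates $\Gamma$ in $L_2$; therefore $w(\psi)\in D_2\subseteq D_1$. This shows $\Gamma\cup\mathrm{Inst}\vdash_{L_1}\psi$, using that $L_1$ is the consequence of its matrix. Hence $\Gamma\vdash_{L_1^+}\psi$.

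The main obstacle is the construction of $\alpha^*$. One has to make sure the counter-valuation really visits some $a_i$ outside $A_2$, and that hypothesis 3 supplies one-variable formulas reaching every value used by $v$, including those in $A_2$ (which is why $j$ ranges over all indices, and why $0$ and $1$ are handled by $\top$ and $\bot$). After that, the second inclusion is a routine semantic argument.
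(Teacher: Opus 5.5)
Your proof is correct and is essentially the standard argument behind Theorem~2.4 of \cite{coniglio2019}, which the paper quotes without reproving. That argument refutes $\alpha$ by an $\bm{A_1}$-valuation that must hit some $a_i\notin A_2$, collapses $\alpha$ to a one-variable formula via $\top(p)$, $\bot(p)$ and the $\alpha^i_j(p)$, precomposes with the $\alpha^{i'}_i(p)$ to exclude every extra value, and observes that any $\bm{A_1}$-valuation designating all these instances is an $\bm{A_2}$-valuation. The only thing you leave implicit is that $L_1$ is a proper sublogic of $L_2$, as the definition of maximality presupposes; this follows from Lemma~\ref{IncMat} together with the assumption that the two logics are distinct.
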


\begin{theorem}\label{maximalitylfi3}
    LFI3 is maximal w.r.t. LFI1.
\end{theorem}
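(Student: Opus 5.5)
The plan is to obtain Theorem~\ref{maximalitylfi3} as a direct application of Lemma~\ref{maximalitylemma}. Take $L_1 = \mathcal{M}_{\bf 2}$, the matrix of LFI3 with domain $\{F,f,b,t,T\}$ and designated set $D_1^0=\{T,t,b\}$. Take $L_2$ to be LFI1 with domain $\{F,b,T\}$, where $1,1/2,0$ are identified with $T,b,F$ as in the proof of Theorem~\ref{sublogiclfi1}. That proof already shows that the algebra of LFI1 is a subalgebra of $T_{\bf 2}$ and that its designated values are $D_1^0\cap\{F,b,T\}$. In the notation of the lemma I set $0=F$, $1=T$, $a_1=b$ (so $k=1$), and take $a_2=t$, $a_3=f$ as the two new values.

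I will first show that the two logics are distinct, which the lemma requires. I will use the axiom (ci) $\neg{\circ}p\to(p\land\neg p)$, which is valid in LFI1. With the operations of Definition~\ref{twist:LFI3}, the assignment $p\mapsto t=(1,0,1)$ gives ${\circ}t=(1,1,1)=b$, hence $\neg{\circ}t=b\in D_1^0$. On the other hand $p\land\neg p=t\,\dot\land\, f$, which is the minimum $f$ in the linear order, and $f$ is undesignated. Since ${\sim}b$ is undesignated, the implication $b\,\dot\rightarrow\, f={\sim}b\,\dot\vee\, f$ is undesignated, so (ci) fails in LFI3.

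Next I check hypothesis 1. Here $F\notin D_1^0$ and $T\in D_2$. That $\{F,T\}$ is a subalgebra was already used in the proof that LFI3 is a sublogic of classical logic; indeed ${\circ}T={\circ}F=T$, $\neg$ swaps $T$ and $F$, and the binary operations act Booleanly on $\{T,F\}$.

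For hypothesis 2 I compute ${\circ}$ on the five values directly from item 5 of Definition~\ref{twist:LFI3}:
\[
{\circ}T=T,\quad {\circ}t=b,\quad {\circ}b=F,\quad {\circ}f=b,\quad {\circ}F=T .
\]
Iterating gives ${\circ}{\circ}x\in\{T,F\}$ for every $x$, and therefore ${\circ}{\circ}{\circ}x=T$ for all $x$. So I take $\top(p):={\circ}{\circ}{\circ}p$ and $\bot(p):=\neg{\circ}{\circ}{\circ}p$, which take the constant values $T$ and $F$ respectively.

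For hypothesis 3 I must handle $i\in\{t,f\}$ and every $j\in\{b,t,f\}$ with $j\neq i$. When $\vartheta(p)=t$, the formula ${\circ}p$ takes value $b$ and $\neg p$ takes value $f$. When $\vartheta(p)=f$, the formula ${\circ}p$ takes value $b$ and $\neg p$ takes value $t$. These four formulas cover all the required pairs $(i,j)$, so Lemma~\ref{maximalitylemma} yields the theorem.

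The only real work is the table computations, and the main point of care is finding a constant formula $\top(p)$. The obvious candidate $p\vee{\sim}p$ does not work, because it takes the value $t$ at $t$. The iterated consistency operator solves this, reflecting the axiom (cc$^1$).
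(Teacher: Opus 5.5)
Your proposal is correct and follows the paper's proof. Both apply Lemma~\ref{maximalitylemma}, with the LFI1 algebra embedded via $1,1/2,0\mapsto T,b,F$ and with $\top(p)={\circ}{\circ}{\circ}p$, $\bot(p)=\neg{\circ}{\circ}{\circ}p$; your explicit witnesses ${\circ}p$ and $\neg p$ for hypothesis 3 replace the paper's appeal to a truth table, and your check of distinctness via (ci) at $p\mapsto t$ fills in a hypothesis the paper leaves implicit.
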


\begin{proof}
From Theorem \ref{sublogiclfi1}, condition (1) from Lemma \ref{maximalitylemma} is satisfied straightforwardly. Defining $\top(p):= \circ^3p$ and $\bot(p) := \neg\top(p)$, condition (2) is also satisfied. For item (3), we need to prove that for the values $\{t,f\}$ of LFI3, there are truth functions that output all truth values in $\{T, t, b, f, F\}$. The truth tables in Figure \ref{fig:tabela-lfi3} show that the existence of formulas $\alpha_j^t(p)$ and $\alpha_j^f(p)$ such that:  $\vartheta(\alpha_j^t(p))=j$ and $\vartheta(\alpha_j^f(p)) = j$, for all $j \in \{T, t, b, f, F\}$, if $\vartheta(p) = t$ or $f$, respectively. Therefore, by Lemma \ref{maximalitylemma}, LFI3 is maximal w.r.t. LFI1.
\end{proof}

\begin{figure}[htbp]
    \centering
    \includegraphics[width=.4\linewidth]{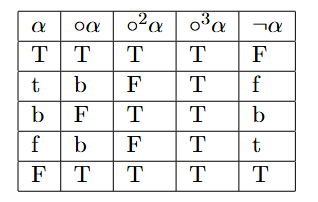}
    \caption{}
    \label{fig:tabela-lfi3}
\end{figure}

\begin{corollary}\label{corollary}
    LFI1 = LFI3 + (cc) $\circ\circ$$\alpha$
\end{corollary}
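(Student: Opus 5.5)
The corollary follows from the maximality result, Theorem~\ref{maximalitylfi3}, which says that adding to LFI3 any formula valid in LFI1 but not in LFI3 yields exactly LFI1. So I only need to check that (cc) $\circ\circ\alpha$ is such a formula: valid in LFI1 and not valid in LFI3. Maximality then gives LFI3 + (cc) $=$ LFI1 directly. I read ``adding $\alpha$ as a (schema) theorem'' as adding all instances of $\circ\circ p$, which is how (cc) is stated.

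\emph{Validity in LFI1.} Under the identification used in the proof of Theorem~\ref{sublogiclfi1}, $1,1/2,0$ correspond to $T,b,F$, and $\{T,b,F\}$ is a subalgebra of $T_{\bf 2}$. I compute $\dot\circ$ on these values using Definition~\ref{twist:LFI3}:
\begin{itemize}
\item $\dot\circ(1,0,0)=(1,0,0)$, that is, $\circ T=T$;
\item $\dot\circ(0,1,0)=(1,0,0)=T$;
\item $\dot\circ(1,1,1)=(0,1,0)=F$.
\end{itemize}
Hence $\circ$ only takes values in $\{T,F\}$ on this subalgebra, and $\circ T=T$, $\circ F=T$. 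So $\circ\circ x=T$ for every $x\in\{T,b,F\}$, and $\circ\circ p$ is valid in LFI1. (This is also the well-known fact that LFI1 validates (cc).)

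\emph{Non-validity in LFI3.} Take $x=t=(1,0,1)$. Then $\dot\circ t=({\sim}(1\land 0),1,1\land 1)=(1,1,1)=b$. Applying $\circ$ again gives $\dot\circ b=F$, which is not designated. So a valuation with $p\mapsto t$ refutes $\circ\circ p$, and $\nvDash_{LFI3}\circ\circ p$.

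\emph{Conclusion.} With both facts in hand, Theorem~\ref{maximalitylfi3} (via Lemma~\ref{maximalitylemma}, as already established) gives that LFI3 extended by the schema $\circ\circ\alpha$ coincides with LFI1. The only delicate point is making sure the maximality notion applies to the schema (cc) rather than a single formula instance. Since the definition explicitly adds $\alpha$ as a schema theorem, this matches. No other obstacle is expected; the rest is direct computation with the twist operations.
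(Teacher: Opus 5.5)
Your proposal is correct and matches the paper's proof: the paper also gets the corollary from the maximality of LFI3 with respect to LFI1 (Theorem~\ref{maximalitylfi3}), together with $\vdash_{LFI1}\circ^2\alpha$ and $\nvdash_{LFI3}\circ^2\alpha$. Your twist-structure computations supply the two facts the paper only asserts: $\circ\circ$ is constantly $T$ on the subalgebra $\{T,b,F\}$, and $\circ\circ t=\circ b=F$ is undesignated.
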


\begin{proof}
    This follows straightforwardly from Theorem \ref{maximalitylfi3} and the fact that $\vdash_{LFI1} \circ^2\alpha$ and $\nvdash_{LFI3} \circ^2\alpha$. 
\end{proof}

Corollary \ref{corollary} presents a new axiomatization to LFI1. 

Let $\alpha \leftrightarrow \beta$ be defined as $(\alpha \rightarrow \beta) \land (\beta \rightarrow \alpha)$. It is possible to define a congruence relation in LFI3.

\begin{definition}
    $\alpha \equiv \beta$ := $(\alpha \leftrightarrow \beta) \land (\neg\alpha \leftrightarrow \neg\beta) \land (\neg$$\circ\alpha \leftrightarrow \neg$$\circ\beta$).

\end{definition} 

\begin{figure}[h]
    \centering
    \includegraphics[width=.4\textwidth]{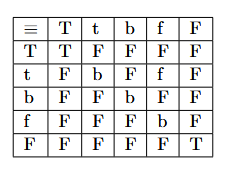}
    \caption{}
\end{figure}

\begin{theorem}
    $\equiv$ is a congruence relation.
\end{theorem}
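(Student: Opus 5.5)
The plan is to show that, in the matrix $\mathcal{M}_{\bf 2}$, the formula $\alpha \equiv \beta$ takes a designated value exactly when $\alpha$ and $\beta$ take the \emph{same} snapshot. Once this is in place, every congruence property reduces to the fact that the operations of $T_{\bf 2}$ are deterministic functions.

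\textbf{Step 1: the key lemma.} Fix a valuation $\vartheta$ over $\mathcal{M}_{\bf 2}$, and write $\vartheta(\alpha)=(a_1,a_2,a_3)$ and $\vartheta(\beta)=(b_1,b_2,b_3)$. By Definition~\ref{twist:LFI3}, the first coordinate of $\dot\land$ is Boolean meet and the first coordinate of $\dot\to$ is Boolean implication. Since $D_1^0$ consists of the snapshots with first coordinate $1$, we get that $\vartheta(\gamma \leftrightarrow \delta)\in D_1^0$ iff $\vartheta(\gamma)_1=\vartheta(\delta)_1$. We then read off the first coordinates of the three conjuncts:
\begin{itemize}
\item $\vartheta(\alpha)_1=a_1$, from the snapshot itself;
\item $\vartheta(\neg\alpha)_1=a_2$, by clause 4;
\item $\vartheta(\neg{\circ}\alpha)_1=(\dot\circ\,\vartheta(\alpha))_2=a_3$, by clauses 4 and 5.
\end{itemize}
Hence $\vartheta(\alpha\equiv\beta)\in D_1^0$ iff $a_1=b_1$, $a_2=b_2$ and $a_3=b_3$, that is, iff $\vartheta(\alpha)=\vartheta(\beta)$.

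\textbf{Step 2: equivalence relation.} From the lemma, reflexivity ($\vDash \alpha\equiv\alpha$) is immediate. Symmetry ($\alpha\equiv\beta \vDash \beta\equiv\alpha$) and transitivity ($\alpha\equiv\beta,\ \beta\equiv\gamma \vDash \alpha\equiv\gamma$) follow because equality of snapshots is an equivalence relation.

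\textbf{Step 3: compatibility with the connectives.} Suppose $\vartheta(\alpha\equiv\beta)\in D_1^0$, so $\vartheta(\alpha)=\vartheta(\beta)$. Because $T_{\bf 2}$ is an algebra rather than a multialgebra, we have $\vartheta(\neg\alpha)=\dot\neg\,\vartheta(\alpha)=\dot\neg\,\vartheta(\beta)=\vartheta(\neg\beta)$, and likewise for $\circ$. So $\alpha\equiv\beta\vDash \neg\alpha\equiv\neg\beta$ and $\alpha\equiv\beta\vDash {\circ}\alpha\equiv{\circ}\beta$. In the same way we obtain $\alpha\equiv\beta,\ \gamma\equiv\delta \vDash (\alpha\#\gamma)\equiv(\beta\#\delta)$ for each $\#\in\{\land,\vee,\to\}$. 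A one-line remark will record that if $\vartheta(\alpha)=\vartheta(\beta)$ then $\vartheta(\alpha)\in D_1^0$ iff $\vartheta(\beta)\in D_1^0$, so $\alpha\equiv\beta,\alpha\vDash\beta$ as well. It also follows, by induction on contexts, that $\equiv$ is preserved under substitution into any formula.

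\textbf{Main obstacle.} The only delicate point is Step 1, specifically matching the third coordinate. One must notice that the conjunct $\neg{\circ}\alpha\leftrightarrow\neg{\circ}\beta$ captures exactly the third snapshot coordinate, via $(\dot\circ x)_2=x_3$. This is where it matters that the snapshot records $\neg{\circ}\alpha$ rather than ${\circ}\alpha$: the latter is already determined by $a_1,a_2$ and would not separate, say, $T$ from $t$. Everything after Step 1 is routine. If a syntactic version is wanted instead of the semantic one, I would transfer the argument using the completeness of LFI3 with respect to $\mathcal{M}_{\bf 2}$, assuming the axiomatization the paper gives.
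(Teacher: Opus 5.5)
Your proof is correct and follows the paper's argument. The paper's proof rests on exactly your Step~1 fact, that $h(\alpha \equiv \beta) \in D_1^0$ iff $h(\alpha) = h(\beta)$ for every valuation $h$ over $T_{\bf 2}$, and then treats the congruence properties as immediate, as in your Steps~2 and~3; the only difference is that the paper reads this fact off the truth table in Figure~10, whereas you derive it coordinatewise from the twist operations (via $(\dot\circ x)_2 = x_3$).
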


\begin{proof}
    It is straightforward to check that $\equiv$ is a congruence given the fact that $h(\alpha \equiv \beta) \in D_1^0$ iff $h(\alpha) = h(\beta)$, for every valuation $h$ over $T_{\bf 2}$ (see Figure~10).
\end{proof}

Using the technique for obtaining a Hilbert calculus from a twist structure semantics presented in \cite[p. 976-77]{veronicakarnaugh}, it is possible to determine the axioms for LFI3 based on the twist structure given in Definition~\ref{twist:LFI3}.

\begin{definition} [Hilbert calculus for LFI3]
  The logic LFI3, presented as a Hilbert calculus, is defined as follows: LFI3 := L$_1^2$ plus the following axioms:
\[
\begin{array}{ll}
  A1  &  \neg(\alpha \land \beta) \leftrightarrow (\neg\alpha \lor \neg\beta)\\
   A2  & \neg(\alpha \lor \beta) \leftrightarrow (\neg\alpha \land \neg\beta)\\
    A3   &  \neg(\alpha \rightarrow \beta) \leftrightarrow (\neg\beta \land (\sim\neg\alpha \lor \neg{\circ}\alpha))\\
   A4  & \neg{\circ}(\alpha \land \beta) \leftrightarrow ((({\sim}\neg\alpha \land \neg{\circ}\beta) \lor (\neg{\circ}\alpha \land {\sim}\neg\beta)) \lor (\neg{\circ}\alpha \land \neg{\circ}\beta))\\
     A5  &  \neg{\circ}(\alpha \lor \beta) \leftrightarrow ((({\sim}\alpha \land \neg{\circ}\beta) \lor (\neg{\circ}\alpha \land {\sim}\beta)) \lor (\neg{\circ}\alpha \land \neg{\circ}\beta))\\
    A6 & \neg{\circ}(\alpha \rightarrow \beta) \leftrightarrow\\
    & \hspace*{2mm}   (({\sim}\neg\alpha \land \neg{\circ}\beta) \lor ({\sim}\neg\alpha \land \neg{\circ}\alpha \land {\sim}\beta) \lor (\neg{\circ}\alpha \land \neg{\circ}\beta) \lor ({\sim}\alpha \land \neg{\circ}\alpha \land {\sim}\beta))\\
\end{array}
\] 
If there is a derivation in LFI3 of $\alpha$ from $\Gamma$ we will write $\Gamma \vdash_{LFI3} \alpha$.
\end{definition}

\begin{theorem} [Soundness and completeness of the Hilbert calculus of LFI3] \ \\
 $\Gamma \vdash_{LFI3} \alpha$ \ if and only if \  $\Gamma \vDash_{LFI3} \alpha$.
\end{theorem}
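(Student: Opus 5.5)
Soundness will be checked directly against the matrix $\mathcal{M}_{\bf 2}$; completeness will go through bivaluations and a representation theorem, as was done for L$_1^0$.

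\emph{Soundness.} Since $T_{\bf 2}$ is a (deterministic) sub-multialgebra of the swap structure $\mathcal{B}_1^2$, with the same designated set $D_1^0$, every valuation over $\mathcal{M}_{\bf 2}$ is also a valuation over $\mathcal{M}_1^2$. By soundness of L$_1^2$, all axioms of L$_1^2$ are therefore validated and MP preserves designation. What remains is to check A1--A6. For each of them, the truth-value of $\gamma\leftrightarrow\delta$ is designated iff the first coordinates of the snapshots of $\gamma$ and $\delta$ agree. So it suffices to compare coordinates.
\begin{itemize}
\item A1 and A2: the first coordinate of $h(\neg(\alpha\land\beta))$ is the second coordinate of $h(\alpha)\,\dot\land\,h(\beta)$, namely $a_2\lor b_2$, which is the first coordinate of $h(\neg\alpha\lor\neg\beta)$. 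A2 is the same computation with $\dot\lor$.
\item A3: we use that the first coordinate of $h({\sim}\gamma)$ is ${\sim}g_1$, and that ${\sim}\neg\alpha$ has first coordinate ${\sim}a_2$.
\item A4--A6: these just read off the third coordinate of items 1--3 of Definition~\ref{twist:LFI3}, with ${\sim}\neg\alpha\mapsto{\sim}a_2$, ${\sim}\alpha\mapsto{\sim}a_1$ and $\neg{\circ}\alpha\mapsto a_3$.
\end{itemize}

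\emph{Completeness.}
\begin{enumerate}
\item Define an LFI3-valuation as an L$_1^2$-valuation $\vartheta$ that additionally satisfies bivalent clauses mirroring A1--A6. For example, $\vartheta(\neg(\alpha\land\beta))=\vartheta(\neg\alpha)\lor\vartheta(\neg\beta)$ and $\vartheta(\neg{\circ}(\alpha\land\beta))=({\sim}\vartheta(\neg\alpha)\land\vartheta(\neg{\circ}\beta))\lor\cdots$. Here we use $\vartheta({\sim}\gamma)={\sim}\vartheta(\gamma)$, which was already noted for L$_1^0$.
\item Completeness w.r.t.\ these valuations follows exactly as in Theorem~\ref{valuation2} and Theorem~\ref{completeness-val}. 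A maximal non-trivial set $\Delta$ w.r.t.\ $\alpha$ contains a biconditional's sides together, by closure under MP. The classical clauses (vAnd), (vOr) and (vImp) then turn each axiom into the required Boolean identity.
\item Prove a representation theorem as in Theorem~\ref{representation theorem}: set $h(\beta)=(\vartheta(\beta),\vartheta(\neg\beta),\vartheta(\neg{\circ}\beta))$. We must verify that $h(\beta)\in\mathds{B}_1^0$ and that $h$ is now a homomorphism into $T_{\bf 2}$, i.e.\ $h(\beta\#\gamma)=h(\beta)\,\dot\#\,h(\gamma)$ exactly.
\begin{itemize}
\item For $\neg$, this uses (vdn) together with (ip$^2$), which gives $\vartheta(\neg{\circ}\neg\beta)=\vartheta(\neg{\circ}\beta)$.
\item For $\circ$, it is the computation already done for L$_1^0$.
\item For the binary connectives, the second and third coordinates are forced by the new clauses.
\end{itemize}
\item Conclude as in Theorem~\ref{completeness}.
\end{enumerate}

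The main obstacle is the implication case. A3 must yield exactly the second coordinate $b_2\land({\sim}a_2\lor a_3)$. The four-disjunct formula of A6 must match item 3's $G$-function on all admissible snapshots, using the constraints $x_1\lor x_2=1$ and $x_3\lor{\sim}(x_1\land x_2)=1$. I would first simplify both sides modulo these constraints, then verify case by case over the five values for each argument, i.e.\ 25 cases, matching Figure~7.
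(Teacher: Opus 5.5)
Your proposal is correct and is essentially the argument the paper intends. The paper states this theorem without an explicit proof, deferring to the twist-structure-to-Hilbert-calculus technique of \cite{veronicakarnaugh}, which is exactly your pattern: check A1--A6 coordinatewise for soundness, and for completeness lift bivaluations to homomorphisms into $T_{\bf 2}$ as in Theorem~\ref{representation theorem}. The only overstatement is your ``main obstacle'': the bivalent clauses from A3 and A6 coincide literally with the second and third coordinates of item~3 of Definition~\ref{twist:LFI3}, so the homomorphism condition there is immediate and no 25-case check is needed (that check concerns only the agreement of $\mathcal{M}_{\bf 2}$ with Figure~7, which the paper takes as given).
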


\noindent\textbf{Remark}
As we have seen along this paper, the defined connective $\bullet\alpha:= \neg{\circ}\alpha$ played an important role as a component of the snapshots of the swap and twist structures considered here. In the context of LFIs, such a connective is known as an {\em inconsistency operator}. Next result analyzes the connection  of inconsistency with contradiction in LFI3.

\begin{theorem} The following properties hold in LFI3:
\[
\begin{aligned}
\begin{array}{ll}
1. & \alpha \land \neg\alpha \vdash \circ^2\alpha \\
2. & \circ^2\alpha \nvdash \alpha \land \neg\alpha\\
3. & \circ\alpha \nvdash \circ^2\alpha
\end{array}
\qquad
\begin{array}{ll}
4. & \circ^2\alpha \nvdash \circ\alpha\\
5. & \bullet\alpha \nvdash \alpha \land \neg\alpha\\
6. & \alpha \land \neg\alpha \vdash \bullet\alpha
\end{array}
\end{aligned}
\]
\end{theorem}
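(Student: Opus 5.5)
By the soundness and completeness theorem for the Hilbert calculus of LFI3 stated just above, $\Gamma \vdash_{LFI3} \alpha$ iff $\Gamma \vDash_{LFI3} \alpha$. I will therefore check all six items semantically in the deterministic matrix $\mathcal{M}_{\bf 2}$ of Definition~\ref{twist:LFI3}. Each item has a single premise and involves only one formula $\alpha$, so it is enough to run $h(\alpha)$ through the five values $T=(1,0,0)$, $t=(1,0,1)$, $b=(1,1,1)$, $f=(0,1,1)$, $F=(0,1,0)$. Because the matrix is deterministic, $h(\circ\alpha)$, $h(\neg\alpha)$ and the other compound values are then fixed.

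First I compute the unary operators from the twist clauses. From $\dot\circ(a_1,a_2,a_3)=({\sim}(a_1\land a_2),a_3,a_3\land{\sim}(a_1\land a_2))$ I get
$\circ T=T$, $\circ t=b$, $\circ b=F$, $\circ f=b$, $\circ F=T$.
Iterating gives $\circ^2 T=T$, $\circ^2 t=F$, $\circ^2 b=T$, $\circ^2 f=F$, $\circ^2 F=T$.
Since $\dot\neg$ swaps the first two coordinates, $\neg$ exchanges $T\leftrightarrow F$ and $t\leftrightarrow f$, and fixes $b$. Hence $\bullet\alpha=\neg\circ\alpha$ takes the values $F,b,T,b,F$ on $T,t,b,f,F$ respectively. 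Conjunction is the minimum in the order $F<f<b<t<T$, so $\alpha\land\neg\alpha$ takes the values $F,f,b,f,F$. In particular, $\alpha\land\neg\alpha$ is designated exactly when $h(\alpha)=b$.

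Items 1 and 6 then follow at once. The premise forces $h(\alpha)=b$, and both $\circ^2 b=T$ and $\bullet b=T$ are designated. For the non-derivability items I exhibit a single value as countermodel, taking a valuation that sends the variable $\alpha$ to that value:
\begin{itemize}
\item item 2: $h(\alpha)=T$, where $\circ^2\alpha=T$ but $\alpha\land\neg\alpha=F$;
\item item 3: $h(\alpha)=t$, where $\circ\alpha=b$ is designated but $\circ^2\alpha=F$;
\item item 4: $h(\alpha)=b$, where $\circ^2\alpha=T$ but $\circ\alpha=F$;
\item item 5: $h(\alpha)=t$, where $\bullet\alpha=b$ is designated but $\alpha\land\neg\alpha=f$.
\end{itemize}

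I expect no real obstacle here. The only delicate point is computing $\circ$ and $\neg\circ$ correctly from the snapshot coordinates, in particular that $\circ b=F$, which in turn gives $\circ^2 b=T$. It is also worth stating explicitly that the non-derivability claims are transferred to the Hilbert calculus through soundness.
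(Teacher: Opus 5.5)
Your proposal is correct and follows the same approach as the paper. Both use soundness and completeness to reduce the six items to checking the 5-valued truth tables of $\mathcal{M}_{\bf 2}$. The paper simply points to Figure~11 for these tables, and your explicit computations from the twist clauses are all accurate: $\circ$, $\circ^2$, $\bullet$, the fact that $\alpha\land\neg\alpha$ is designated only at $b$, and the countermodels $T$, $t$, $b$, $t$.
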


\begin{proof}
    Using soundness and completeness, we can straightforwardly verify the properties above with 5-valued truth tables for LFI3  (see Figure~11).

\begin{figure}[h]
    \centering
    \includegraphics[width=.6\textwidth]{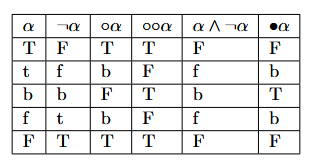}
    \caption{}
\end{figure}
\end{proof}

Another interesting feature of LFI3 is that it hides other recovery operators. 

\begin{definition}
    $\circ^*\alpha := (\alpha \land \circ\alpha \land \circ\circ\alpha) \lor (\neg\alpha \land \circ\alpha \land \circ\circ\alpha)$

    \begin{figure}[h]
    \centering
    \includegraphics[width=.2\textwidth]{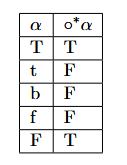}
    \caption{}
\end{figure}
\end{definition}

\begin{theorem}$\circ^*\alpha$ is a consistency operator, that is,
\[\begin{array}{ll}
   1. &  \circ^*\alpha, \alpha, \neg\alpha \vdash \beta\\
   2. & \circ^*\alpha, \alpha \nvdash \beta\\
   3. & \circ^*\alpha, \neg\alpha \nvdash \beta\\
\end{array}\]

\begin{proof}
    The proof is straightforward using truth tables (see Figure~12).
\end{proof}

\end{theorem}

Furthermore, the $\circ^*$ operator also validates (cc$^*$) $\circ$$^*\circ$$^*\alpha$. This operator recovers the idea of consistency as preservation of classical values. Indeed, with this operator it is possible to recover classical logic inside LFI3.

\begin{definition}
    $\circ^{\#}\alpha$ := $\circ^*\alpha$ $\lor$ ($\bullet\alpha \land \circ$$\circ\alpha$) (see Figure~13).

    \begin{figure}[h]
    \centering
    \includegraphics[width=.2\textwidth]{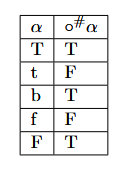}
    \caption{}
\end{figure}
\end{definition}

Analogously to the $\circ^*$ operator, $\circ^\#$ recovers LFI1 reasoning. Indeed, with $\circ^\#$ we can recover LFI1 inside LFI3 as well. Here, we will consider a general semantical proof of recovery. From now on, $Var(\Gamma)$ will denote the set of propositional variables occurring in a set of formulas $\Gamma$.


\begin{theorem}\label{geralzao}
Let $L_{1}=\langle {\bf A}_{1}, D_1\rangle$  and $L_2=\langle {\bf A}_2, D_2\rangle$ be two matrix logics in the conditions of Lemma~\ref{IncMat}. Suppose that, in addition there  exist a formula $\circ'(p)$ depending on a single propositional variable $p$ such that, for every valuation $\vartheta$ over $L_1$, $\vartheta(\circ'\alpha) \in D_1$ iff $\vartheta(\alpha) \in {\bf A}_2$. Then:  $\circ'(Var(\Gamma \cup \{\alpha\}), \Gamma \vdash_{L_{1}} \alpha$ iff $\Gamma \vdash_{L_2} \alpha$, where $\circ'(X) = \{\circ'(p) \ : \ p \in X\}$, for every set $X$ of propositional variables.
\end{theorem}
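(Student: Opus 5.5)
The plan is to prove the two directions separately, working semantically with valuations over the matrices. For matrix logics, $\Gamma \vdash_{L_i} \alpha$ means that every homomorphism $\vartheta:\mathcal{L}_\Sigma \to {\bf A}_i$ with $\vartheta[\Gamma]\subseteq D_i$ satisfies $\vartheta(\alpha)\in D_i$. Throughout, write $X = Var(\Gamma\cup\{\alpha\})$.

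For the direction from left to right, assume $\circ'(X), \Gamma \vdash_{L_1} \alpha$. Take a valuation $\vartheta_2$ over ${\bf A}_2$ with $\vartheta_2[\Gamma]\subseteq D_2$. Since ${\bf A}_2$ is a subalgebra of ${\bf A}_1$, the same map $\vartheta_2$, composed with the inclusion, is a valuation $\vartheta_1$ over ${\bf A}_1$. Because $D_2 = D_1\cap A_2$, we get $\vartheta_1[\Gamma]\subseteq D_1$. Every variable $p\in X$ has $\vartheta_1(p)\in A_2$, so the hypothesis on $\circ'$ gives $\vartheta_1(\circ' p)\in D_1$. Hence $\vartheta_1(\alpha)\in D_1$. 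Since the value lies in $A_2$, it is in $D_2$, which is what we need.

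For the direction from right to left, assume $\Gamma\vdash_{L_2}\alpha$. Take $\vartheta_1$ over ${\bf A}_1$ designating $\Gamma$ and all $\circ'(p)$ for $p\in X$. By the hypothesis on $\circ'$, $\vartheta_1(p)\in A_2$ for every $p\in X$. Choose any valuation $\vartheta_2$ over ${\bf A}_2$ that agrees with $\vartheta_1$ on $X$; on the remaining variables, send them to some fixed element of $A_2$. The key step, and the only real point, is a routine induction on formulas. It shows that $\vartheta_1(\beta)=\vartheta_2(\beta)\in A_2$ for every $\beta$ with $Var(\beta)\subseteq X$. The induction works because ${\bf A}_2$ is closed under the operations and $\vartheta_1$, $\vartheta_2$ are homomorphisms that agree on $X$. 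Then $\vartheta_2[\Gamma]=\vartheta_1[\Gamma]\subseteq D_1\cap A_2=D_2$, so $\vartheta_2(\alpha)\in D_2$, and hence $\vartheta_1(\alpha)\in D_1$.

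The main obstacle is modest. It is to be careful that variables outside $X$ cannot affect the result (locality of valuations), and that $A_2$ is nonempty, which it is since it is the universe of an algebra. No earlier syntactic results are needed beyond the definitions and the conditions of Lemma~\ref{IncMat}.
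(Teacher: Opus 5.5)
Your proposal is correct and follows the paper's proof step for step. For $(\Rightarrow)$ you view an ${\bf A}_2$-valuation as an ${\bf A}_1$-valuation through the subalgebra inclusion; for $(\Leftarrow)$ you use $\circ'$ to force the variables of $\Gamma\cup\{\alpha\}$ into $A_2$ and pass to an ${\bf A}_2$-valuation that agrees on them. Your explicit handling of the remaining variables and the locality induction simply spell out what the paper leaves implicit.
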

\begin{proof}
Suppose that ${\bf A}_{1} =\{a_1, \ldots, a_n, \ldots, a_l\}$ and ${\bf A}_{2} =\{a_1, \ldots, a_n\}$, where $n<l$. Let us consider a formula 
$\circ_{L_n}(p)$ satisfying the required hypothesis.  Then, the recovery operator $\circ_{L_n}$ can be represented as the  truth function displayed in Figure~14, where $\mathcal{T} \in D_1$ and $\mathcal{F} \in {\bf A}_{1}-D_1$ are arbitrary.
\\($\Rightarrow$) Suppose that $\circ_{L_n}(Var(\Gamma \cup \{\alpha\}), \Gamma \vdash_{L_{1}} \alpha$. Let $\vartheta$ be a valuation over $L_2$ such that $\vartheta[\Gamma] \subseteq D_2$. Since ${\bf A}_2$ is a subalgebra of ${\bf A}_1$ then $\vartheta$ can be seen as a valuation over $L_1$ such that  $\vartheta(p) \in  {\bf A}_2$ for every $p \in Var(\Gamma \cup \{\alpha\})$. This means that $\vartheta[\circ_{L_n}(Var(\Gamma \cup \{\alpha\})] \subseteq D_1$ and $\vartheta[\Gamma] \subseteq D_1$, since $D_2 \subseteq D_1$.  By hypothesis, $\vartheta(\alpha) \in D_1$. From this,  $\vartheta(\alpha) \in D_2$,  since $D_2=D_1 \cap {\bf A}_2$. This proves that $\Gamma \vdash_{L_2} \alpha$.
\\$(\Leftarrow)$ Suppose that $\Gamma \vdash_{L_2} \alpha$. Let $\vartheta$ be a valuation over $L_1$ such that $\vartheta[\circ_{L_n}(Var(\Gamma \cup \{\alpha\})] \subseteq D_1$ and $\vartheta[\Gamma] \subseteq D_1$. By the properties of $\circ_{L_n}(p)$ if follows that $\vartheta(p) \in  {\bf A}_2$ for every $p \in Var(\Gamma \cup \{\alpha\})$. Let $\bar\vartheta$ be a valuation over  $L_2$ such that $\bar\vartheta(p)=\vartheta(p)$ for every   $p \in Var(\Gamma \cup \{\alpha\})$. Then, $\bar\vartheta(\beta)=\vartheta(\beta)$ for every formula $\beta$ such that $Var(\beta) \subseteq Var(\Gamma \cup \{\alpha\})$, since ${\bf A}_2$ is a subalgebra of ${\bf A}_1$. But then $\bar\vartheta$ is a valuation over  $L_2$ such that $\bar\vartheta[\Gamma] \subseteq D_2$, since $D_2=D_1 \cap {\bf A}_2$. By hypothesis, $\bar\vartheta(\alpha) \in D_2$. This means that $\vartheta(\alpha) \in D_1$, proving that  $\circ'(Var(\Gamma \cup \{\alpha\}), \Gamma \vdash_{L_{1}} \alpha$.
\end{proof}

   \begin{figure}[h]
    \centering
    \includegraphics[width=.2\textwidth]{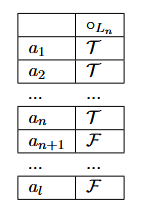}
    \caption{}
\end{figure}

\begin{theorem}
For all sets of formulas $\Gamma$ and formula $\alpha$,
\[
\circ^\#(Var(\Gamma \cup \{\alpha\}), \Gamma \vdash_{LFI3} \alpha \ \text{ iff } \ \Gamma \vdash_{LFI1} \alpha
\]
\end{theorem}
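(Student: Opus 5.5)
The plan is to derive the statement as an instance of Theorem~\ref{geralzao}, taking $L_1 = $ LFI3 $= \langle T_{\bf 2}, D_1^0\rangle$, $L_2 = $ LFI1, and $\circ' := \circ^\#$. Two hypotheses have to be checked: first, that the pair (LFI3, LFI1) satisfies the conditions of Lemma~\ref{IncMat}; second, that $\circ^\#$ has the required truth function.

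For the first hypothesis I will reuse the identification from the proof of Theorem~\ref{sublogiclfi1}. There $1$, $1/2$ and $0$ are sent to $T$, $b$ and $F$, and $A_{LFI1}$ becomes the subalgebra of $T_{\bf 2}$ with domain $\{T,b,F\}$. Under this identification the designated values of LFI1, namely $\{1,1/2\}$, correspond to $\{T,b\} = D_1^0 \cap \{T,b,F\}$. This is exactly the situation of Lemma~\ref{IncMat}. One point needs care: $\vee$ is not part of the displayed signature of $A_{LFI1}$. I will either note that $\dot\lor$ is the maximum in the linear order $F<f<b<t<T$, so $\{T,b,F\}$ is closed under it and it agrees with LFI1's disjunction, or define disjunction from the other connectives in both logics.

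For the second hypothesis I must show that, for every valuation $\vartheta$ over LFI3, $\vartheta(\circ^\#\alpha)\in D_1^0$ iff $\vartheta(\alpha)\in\{T,b,F\}$. Since $\circ^\#$ is built from $\vartheta(\alpha)$ by the deterministic operations of $T_{\bf 2}$, this reduces to a five-row computation, which is the table in Figure~13. The steps are as follows.
\begin{enumerate}
\item Compute $\circ\alpha$ and $\circ\circ\alpha$ with the twist operator $\dot\circ$:
  \begin{itemize}
  \item For $x=T=(1,0,0)$, $\dot\circ x = (1,0,0)=T$.
  \item For $x=F=(0,1,0)$, $\dot\circ x = T$.
  \item For $x=t=(1,0,1)$ or $x=f$, $\dot\circ x = (1,1,1)=b$.
  \item For $x=b=(1,1,1)$, $\dot\circ x = (0,1,0)=F$.
  \end{itemize}
  Hence $\circ\circ x$ is $T$ for $x\in\{T,F\}$, $F$ for $x\in\{t,f\}$, and $T$ for $x=b$.
\item Conclude that $\circ^*\alpha$ is designated exactly when $\vartheta(\alpha)\in\{T,F\}$. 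Each disjunct is a conjunction containing $\circ\alpha\land\circ\circ\alpha$, which is designated only for $x\in\{T,F\}$. In that case one of $\alpha$ or $\neg\alpha$ is $T$.
\item Check that $\bullet\alpha\land\circ\circ\alpha$ is designated exactly when $x=b$. Here $\bullet b=\neg F = T$ and $\circ\circ b=T$, while $\circ\circ$ is $F$ on $t$ and $f$, and $\bullet$ is $F$ on $T$ and $F$.
\item Since disjunction is the maximum and $D_1^0$ is upward closed in the linear order, $\circ^\#\alpha$ is designated iff $x\in\{T,b,F\}$, as required.
\end{enumerate}

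Theorem~\ref{geralzao} then yields the equivalence. The expected obstacle is only bookkeeping: the sign conventions in $\dot\circ$ and $\dot\neg$ must be tracked correctly, and disjunction must be confirmed to preserve the subalgebra.
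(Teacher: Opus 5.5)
Your proposal is correct and takes the same route as the paper, which simply cites the statement as a particular case of Theorem~\ref{geralzao}. The subalgebra condition comes from Theorem~\ref{sublogiclfi1}, and the behaviour of $\circ^\#$ is read off Figure~13. Your five-row computation ($\circ^\#$ takes value $T$ on $\{T,b,F\}$ and $F$ on $\{t,f\}$), together with your remark that $\dot\lor$, being the maximum, preserves $\{T,b,F\}$, correctly fills in the details the paper leaves implicit.
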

\begin{theorem} For all sets of formulas $\Gamma$ and formula $\alpha$,
\[
\circ^*(Var(\Gamma \cup \{\alpha\}), \Gamma \vdash_{LFI3} \alpha \ \text{ iff } \ \Gamma \vdash_{CPL} \alpha
\]
\end{theorem}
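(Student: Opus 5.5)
The plan is to derive this as an instance of Theorem~\ref{geralzao}, taking $L_1$ to be the matrix logic $\mathcal{M}_{\bf 2}=\langle T_{\bf 2},D_1^0\rangle$ of LFI3 and $L_2$ to be classical logic presented by its two-element matrix. This matrix is identified with the subset $\{T,F\}=\{(1,0,0),(0,1,0)\}$ of $\mathds{B}_1^0$, with designated set $\{T\}$. Theorem~\ref{geralzao} then requires three hypotheses. First, that $\{T,F\}$ is a subalgebra of $T_{\bf 2}$. Second, that $\{T\}=D_1^0\cap\{T,F\}$. Third, that $\circ^*(p)$ is a one-variable formula with $\vartheta(\circ^*\alpha)\in D_1^0$ iff $\vartheta(\alpha)\in\{T,F\}$, for every valuation $\vartheta$ over $T_{\bf 2}$.

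For the first two hypotheses I will reuse what was shown in the proof that LFI3 is a sublogic of classical logic. Concretely, closure of $\{T,F\}$ under $\dot\land,\dot\lor,\dot\rightarrow,\dot\neg,\dot\circ$ can be read off Definition~\ref{twist:LFI3}:
\begin{itemize}
\item On snapshots with $a_3=b_3=0$ and $a_2={\sim}a_1$, the third coordinates of $\dot\land$, $\dot\lor$ and $\dot\rightarrow$ all vanish.
\item The first two coordinates are then the Boolean operations and their complements.
\item $\dot\neg$ swaps $T$ and $F$.
\item $\dot\circ$ sends both $T$ and $F$ to $({\sim}0,0,0)=T$.
\end{itemize}
On this subalgebra the operations are exactly the classical truth functions. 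The equality $D_1^0\cap\{T,F\}=\{T\}$ is immediate.

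The main step is computing the truth function of $\circ^*$. I first compute $\circ$ and $\circ\circ$ with item 5 of Definition~\ref{twist:LFI3}:
\begin{itemize}
\item $\circ T=\circ F=T$;
\item $\circ t=\circ f=(1,1,1)=b$;
\item $\circ b=(0,1,1)=f$.
\end{itemize}
Consequently $\circ\circ x=T$ for $x\in\{T,F\}$, $\circ\circ x=\circ b=f$ for $x\in\{t,f\}$, and $\circ\circ b=\circ f=b$.

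Now I evaluate $\circ^*\alpha=(\alpha\land\circ\alpha\land\circ\circ\alpha)\lor(\neg\alpha\land\circ\alpha\land\circ\circ\alpha)$, using that in the 5-valued matrix $\land$ and $\lor$ are min and max in the order $F<f<b<t<T$.
\begin{itemize}
\item For $\alpha=T$, the first disjunct is $T$.
\item For $\alpha=F$, we have $\neg F=T$, so the second disjunct is $T$.
\item For $\alpha\in\{t,f\}$, the factor $\circ\circ\alpha=f$ makes both disjuncts at most $f$, hence undesignated.
\item For $\alpha=b$, the factor $\circ b=f$ likewise gives at most $f$.
\end{itemize}
So $\circ^*\alpha$ is designated exactly when $\vartheta(\alpha)\in\{T,F\}$, which is the content of Figure~12.

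With all three hypotheses checked, Theorem~\ref{geralzao} yields the equivalence. Two bookkeeping remarks close the argument. $\vdash_{LFI3}$ coincides with the matrix consequence of $\mathcal{M}_{\bf 2}$ by the soundness and completeness theorem for the Hilbert calculus of LFI3. $\vdash_{CPL}$ coincides with the two-valued matrix consequence in the signature $\Sigma$, since $\circ$ is interpreted as the constant $T$ there. The only real obstacle is the case analysis for $\circ^*$, and it is finite and mechanical.
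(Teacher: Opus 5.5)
Your proposal is correct and follows the paper's route: the paper also obtains this result as a particular case of Theorem~\ref{geralzao}, with $\{T,F\}$ as the classical subalgebra of $T_{\bf 2}$, and your checks of the subalgebra, the designated sets and the recovery condition are the right ones to make. One computational slip needs fixing, though it does not affect the conclusion. By item~5 of Definition~\ref{twist:LFI3}, $\circ b=(0,1,1\land{\sim}1)=(0,1,0)=F$, not $f$, so $\circ\circ t=\circ\circ f=F$ and $\circ\circ b=\circ F=T$; your values would make $\circ^3 b=f$ undesignated, contradicting (cc$^1$). With the corrected values, for $\vartheta(\alpha)\in\{t,f,b\}$ each disjunct of $\circ^*\alpha$ has a conjunct equal to $F$ ($\circ\circ\alpha$ for $t,f$ and $\circ\alpha$ for $b$), so $\circ^*\alpha=F$ is undesignated and your conclusion stands.
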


\begin{proof}
    Both cases are particular cases of Theorem \ref{geralzao}.
\end{proof}
    

\section*{mbCciw: a logic for skeptics}

As we have seen in the previous section, the subfamily L$_1^k$ of logics represents a philosophical - or pragmatic - viewpoint in which questionability makes sense only up to the level of the reliability of the reliability of a sentence. From there on, we assume to know whether the criteria are correct or incorrect.

Recall from Definition~\ref{mbcciw} that mbCciw := mbC + (ciw). In mbCciw, we face a different situation: there is no iteration of $\circ^n$ operators such that $\circ^n$$\alpha$ is a theorem. In terms of Nmatrices, the $\circ$ operator behaves as shown in Figure~15 \cite[p. 263]{carnielli2016}.

\begin{figure}[h]
    \centering
    \includegraphics[width=.2\textwidth]{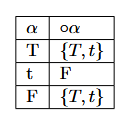}
    \caption{}
\end{figure}

The value $T$ is interpreted as true, $t$ as contradictory, and $F$ as false. Notice that if $\alpha$ is true or false, we can always question its reliability up to any level. This goes in line with the skeptical view defended by Sextus Empiricus, in his chapter \textit{Does There Exist a Criterion of Truth?}, Book II of \textit{Outlines of Pyrrhonism}:

\begin{quote}
    This dispute, then, will either declare to be decidable or to be undecidable; if undecidable, they will be granting at once that judgment should be suspended; but if decidable, let them say with what it is to be decided, seeing that we do not have any agreed-upon criterion and do not know - indeed, are inquiring - whether one exists. And anyhow, in order to decide the dispute that has arisen about the criterion, we have need of an agreed-upon criterion it is necessary first to have decided the dispute about the criterion. Thus, with reasoning falling into the circularity mode, finding a criterion becomes aporetic; for we do not allow them to adopt a criterion hypothetically, and if they wish to decide about the criterion by means of a criterion we force them into an infinite regress. \cite[p. 129]{sextus} 
\end{quote}

That is, if we try to settle one question through some set of criteria, we can always question the reliability of such criteria, because mbCciw allows us to do so. Furthermore, we can question the same about the criteria of criteria, and so on, up to any finite level. Therefore, skepticism finds its adequate logical multiplicity in mbCciw. 

From now on, it will be useful to analyze the swap structures characterization of mbCciw and some of its axiomatic extensions. 

Snapshots for mbCciw  are pairs $x=(x_1,x_2)$ representing 0-1 values for $\alpha$ and $\neg\alpha$. By (TND) $x_1 \vee x_2=1$. This produces a domain formed by 3 snapshots:  $\mathds{B}_0 = \{x \in {\bf 2}\times {\bf 2} \ : \ x_1 \lor x_2 = 1\} = \{(1,0),(1,1),(0,1)\}=\{$T,t,F$\}$ under the identifications T$=(1,0)$ (true), t$=(1,1)$ (contradictory), and F$=(0,1)$ (false).

\begin{definition}\label{swap-mbCciw}
   The swap structure for  mbCciw is the multialgebra 
        $\mathcal{B}_0$ = $\langle \mathds{B}_0$, $\land$, $\lor$, $\rightarrow$, $\neg$, $\circ$$\rangle$  over $\Sigma$ such that the multioperations are defined as in $\mathcal{B}_1^0$ (recall Definition~\ref{semantica_1}), but now over the domain  $\mathds{B}_0$. Let $D_0 = \{x \in \mathds{B}_0 \ : \ x_1 = 1\} = \{$T,t$\}$ be the set of designated values. The Nmatrix associated to $\mathcal{B}_0$ is $\mathcal{M}_0 = (\mathcal{B}_0, D_0)$.
\end{definition}

The 3-valued Nmatrix $\mathcal{M}_0$ is the usual Nmatrix for mbCciw (see~\cite[p. 263]{carnielli2016}). It is displayed in Figure~16.

\begin{figure}[h]
    \centering
    \includegraphics[width=\textwidth]{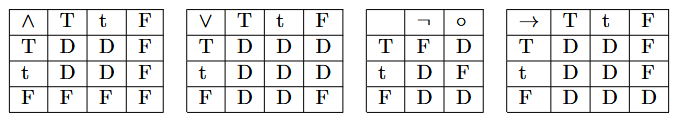}
    \caption{}
\end{figure}

\section*{L$_0^k$: logics for dogmatists}

\begin{definition}
    L$_0^0$ = mbCciw + (cc) $\circ\circ$$\alpha$ = mbCci
\end{definition}

In the case of mbCci, we consider 3-valued Nmatrices that validate (cc), as shown by the table in Figure~17. Controversiality only appears at the level of a proposition and its negation: we cannot even question the reliability of the reliability of a proposition. That is, given a proposition $\alpha$, our criteria are either reliable and produce a decision, or they are unreliable and controversial. In comparison with skepticism, L$_0^0$ contains the logical multiplicity of a dogmatic position, albeit allowing for contradictory states: given a certain set of criteria, we determine which propositions are true and which are not. However, we do not question such criteria one level above: we assume we have the criteria at hand. Soundness and completeness for L$_0^0$ are presented in \cite[p. 67 and p. 266]{carnielli2016}.

\begin{definition}
The swap structure $\mathcal{B}_0^0$ for  L$_0^0$ is obtained from $\mathcal{B}_0$,  the one for mbCciw, by making the multioperator for $\circ$  deterministic:

$\begin{array}{lll}
   \circ x  & = & \{({\sim}(x_1 \land x_2),x_1 \land x_2)\}.
\end{array}$
\end{definition}

The operator $\circ$ for  L$_0^0$ can be seen in Figure~17.

\begin{figure}[h]
    \centering
    \includegraphics[width=.25\textwidth]{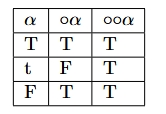}
    \caption{}
\end{figure}

\begin{definition}
    L$_0^1$ = L$_0^0$ + (dn) $\neg\neg\alpha$ $\leftrightarrow$ $\alpha$
\end{definition}

The logic L$_0^1$ is also another known logic, considered in the context of paraconsistent belief revision as Cie \cite[p. 4]{coniglio2025logicparaconsistentbeliefrevision}.

\begin{definition}
The swap structure $\mathcal{B}_0^1$ for L$_0^1$ is obtained from $\mathcal{B}_0^0$,  the one for  mbCci, by changing the multioperator for $\neg$ to a deterministic one:

$\begin{array}{lll}
 \neg x = \{(x_2, x_1)\}.
\end{array}$
\end{definition}

The operator $\neg$ for  L$_0^1$ can be seen in Figure~18.



\begin{figure}[h]
    \centering
    \includegraphics[width=.2\textwidth]{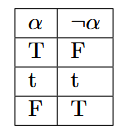}
    \caption{}
\end{figure}

\noindent\textbf{Example} Suppose that Judas is a priest who wants to know whether God exists or not. He then proposes to himself the following: if God exists, then a sufficient sign for this is that the roses in his garden shall flourish. If God does not exist, then a sufficient sign for this is that a scarecrow appears in his kitchen tomorrow. Surprisingly, the other day his entire garden flourishes and a scarecrow appears on top of his fridge. What can he conclude from this? 

In this example, the priest dogmatically accepts - that is, he does not question the reliability of the reliability - the criteria for the truth of the proposition $p$: ``God exists'' and its negation. However, these criteria determine that both \textit{p} and $\neg p$ hold, that is, $\vartheta(p) = t$. Therefore, his criteria - not the criteria of the criteria - are not reliable because they produce conflict of decision, i.e., $\vartheta(\circ p) = F$. 

Imagine, on the other hand, that only Judas' garden had flourished - i.e., no scarecrow has appeared in his house. In that case, he concludes that God exists, because he dogmatically accepts his criterion as leading to truth (there is no logical multiplicity to question his criteria). 

\section*{Pragmatics and Controlled Consistency}

In everyday life, criteria are more flexible depending on the context. Sometimes, people are required not only to justify how they arrived at a certain proposition, but also how they know this justification is correct. However, the chain of reasons must stop at some point: in everyday life, skepticism is not a practical threat, unless you take the pragmatic implications of skepticism to its full development. The hierarchy of logics L$_k^n$ offers an adequate logical multiplicity that is able to deal with any level of control \textit{k} that is required given a certain context. 

Let us think about a general way to give a semantics for this hierarchy. Characterizing these logics in terms of Nmatrices has the advantage of making these logics more intuitive. However, as we go up in the hierarchy, complexity increases. In L$_2^n$, there are 8 truth values; in L$_3^n$, there are 13 truth values. Therefore, this method becomes progressively less treatable and less intuitive.

One positive result from this characterization: we can answer one of the old problems for LFIs, namely, to present LFIs that are more than 3-valued. We have shown here how to get to a 5-valued LFI, but the same method (although more complex) will lead to the deterministic extensions of an 8-valued LFI, a 13-valued LFI, etc. A better way to tackle this problem in general is to think in terms of restricted Nmatrices (RNmatrices) \cite{coniglio2020simple}.  From now on, $\mathbf{Val}(\mathcal{M})$ will denote the set of valuations over a given Nmatrix $\mathcal{M}$.

\begin{definition} \label{def:RNmatrix}
A {\em restricted Nmatrix} (RNmatrix) based on an Nmatrix $\mathcal{M} = \langle \mathcal{A},D \rangle$ is a pair $\mathcal{R}=\langle \mathcal{M},\mathcal{F}\rangle$ such that $\mathcal{F} \subseteq \mathbf{Val}(\mathcal{M})$. The consequence relation associated to $\mathcal{R}$ is $\vDash_{\mathcal{R}}$, such that $\Gamma \vDash_{\mathcal{R}} \alpha$ iff, for every $\vartheta \in \mathcal{F}$, if $\vartheta[\Gamma] \subseteq D$ then $\vartheta(\alpha) \in D$.
\end{definition}

Recall  from Definition~\ref{swap-mbCciw} the swap structure for mbCciw, which gives origin to a 3-valued Nmatrix $\mathcal{M}_0$ with domain $\mathcal{B}_0=\{ T,t,F \}$ (see Figure~16). The set of designated values is $D_0=\{ T, t\}$.

Now, given the logic L$_n^0$, we can validate (cc$^n$) $\circ^{n+2}$$\alpha$ in $\mathcal{M}_0$ in the following way.  Consider the set $\mathcal{F}_n^0 \subseteq \mathbf{Val}(\mathcal{M}_0)$ such that $\vartheta \in \mathcal{F}_n^0$ iff it  satisfies the following condition:  if $\vartheta(\circ^n\alpha) \in \{T, F\}$, then $\vartheta(\circ^{n+1}\alpha) =$ T. Let $\mathcal{R}_n^0=\langle \mathcal{M}_0,\mathcal{F}_n^0\rangle$ be the corresponding RNmatrix.

\begin{theorem}
    $\vDash_{\mathcal{R}_n^0} \circ^{n+2}\alpha$.
\end{theorem}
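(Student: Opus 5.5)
Fix an arbitrary $\vartheta \in \mathcal{F}_n^0$ and a formula $\alpha$, and set $\beta := \circ^n\alpha$. We must show $\vartheta(\circ^{n+2}\alpha) = \vartheta(\circ\circ\beta) \in D_0 = \{T,t\}$. We only need the behaviour of the $\circ$ multioperator in $\mathcal{M}_0$ (Figure~15/16). From Definition~\ref{swap-mbCciw}, inherited from Definition~\ref{semantica_1} restricted to pairs, we have $\circ x = \{z \in \mathds{B}_0 : z_1 = {\sim}(x_1 \land x_2)\}$. Concretely, $\circ T = \circ F = \{T,t\}$ and $\circ t = \{F\}$.

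We split into cases on $\vartheta(\beta) \in \{T,t,F\}$.

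\emph{Case 1: $\vartheta(\beta) \in \{T,F\}$.} The restriction defining $\mathcal{F}_n^0$ gives $\vartheta(\circ\beta) = \vartheta(\circ^{n+1}\alpha) = T$. Since $\vartheta$ is a valuation over $\mathcal{M}_0$, $\vartheta(\circ\circ\beta) \in \circ T = \{T,t\} = D_0$.

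\emph{Case 2: $\vartheta(\beta) = t$.} Since $\circ t = \{F\}$, we get $\vartheta(\circ\beta) = F$. Then $\vartheta(\circ\circ\beta) \in \circ F = \{T,t\} \subseteq D_0$.

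In both cases $\vartheta(\circ^{n+2}\alpha)$ is designated. Since $\vartheta$ was arbitrary, $\vDash_{\mathcal{R}_n^0} \circ^{n+2}\alpha$.

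The only point needing care is reading the multioperator tables of $\mathcal{M}_0$ correctly, in particular that $\circ t$ is the singleton $\{F\}$ and that $\circ$ applied to $T$ or $F$ always lands in $D_0$. Both follow directly from $z_1 = {\sim}(x_1 \land x_2)$. Note that the restriction in $\mathcal{F}_n^0$ is not actually used to obtain designation here. It serves to make $\vartheta(\circ^{n+1}\alpha)$ classical, which matters for later results such as $\circ^{n+1}\alpha, \neg\circ^{n+1}\alpha \vDash \gamma$.
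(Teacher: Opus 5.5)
Your proof is correct and follows the paper's argument. It uses the same case split on $\vartheta(\circ^n\alpha)$, with the restriction of $\mathcal{F}_n^0$ in one case and $\circ t=\{F\}$ in the other; the only difference is that the paper applies the restriction a second time (to $\circ\alpha$) to get $\vartheta(\circ^{n+2}\alpha)=T$ exactly, whereas you read off $\circ T=\circ F=\{T,t\}\subseteq D_0$.

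Your closing remark is wrong, however. Case~1 needs the restriction: an unrestricted valuation over $\mathcal{M}_0$ may take $\vartheta(\circ^{n}\alpha)=T$ and $\vartheta(\circ^{n+1}\alpha)=t$, hence $\vartheta(\circ^{n+2}\alpha)=F$. What your route saves is only the second application.
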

\begin{proof}
Let $\vartheta \in \mathcal{F}_n^0$ and $\alpha$ a formula. If $\vartheta(\circ^n\alpha) \in \{$T, F$\}$  then $\vartheta(\circ^{n+1}\alpha) =$ T, by definition of $\mathcal{F}_n^0$. If, on the contrary, $\vartheta(\circ^n\alpha)=$ t then $\vartheta(\circ^{n+1}\alpha) =$ F, by definition of $\circ$ in $\mathcal{M}_0$. Hence, we conclude that  $\vartheta(\circ^{n+1}\alpha) \in \{$T, F$\}$, for every  $\vartheta \in \mathcal{F}_n^0$ and formula $\alpha$. But then, by definition of $\mathcal{F}_n^0$, $\vartheta(\circ^{n+2}\alpha) =$ T.
\end{proof}

 The situation for  $n \geq 3$ is shown in Figure~19. Observe that, if $n=0$ then the condition reduces to the following: if $\vartheta(\alpha) \in \{T, F\}$, then $\vartheta(\circ\alpha) =$ T. In this case, the second column of the table in Figure~19 contains only the values T and F, while the third contains only T. This validates the formula $\circ^2\alpha$, and recovers the deterministic table for $\circ$ in L$_0^0$ shown in Figure~17.

\begin{figure}[h]
    \centering
    \includegraphics[width=\textwidth]{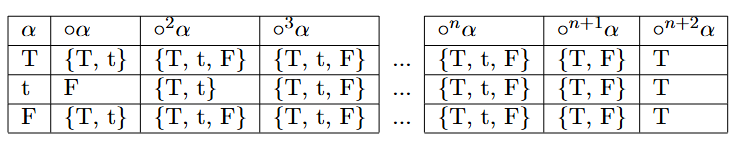}
    \caption{}
\end{figure}

Consider the concrete example when $n=2$:

\begin{definition}
L$_2^0$ := mbCciw + (cc$^2$) $\circ^4\alpha$ 
\end{definition}

Let $\mathcal{F}_2^0 \subseteq \mathbf{Val}(\mathcal{M}_0)$ be such that $\vartheta \in \mathcal{F}_2^0$ iff it  satisfies the following condition:  if $\vartheta(\circ^2\alpha) \in \{T, F\}$, then $\vartheta(\circ^{3}\alpha) =$ T. Let $\mathcal{R}_2^0=\langle \mathcal{M}_0,\mathcal{F}_2^0\rangle$ be the corresponding RNmatrix. As we can see from the non-deterministic RNmatrix in Figure~20, (cc$^2$) is sound in $\mathcal{R}_2^0$. In such table, T$_*$ indicates that the value T was mandatory, according to the definition of $\mathcal{F}_2^0$.

\begin{figure}[h]
    \centering
    \includegraphics[width=.4\textwidth]{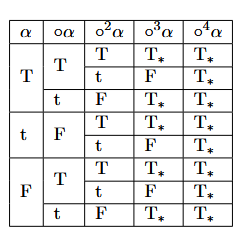}
    \caption{}
\end{figure}



    

The completeness proof for the case of L$_n^0$ is analogous to the construction of maximal non-trivial sets presented for $L_1^0$ and $L_1^1$.

\begin{theorem}\label{completenessL2}
    Let $\Delta \cup \{\alpha\} \subseteq \mathcal{L}_\Sigma$ such that $\Delta$ is maximal non-trivial with respect to $\alpha$ in L$_n^0$. Consider the mapping $\vartheta: \mathcal{L}_\Sigma \mapsto \{T, t, F\}$ defined by
\[
\vartheta(\beta)=
\begin{cases}
T, & \text{if } \beta \in \Delta, \neg\beta \notin \Delta,\\
t,  & \text{if } \beta \in \Delta, \neg\beta \in \Delta,\\
F, & \text{if } \beta \notin \Delta, \neg\beta \in \Delta
\end{cases}
\]
Then, $\vartheta$ is a valuation in $\mathcal{F}_n^0$.
\end{theorem}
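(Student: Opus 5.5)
We must show two things: that $\vartheta$ is a valuation over the Nmatrix $\mathcal{M}_0$, and that it satisfies the restriction defining $\mathcal{F}_n^0$.

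We first check that $\vartheta$ is well defined. By (TND), $\Delta \vdash \beta \lor \neg\beta$. Since $\Delta$ is maximal non-trivial, it is closed under derivability and is a prime theory, as in the proof of Theorem~\ref{valuation2}. Hence $\beta \in \Delta$ or $\neg\beta \in \Delta$, and exactly one of the three cases applies. Equivalently, $\vartheta(\beta)$ is the snapshot $(\chi_\Delta(\beta),\chi_\Delta(\neg\beta))$, where $\chi_\Delta$ is the bivaluation of Theorem~\ref{valuation2}. That bivaluation satisfies (vAnd), (vOr), (vImp), (vNeg) and (vConCiw) for any axiomatic extension of mbCciw, so the argument cited from \cite{carnielli2016} applies verbatim to L$_n^0$.

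We then verify $\vartheta(\beta\#\gamma)\in\vartheta(\beta)\#\vartheta(\gamma)$ and $\vartheta(\neg\beta)\in\neg\vartheta(\beta)$. Only first coordinates are constrained in $\mathcal{B}_0$, so these follow from (vAnd), (vOr), (vImp), and from the fact that the first coordinate of $\vartheta(\neg\beta)$ is the second coordinate of $\vartheta(\beta)$. For $\circ$, the multioperation of $\mathcal{B}_0$ forces the first coordinate of the output to be ${\sim}(x_1\land x_2)$. This is exactly (vConCiw): $\circ\beta\in\Delta$ iff not both $\beta,\neg\beta\in\Delta$. The second coordinate is free subject to (TND), which we have already ensured. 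Hence $\vartheta\in\mathbf{Val}(\mathcal{M}_0)$, and membership in $D_0$ coincides with membership in $\Delta$.

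The remaining and main step is the restriction: if $\vartheta(\circ^n\beta)\in\{T,F\}$, then $\vartheta(\circ^{n+1}\beta)=T$. Put $\gamma=\circ^n\beta$.

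Case $\vartheta(\gamma)=F$. Then $\gamma\notin\Delta$, so by (vConCiw) $\circ\gamma\in\Delta$.

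Case $\vartheta(\gamma)=T$. Then $\neg\gamma\notin\Delta$, so again $\circ\gamma\in\Delta$.

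In both cases $\circ^{n+1}\beta\in\Delta$, and it remains to show that $\neg\circ^{n+1}\beta\notin\Delta$. Suppose instead that $\neg\circ\gamma\in\Delta$. The axiom (cc$^n$) gives $\circ^{n+2}\beta=\circ\circ\gamma\in\Delta$. By (bc1) applied to $\circ\gamma$, the three formulas $\circ\circ\gamma$, $\circ\gamma$ and $\neg\circ\gamma$ yield $\Delta\vdash\alpha$, contradicting non-triviality. This is the analogue of the (vCc$^1$) argument in Theorem~\ref{valuation2}, shifted to level $n$.

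The only delicate point is noticing that $\circ^{n+1}\beta\in\Delta$ must itself be established before (bc1) can be applied. That comes from (ciw)/(vConCiw) together with the hypothesis that $\gamma$ is not contradictory. With this, $\vartheta(\circ^{n+1}\beta)=T$, so $\vartheta\in\mathcal{F}_n^0$.
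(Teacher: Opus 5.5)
Your proof is correct and follows the paper's argument: you derive $\circ^{n+1}\beta\in\Delta$ from the non-contradictoriness of $\circ^n\beta$ via (ciw), and then use (cc$^n$) to exclude $\neg\circ^{n+1}\beta\in\Delta$. The only cosmetic differences are that you close the last step with (bc1) directly rather than through the equivalence $\circ\beta\leftrightarrow{\sim}(\beta\land\neg\beta)$, and that you spell out $\vartheta\in\mathbf{Val}(\mathcal{M}_0)$ (well-definedness via (TND) and primeness), which the paper simply cites as well known.
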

\begin{proof}
The fact that $\vartheta \in \mathbf{Val}(\mathcal{M}_0)$ is well-know in the literature of LFIs, taking into account that L$_n^0$ is an axiomatic extension of mbCciw. Hence, it only remains to prove that $\vartheta$ satisfies  the RNmatrix restriction: if $\vartheta(\circ^n\alpha) \in \{$T, F$\}$, then $\vartheta(\circ^{n+1}\alpha) = $T.

Thus, assume that  $\vartheta(\circ^n\alpha) \in \{$T, F$\}$. By definition of $\vartheta$, it follows that either $\circ^n\alpha \not\in \Delta$ or  $\neg{\circ}^n\alpha \not\in \Delta$, and so $\circ^n\alpha \land \neg{\circ}^n\alpha \not\in \Delta$. Therefore, ${\sim}(\circ^n\alpha \land \neg{\circ}^n\alpha) \in \Delta$. From this,  $\circ^{n+1}\alpha \in \Delta$, since $\circ\beta$ is equivalent to ${\sim}(\beta \land \neg\beta)$ in L$_n^0$, for every formula $\beta$. Since $\Delta$ is a closed theory then $\circ^{n+2}\alpha \in \Delta$, because it is a theorem in L$_n^0$. But then, by the considerations above, $\circ^{n+1}\alpha \land \neg{\circ}^{n+1}\alpha \not\in \Delta$. Given that $\circ^{n+1}\alpha \in \Delta$, we conclude that $\neg{\circ}^{n+1}\alpha \not\in \Delta$. From this, $\vartheta(\circ^{n+1}\alpha) = $T.
\end{proof}

Strengthening negation in terms of RNmatrices is also possible: 
\begin{definition}
    L$_2^1$ := L$_2^0$ + (dn) $\alpha$ $\leftrightarrow$ $\neg\neg\alpha$
\end{definition}

In this case, we can use the Nmatrix for Cbr \cite{coniglio2025logicparaconsistentbeliefrevision} as a basis. In general:

\begin{definition}{$(L_n^1\ \text{ Logics})$} Let  $\mathcal{M}_1$ be the Nmatrix for Cbr with domain $\mathcal{B}=\{ T,t,F \}$ and set of designated values $D=\{ T, t\}$  (see Figure~21). The RNmatrix for L$_n^1$ is $\mathcal{R}_n^1=\langle \mathcal{M}_1,\mathcal{F}_n^1\rangle$ such that  $\mathcal{F}_n^1 \subseteq \mathbf{Val}(\mathcal{M}_1)$ is defined as follows: $\vartheta \in \mathcal{F}_n^1$ iff it  satisfies 
\[(vCc^n) \text{ if } \vartheta(\circ^n\alpha) \in \{T, F\}, \text{ then } \vartheta(\circ^{n+1}\alpha) = T.\]
\end{definition}

\begin{figure}[h]
    \centering
    \includegraphics[width=\textwidth]{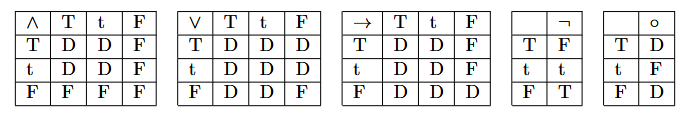}
    \caption{}
\end{figure}

    The argument for soundness of ($cc^n$) here is analogous to the case of L$_n^0$ logics.
    
\begin{definition}{$(L_n^2\ Logics)$}
Let $\mathcal{M}_1$  be the non-deterministic matrix for Cbr. The RNmatrix for L$_n^2$ is $\mathcal{R}_n^2=\langle \mathcal{M}_1,\mathcal{F}_n^2\rangle$ such that  $\mathcal{F}_n^2 \subseteq \mathbf{Val}(\mathcal{M}_1)$ is defined as follows:  $\vartheta \in \mathcal{F}_n^1$ iff it  satisfies  condition $(vCc^n)$ from L$_n^1$ and the condition below.
    $$(vip^1) \ \ \vartheta(\circ\alpha) = \vartheta({\circ}\neg\alpha).$$ 
\end{definition}

\begin{theorem}
$\vDash_{\mathcal{R}_n^2} \neg{\circ}\neg\alpha \leftrightarrow \neg{\circ}\alpha$
\end{theorem}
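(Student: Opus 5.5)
We must show that every $\vartheta \in \mathcal{F}_n^2$ designates $\neg{\circ}\neg\alpha \leftrightarrow \neg{\circ}\alpha$. Recall that $\leftrightarrow$ abbreviates $(\beta\to\gamma)\land(\gamma\to\beta)$. In the Nmatrix $\mathcal{M}_1$ for Cbr, as in $\mathcal{M}_0$, conjunction and implication are non-deterministic only in their second snapshot coordinate. Their designation behaves classically: $\beta\land\gamma$ is designated iff both conjuncts are, and $\beta\to\gamma$ is designated iff $\beta$ is undesignated or $\gamma$ is designated. So it suffices to prove the following claim: for every $\vartheta\in\mathcal{F}_n^2$, $\vartheta(\neg{\circ}\neg\alpha)\in D$ iff $\vartheta(\neg{\circ}\alpha)\in D$.

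The key step uses condition $(vip^1)$, which gives $\vartheta(\circ\neg\alpha)=\vartheta(\circ\alpha)$. Call this common value $v \in \{T,t,F\}$. Then $\vartheta(\neg{\circ}\neg\alpha)$ and $\vartheta(\neg{\circ}\alpha)$ are both values chosen from the multioperation $\neg v$ of $\mathcal{M}_1$. They need not be equal, since the Cbr negation table is non-deterministic on some inputs. However, whether a value is designated is fixed by the snapshot: $z \in \neg x$ forces $z_1 = x_2$. So every element of $\neg v$ is designated iff $v_2=1$, that is, iff $v\in\{t,F\}$. Concretely, from the Cbr table, $\neg T = \{F\}$, $\neg t=\{t\}$ (the (dn) restriction forces this; it is still designated in any case), and $\neg F\subseteq\{T,t\}$.

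Hence both $\vartheta(\neg{\circ}\neg\alpha)$ and $\vartheta(\neg{\circ}\alpha)$ are designated exactly when $v\neq T$, which proves the claim and so the theorem. Condition $(vCc^n)$ plays no role here.

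The only point needing care is to confirm from Figure~21 that the Cbr negation table has the designation pattern just described (the image of $T$ is undesignated, the images of $t$ and $F$ are designated). This holds because Cbr's swap structure keeps $z_1=x_2$. I expect no real obstacle. If instead one reads $(vip^1)$ as intended to make the two values literally equal, the argument still goes through, because only designation matters.
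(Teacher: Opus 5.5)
Your argument is correct and is essentially the paper's: the table in Figure~22, and the general soundness proof for (ip$^j$) given right after it, rest on the same two facts, namely that (vip$^1$) forces $\vartheta(\circ\alpha)=\vartheta(\circ\neg\alpha)$ and that negation then gives both sides of the biconditional the same designation status. One small correction: negation in $\mathcal{M}_1$ is in fact deterministic ($\neg T=F$, $\neg t=t$, $\neg F=T$, since (ce) and (cf) force $z_2=x_1$ as well as $z_1=x_2$, as the paper itself notes), so your remark that it is non-deterministic on some inputs is wrong. This does not affect the proof, because your argument only uses designation.
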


\begin{proof} The non-deterministic truth-table in Figure~22 shows the validity of (ip$^2$).
\end{proof}

\begin{figure}[h]
    \centering
    \includegraphics[width=.7\textwidth]{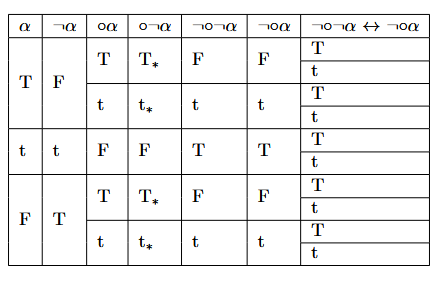}
    \caption{}
\end{figure}

\begin{definition}{(L$_n^{k}$ Logics)}
Let $k \geq 2$,  and let $\mathcal{M}_1$  be the non-deterministic matrix for Cbr. The RNmatrix for L$_n^k$ is $\mathcal{R}_n^k=\langle \mathcal{M}_1,\mathcal{F}_n^k\rangle$ such that  $\mathcal{F}_n^k \subseteq \mathbf{Val}(\mathcal{M}_1)$ is defined as follows:  $\vartheta \in \mathcal{F}_n^k$ iff it  satisfies  condition $(vCc^n)$ from L$_n^1$ and the  conditions below.
  \[\begin{array}{ll}
       (vip^1)  & \vartheta(\circ\alpha) = \vartheta({\circ}\neg\alpha) \\
       (vip^2) & \vartheta(\circ^2\alpha) = \vartheta(\circ^2\neg\alpha)\\
       \dots  &\\
       (vip^{k-1}) & \vartheta(\circ^{k-1}\alpha) = \vartheta(\circ^{k-1}\neg\alpha)\\
   \end{array}\]
    
\end{definition}

\begin{theorem}{(Soundness of L$_n^{k}$)}
  \ $\vdash_{L_n^{k}} \alpha$ implies $\vDash_{\mathcal{R}_n^{k}} \alpha$.
\end{theorem}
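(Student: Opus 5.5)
Soundness goes by the usual induction on derivations in the Hilbert calculus for L$_n^k$. Since $\mathcal{F}_n^k \subseteq \mathbf{Val}(\mathcal{M}_1)$, every valuation in $\mathcal{F}_n^k$ is a valuation over the Nmatrix of Cbr. The only rule is MP, and it preserves designation: the multioperator for $\to$ gives a designated value exactly when the antecedent is undesignated or the consequent is designated. So it is enough to show that every instance of every axiom schema of L$_n^k$ gets a value in $D=\{T,t\}$ under every $\vartheta \in \mathcal{F}_n^k$. We sort the axioms into three groups.

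\textbf{Inherited axioms.} The axioms of CPL$^+$, (TND), (bc1), (ciw) and (dn) are the axioms of Cbr. They are valid in $\mathcal{M}_1$ by the known soundness of Cbr with respect to its Nmatrix \cite{coniglio2025logicparaconsistentbeliefrevision}. Hence they are valid under every valuation in $\mathcal{F}_n^k$.

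\textbf{The axiom (cc$^n$).} We repeat the argument given for $\vDash_{\mathcal{R}_n^0}\circ^{n+2}\alpha$, using $(vCc^n)$ and the table for $\circ$ in $\mathcal{M}_1$, which on $\circ$ agrees with $\mathcal{M}_0$: $\circ t = F$ and $\circ T,\circ F \in \{T,t\}$. If $\vartheta(\circ^n\alpha)\in\{T,F\}$, then $\vartheta(\circ^{n+1}\alpha)=T$ by $(vCc^n)$. Otherwise $\vartheta(\circ^n\alpha)=t$, so $\vartheta(\circ^{n+1}\alpha)=F$. In both cases $\vartheta(\circ^{n+1}\alpha)\in\{T,F\}$. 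Applying $(vCc^n)$ once more, now to the formula $\circ\alpha$ (the condition is stated schematically for every formula, and $\circ^n\circ\alpha=\circ^{n+1}\alpha$), gives $\vartheta(\circ^{n+2}\alpha)=T\in D$.

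\textbf{The axioms (ip$^j$), $1\le j<k$.} Condition $(vip^j)$ gives $\vartheta(\circ^j\alpha)=\vartheta(\circ^j\neg\alpha)$ for the same formula $\alpha$. Negation in $\mathcal{M}_1$ is non-deterministic, so we cannot conclude $\vartheta(\neg\circ^j\neg\alpha)=\vartheta(\neg\circ^j\alpha)$ directly. What we can conclude is equality of designation. In the Cbr table, for $x=T$ the negation gives $F$, for $x=F$ it gives a designated value, and for $x=t$ it gives $t$. So whether $\neg x$ is designated is a function of $x$ alone: $\neg x \in D$ iff $x\in\{t,F\}$. Therefore $\vartheta(\neg\circ^j\neg\alpha)\in D$ iff $\vartheta(\neg\circ^j\alpha)\in D$. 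The multioperators for $\to$ and $\land$ always output a designated value when the classical Boolean condition on designation of their inputs holds. Hence both implications, and so the biconditional (ip$^j$), are designated.

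The main obstacle I expect is the step just described for (ip$^j$). Conditions $(vip^j)$ only equate the values of $\circ^j\alpha$ and $\circ^j\neg\alpha$, while the axioms speak about their negations. The argument therefore needs the fact that in Cbr's Nmatrix designation of $\neg x$ is determined by $x$, and this has to be read off Figure~21. A second point to handle is the index mismatch with Definition~\ref{definicao geral}, where (ip$^1$) already appears for $k\geq 2$. The conditions $(vip^1),\dots,(vip^{k-1})$ in $\mathcal{F}_n^k$ match $1\le j<k$ exactly, so every (ip$^j$) in the axiomatization is covered. After these three groups, the induction on derivation length concludes $\vDash_{\mathcal{R}_n^k}\alpha$.
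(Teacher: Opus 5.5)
Your proof is correct and follows essentially the paper's route. The paper spells out only the (ip$^j$) case: it uses $(vip^j)$ to equate $\vartheta(\circ^j\alpha)$ with $\vartheta(\circ^j\neg\alpha)$ and then passes to negations, leaving (cc$^n$) and the Cbr axioms implicit, where they go exactly as you describe. One correction: negation in $\mathcal{M}_1$ is in fact deterministic ($\neg T=F$, $\neg t=t$, $\neg F=T$, forced by (cf) and (ce)), so $\vartheta(\neg\circ^j\alpha)=\vartheta(\neg\circ^j\neg\alpha)$ follows outright, and your weaker argument that designation of $\neg x$ depends only on $x$ is valid but unnecessary.
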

\begin{proof}
    Let us prove in general that L$_n^1$ plus condition (vip$^{j}$) for the RNmatrix validates the axiom ($ip^j$) $\neg$$\circ^{j}\alpha \leftrightarrow \neg$$\circ^{j}\neg$$\alpha$. The argument is pretty straightforward: given that $\vartheta(\circ^{j}\alpha) = \vartheta(\circ^{j}\neg\alpha)$, then $\vartheta(\neg$$\circ^{j}\alpha) = \vartheta(\neg$$\circ^{j}\neg\alpha)$, since the negation in $\mathcal{M}_1$  is deterministic. Therefore, $\vartheta(\neg$$\circ^{j}\alpha \leftrightarrow \neg$$\circ^{j}\neg\alpha) \in$ D, because of the tables in $\mathcal{M}_1$.
\end{proof}

In order to prove completeness, the following result will be fundamental.

\begin{theorem}\label{theorem_savior} Let $1 \leq j \leq k$. Then,
\ $\vdash_{L_n^{k}} \circ^j\alpha \leftrightarrow \circ^j\neg\alpha$.
\end{theorem}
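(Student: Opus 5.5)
We argue syntactically, by induction on $j$, using two facts about mbCciw and its extensions. First, $\circ\beta$ is provably equivalent to ${\sim}(\beta \land \neg\beta)$, where ${\sim}\beta := \neg\beta \land \circ\beta$ is the classical negation. This follows from (bc1) and (ciw). Second, since CPL$^+$ together with ${\sim}$ yields all classical tautologies on the "positive plus ${\sim}$" fragment, provable biconditionals are preserved under the positive connectives and ${\sim}$. They are not, however, preserved under $\neg$ or $\circ$, since neither is congruential in general. So the strategy is to reduce every occurrence of $\circ$ to a positive/${\sim}$ combination of formulas whose equivalence the axioms (dn) and (ip$^i$) give directly.

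For the base case $j=1$ we have $\vdash \circ\alpha \leftrightarrow {\sim}(\alpha\land\neg\alpha)$ and $\vdash \circ\neg\alpha \leftrightarrow {\sim}(\neg\alpha\land\neg\neg\alpha)$. By (dn), $\neg\neg\alpha \leftrightarrow \alpha$, so by classical reasoning $\neg\alpha\land\neg\neg\alpha$ is equivalent to $\alpha\land\neg\alpha$. Applying ${\sim}$, which preserves provable equivalence, gives $\vdash \circ\alpha\leftrightarrow\circ\neg\alpha$. This uses only (dn), so it already holds in L$_n^1$.

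For the inductive step, assume $\vdash \circ^{j}\alpha\leftrightarrow\circ^{j}\neg\alpha$ with $j+1\le k$. Then
\[ \circ^{j+1}\alpha \leftrightarrow {\sim}(\circ^j\alpha \land \neg\circ^j\alpha), \qquad \circ^{j+1}\neg\alpha \leftrightarrow {\sim}(\circ^j\neg\alpha \land \neg\circ^j\neg\alpha). \]
The first conjuncts are equivalent by the induction hypothesis. For the second conjuncts we cannot push $\neg$ through the hypothesis, and this is exactly where the axiom (ip$^{j}$) $\neg\circ^{j}\neg\alpha\leftrightarrow\neg\circ^{j}\alpha$ is needed. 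It is available in L$_n^k$ precisely when $j<k$, which matches $j+1\le k$. Combining the two conjunct equivalences and applying ${\sim}$ gives $\vdash\circ^{j+1}\alpha\leftrightarrow\circ^{j+1}\neg\alpha$.

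The main obstacle is the index bookkeeping. Definition~\ref{definicao geral} lists (ip$^j$) for $1\le j<k$, whereas the running text mentions (ip$^2$) for L$_1^2$ and (ip$^3$) for L$_1^3$. I would fix the reading where (ip$^j$) is present for all $j<k$, and then check that reaching $\circ^k$ requires (ip$^{k-1}$) and nothing beyond it. If instead only the text's indexing is intended, the base case $j=1$ still needs just (dn), and the step from $j$ to $j+1$ needs (ip$^j$). I would verify that the hypothesis set in each reading provides (ip$^1$) through (ip$^{k-1}$), or otherwise derive (ip$^1$) from (dn) together with the $j=1$ case. The remaining care point is to justify that ${\sim}$ and $\land$ preserve provable equivalence in mbCciw, which follows from ${\sim}$ being a classical negation there.
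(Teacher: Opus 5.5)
Your proof is correct and is essentially the paper's own argument. The base case combines (dn) with $\circ\beta \leftrightarrow {\sim}(\beta\land\neg\beta)$ and the fact that ${\sim}$ and $\land$ preserve provable equivalence, and the step from $j$ to $j+1$ combines the induction hypothesis with (ip$^j$), which is available exactly for $j\le k-1$.

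Your indexing worry is only notational: the introduction's ``(ip$^2$)'' is literally the formula $\neg\circ\neg\alpha\leftrightarrow\neg\circ\alpha$, i.e.\ the Definition's (ip$^1$). Your fallback of deriving (ip$^1$) from (dn) plus the case $j=1$ would not work in any case, because $\neg$ is not congruential, as you observed yourself. For instance, for $n\ge 1$ the Cbr Nmatrix lets $\circ\alpha$ and $\circ\neg\alpha$ take the distinct designated values $T$ and $t$.
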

\begin{proof}
    Let $j=1$. Since $\vdash_{L_n^{k}} \neg\neg\alpha \leftrightarrow \alpha$ then $\vdash_{L_n^{k}} (\neg\alpha \land \neg\neg\alpha) \leftrightarrow  (\alpha \land \neg\alpha)$ and so $\vdash_{L_n^{k}} {\sim}(\neg\alpha \land \neg\neg\alpha) \leftrightarrow  {\sim}(\alpha \land \neg\alpha)$. From this, $\vdash_{L_n^{k}} \circ\alpha \leftrightarrow {\circ}\neg\alpha$. Now, assume that $\vdash_{L_n^{k}} \circ^j\alpha \leftrightarrow \circ^j\neg\alpha$ for $1 \leq j \leq k-1$ (induction hypothesis). Note that $\vdash_{L_n^{k}} \neg{\circ}^j\alpha \leftrightarrow \neg{\circ}^j\neg\alpha$, by axiom (ip$^j$). Then $\vdash_{L_n^{k}} (\circ^j\alpha \land \neg{\circ}^j\alpha) \leftrightarrow  (\circ^j\neg\alpha \land \neg{\circ}^j\neg\alpha)$ and so $\vdash_{L_n^{k}} {\sim}(\circ^j\alpha \land \neg{\circ}^j\alpha) \leftrightarrow  {\sim}(\circ^j\neg\alpha \land \neg{\circ}^j\neg\alpha)$. From this, $\vdash_{L_n^{k}} \circ^{j+1}\alpha \leftrightarrow \circ^{j+1}\neg\alpha$. By reasoning inductively, we obtain the desired result.
\end{proof}

\begin{theorem}{$(Completeness\ of\ L_n^{k})$}
\ $\vDash_{\mathcal{R}_n^{k}} \alpha$ implies $\vdash_{L_n^{k}} \alpha$
\end{theorem}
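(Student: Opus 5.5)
We argue by contraposition, following the pattern of Theorem~\ref{completenessL2}. Suppose $\nvdash_{L_n^{k}} \alpha$ (more generally $\Gamma \nvdash_{L_n^k}\alpha$). By Theorem 2.2.6 of \cite{carnielli2016} there is a set $\Delta \supseteq \Gamma$ that is maximal non-trivial with respect to $\alpha$ in L$_n^k$; in particular $\Delta$ is a closed theory containing every theorem of L$_n^k$. We then define $\vartheta:\mathcal{L}_\Sigma \to \{T,t,F\}$ exactly as in Theorem~\ref{completenessL2}: $\vartheta(\beta)=T$ if $\beta\in\Delta,\neg\beta\notin\Delta$; $t$ if both belong to $\Delta$; $F$ if $\beta\notin\Delta,\neg\beta\in\Delta$. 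By (TND) these three cases exhaust all formulas. Since $\alpha\notin\Delta$, we get $\vartheta(\alpha)=F\notin D$ while $\vartheta[\Gamma]\subseteq D$. It therefore suffices to show that $\vartheta\in\mathcal{F}_n^k$.

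The first step is to show $\vartheta\in\mathbf{Val}(\mathcal{M}_1)$. L$_n^k$ is an axiomatic extension of Cbr, so the standard canonical-valuation argument for Cbr from \cite{coniglio2025logicparaconsistentbeliefrevision} applies. The one new ingredient is the deterministic negation table: $\beta\in\Delta$ iff $\neg\neg\beta\in\Delta$ by (dn). This gives $\vartheta(\neg\beta)$ as the swap of $\vartheta(\beta)$, namely $T\mapsto F$, $t\mapsto t$, $F\mapsto T$. The second step is condition $(vCc^n)$. It is verified verbatim as in the proof of Theorem~\ref{completenessL2}, using that $\circ\beta$ is equivalent to ${\sim}(\beta\land\neg\beta)$ and that $\circ^{n+2}\alpha\in\Delta$.

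The third step is the conditions $(vip^j)$ for $1\le j\le k-1$, and this is where Theorem~\ref{theorem_savior} is used. For each such $j$ we have $\vdash_{L_n^k}\circ^j\beta\leftrightarrow\circ^j\neg\beta$, so $\circ^j\beta\in\Delta$ iff $\circ^j\neg\beta\in\Delta$. For the negations, axiom (ip$^j$) gives $\neg\circ^j\beta\in\Delta$ iff $\neg\circ^j\neg\beta\in\Delta$ for $1\le j<k$. Together these facts show that the pairs $(\circ^j\beta,\neg\circ^j\beta)$ and $(\circ^j\neg\beta,\neg\circ^j\neg\beta)$ have identical membership in $\Delta$. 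Hence $\vartheta(\circ^j\beta)=\vartheta(\circ^j\neg\beta)$.

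The main obstacle is a possible index mismatch between the axioms and the restrictions. The definition of $\mathcal{F}_n^k$ asks for $(vip^{j})$ with $j$ up to $k-1$, while $(vip^1)$ concerns the value of $\circ\beta$, which also involves $\neg\circ\beta$. The axioms (ip$^j$) are available only for $j<k$, so we must check that the range of $j$ covered by Theorem~\ref{theorem_savior} together with (ip$^j$) indeed reaches every restriction imposed. If a restriction of level $j$ requires the negated clause at a level not covered by the axioms, we would instead read $(vip^j)$ as concerning only the first coordinate (the truth of $\circ^j\beta$). In that case Theorem~\ref{theorem_savior} alone suffices, and the matching of negations follows from the determinism of the table for $\circ$ in $\mathcal{M}_1$. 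Once $\vartheta\in\mathcal{F}_n^k$ is established, $\Gamma\nvDash_{\mathcal{R}_n^k}\alpha$, completing the contrapositive.
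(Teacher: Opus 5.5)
Your proposal is correct and follows the paper's proof: you build the canonical valuation from a maximal non-trivial $\Delta$, check $(vCc^n)$ as in Theorem~\ref{completenessL2}, and obtain $(vip^j)$ by matching the membership of $\circ^j\beta$ and $\circ^j\neg\beta$ via Theorem~\ref{theorem_savior}, and that of $\neg\circ^j\beta$ and $\neg\circ^j\neg\beta$ via axiom (ip$^j$). The index worry in your last paragraph is moot, because $(vip^j)$ is imposed exactly for $1 \le j \le k-1$, where (ip$^j$) is an axiom; drop the fallback, which would fail anyway, since $(vip^j)$ compares full values in $\{T,t,F\}$ and $\circ$ is not deterministic in $\mathcal{M}_1$ ($\circ T$ and $\circ F$ may each be $T$ or $t$).
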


\begin{proof}
In the present case, we will only consider condition $(vip^j)$ in a general way, and we will prove that the valuation induced by the maximal non-trivial set is in fact a valuation that satisfies $(vCc^n)$ and $(vip^j).$ Let $\vartheta$ be a mapping defined similarly to the case of L$_2^0$ in Theorem \ref{completenessL2}.

\noindent
$(vCc^n)$ Suppose that $\vartheta(\circ^{n}\alpha) \in \{$T,F$\}$. Then, either $\circ^{n}\alpha \not\in \Delta$ or  $\neg{\circ}^{n}\alpha \not\in \Delta$. From this, $\circ^{n}\alpha \land \neg{\circ}^{n}\alpha \not\in \Delta$ and so ${\sim}(\circ^{n}\alpha \land \neg{\circ}^{n}\alpha) \in \Delta$. This implies that $\circ^{n+1}\alpha \in \Delta$. Suppose, by contradiction, that  $\neg{\circ}^{n+1}\alpha \in \Delta$. By reasoning as above, it follows that $\circ^{n+2}\alpha \notin \Delta$. This contradicts the fact that $\circ^{n+2}\alpha \in \Delta$, by axiom (cc$^n$) and the fact that $\Delta$ is a closed theory. Therefore  $\neg{\circ}^{n+1}\alpha \notin \Delta$. From this,  $\vartheta(\circ^{n+1}\alpha) =$ T.\\[1mm]

$(vip^{j})$ Let $1 \leq j \leq k-1$. By contradiction, suppose that $\vartheta(\circ^j\alpha) \neq \vartheta(\circ^j\neg\alpha)$. By Theorem~\ref{theorem_savior}, there are two cases to consider:

\begin{itemize}
    \item[] [$\vartheta(\circ^j\alpha) = $T and $\vartheta(\circ^j\neg\alpha)=$ t] In that case, $\vartheta(\neg{\circ}^j\alpha) =$  F and $\vartheta(\neg{\circ}^j\neg\alpha)=$ t. That is, $\neg{\circ}^j\neg\alpha \in \Delta$ and $\neg{\circ}^j\alpha \not\in \Delta$. From axiom $(ip^j)$, $\vdash_{L_n^{k}} \neg{\circ}^j\alpha \leftrightarrow \neg{\circ}^j\neg\alpha$. Therefore, $(\neg{\circ}^j\alpha \leftrightarrow \neg{\circ}^j\neg\alpha) \in \Delta$ and so $\neg{\circ}^j\alpha \in \Delta$, a contradiction.

    \item[] [$\vartheta(\circ^j\alpha)=$ t and $\vartheta(\circ^j\neg\alpha) =$ T] The reasoning is analogous to the previous case. 
\end{itemize}

We don't need to consider cases in which $\circ^j\alpha$ and $\circ ^j\neg\alpha$ are not both designated, because of Theorem \ref{theorem_savior}.
\end{proof}

Now, it will be proven that L$_n^{n+1}$ is a fixed point in the hierarchy, in the following sense: L$_n^{n+1} =$ L$_n^{m}$, for every $m \geq n+2$.

The key for this result is given in the following lemma:

\begin{lemma}
Let $n \geq 0$ and $m \geq n+ 2$. Then, for every $\alpha$ and $\beta$, $\neg{\circ}^m \alpha \vdash_{L_n^{n+1}}  \beta$.
\end{lemma}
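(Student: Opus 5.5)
The plan is to derive the lemma directly from axiom (cc$^n$) together with (bc1), using only instances of the axiom schema. None of the negation axioms (dn) or (ip$^j$) should be needed. This means the argument already works in L$_n^0$, and therefore in L$_n^{n+1}$, since the latter is an axiomatic extension of the former.

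\emph{Step 1.} Since $m \geq n+2$, write $m = (n+2) + r$ with $r = m-n-2 \geq 0$. Then $\circ^m\alpha$ is literally $\circ^{n+2}(\circ^{r}\alpha)$. Because (cc$^n$) is a schema, instantiating it with $\circ^r\alpha$ gives $\vdash_{L_n^{n+1}} \circ^m\alpha$.

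\emph{Step 2.} In the same way, $\circ^{m+1}\alpha = \circ^{n+2}(\circ^{r+1}\alpha)$, so $\vdash_{L_n^{n+1}} \circ\circ^{m}\alpha$. That is, the formula $\circ^m\alpha$ is itself provably consistent.

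\emph{Step 3.} Instantiate (bc1) with $\circ^m\alpha$ in place of $\alpha$:
\[\circ\circ^m\alpha \rightarrow (\circ^m\alpha \rightarrow (\neg\circ^m\alpha \rightarrow \beta)).\]
Applying MP with the theorem from Step 2 and then with the theorem from Step 1 yields $\vdash \neg\circ^m\alpha \rightarrow \beta$. A final application of MP with the premise $\neg\circ^m\alpha$ gives $\neg{\circ}^m \alpha \vdash_{L_n^{n+1}} \beta$. This mirrors the argument already used for (vCc$^1$) in the proof of Theorem~\ref{valuation2}.

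I expect no real obstacle. The only point needing care is the index bookkeeping: both $m$ and $m+1$ must be at least $n+2$ so that (cc$^n$) applies to each, and this follows from the hypothesis $m \geq n+2$.
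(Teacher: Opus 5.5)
Your proof is correct. It reaches the lemma by a more direct route than the paper. The paper argues by contradiction: it extends $\{\neg\circ^m\alpha\}$ to a set $\Delta$ that is maximal non-trivial with respect to $\beta$ in L$_n^{n+1}$. It then uses the instance $\circ^m\alpha=\circ^{n+2}\circ^{m-(n+2)}\alpha$ of (cc$^n$) to get $\circ^m\alpha\land\neg\circ^m\alpha\in\Delta$, and concludes that $\circ^{m+1}\alpha\notin\Delta$. Finally, it contradicts this with the instance $\circ^{m+1}\alpha=\circ^{n+2}\circ^{m-(n+1)}\alpha$ of (cc$^n$).

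The mathematical content is the same as yours. Both arguments rest on $\circ^m\alpha$ and $\circ^{m+1}\alpha$ being instances of (cc$^n$). The paper's step ``$\gamma,\neg\gamma\in\Delta$ implies $\circ\gamma\notin\Delta$'' is just (bc1) read off in a non-trivial maximal set.

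Your version avoids the Lindenbaum--Asser extension and gives an explicit derivation with three applications of MP. It also makes transparent that only (bc1) and (cc$^n$) are used. So the lemma already holds in L$_n^0$, and even in mbC plus (cc$^n$), with no appeal to (dn), (ip$^j$) or (ciw). The paper's formulation only buys stylistic uniformity with the surrounding completeness and fixed-point arguments, which are all phrased in terms of maximal non-trivial sets.
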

\begin{proof} Suppose, by contradiction,  that $\neg{\circ}^m \alpha \nvdash_{L_n^{n+1}}  \beta$ for some $n \geq 0$, $m \geq n+2$ and formulas  $\alpha$ and $\beta$.
Then, there exists a set $\Delta$ such that  $\neg{\circ}^m \alpha \in \Delta$, and $\Delta$ is  maximal non-trivial  with respect to $\beta$ in L$_n^{n+1}$. By axiom  $(cc^n)$ we have that  ${\circ}^m \alpha={\circ}^{n+2}{\circ}^{m-(n+2)}\alpha$ is also in  $\Delta$, and so  ${\circ}^m\alpha \land \neg{\circ}^m \alpha \in \Delta$. From this, ${\circ}^{m+1} \alpha \notin \Delta$. But  ${\circ}^{m+1} \alpha={\circ}^{n+2}{\circ}^{m-(n+1)}\alpha$ is in $\Delta$, by $(cc^n)$, a contradiction.
\end{proof}

\begin{corollary} \label{coro:fixpoint}
Let $n \geq 0$ and $m \geq n+ 2$. Then, $\vdash_{L_n^{n+1}} \neg$$\circ$$^{m}$$\neg\alpha$ $\leftrightarrow$ $\neg{\circ}^{m}$$\alpha$.
\end{corollary}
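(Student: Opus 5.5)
The plan is to apply the preceding Lemma twice and then conclude with classical positive logic.

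Fix $n \geq 0$ and $m \geq n+2$. The Lemma gives $\neg{\circ}^m\alpha \vdash_{L_n^{n+1}} \beta$ for every pair of formulas $\alpha$ and $\beta$. First, instantiate it with $\alpha$ itself and $\beta := \neg{\circ}^m\neg\alpha$, which yields $\neg{\circ}^m\alpha \vdash_{L_n^{n+1}} \neg{\circ}^m\neg\alpha$. Second, instantiate it with $\neg\alpha$ in place of $\alpha$ and $\beta := \neg{\circ}^m\alpha$, which yields $\neg{\circ}^m\neg\alpha \vdash_{L_n^{n+1}} \neg{\circ}^m\alpha$. The Lemma is stated for arbitrary formulas, so substituting $\neg\alpha$ is legitimate.

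Next, since L$_n^{n+1}$ extends mbC and hence CPL$^+$, the Deduction Metatheorem is available. It turns the two derivabilities into $\vdash \neg{\circ}^m\alpha \rightarrow \neg{\circ}^m\neg\alpha$ and $\vdash \neg{\circ}^m\neg\alpha \rightarrow \neg{\circ}^m\alpha$. Conjunction introduction then gives the biconditional $\neg{\circ}^m\neg\alpha \leftrightarrow \neg{\circ}^m\alpha$, with $\leftrightarrow$ defined as the conjunction of the two implications.

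I expect no real obstacle here. The only point worth checking is that the Deduction Metatheorem holds in these axiomatic extensions. It does, because the only inference rule is MP and every added axiom is a schema, so the usual proof for mbC carries over unchanged. Semantically, the corollary simply reflects that both sides of the biconditional are bottom particles in L$_n^{n+1}$, so they are trivially equivalent.
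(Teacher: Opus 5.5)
Your proof is correct and matches the paper's intent: the paper states this corollary without proof as an immediate consequence of the preceding Lemma. Your two instantiations (with $\alpha$ and with $\neg\alpha$), followed by the Deduction Metatheorem (which the paper itself invokes for L$_n^{n+1}$ in the Fixed Point Theorem) and conjunction introduction, supply exactly the omitted steps.
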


\begin{theorem}{(Fixed Point Theorem)}
Let $n \geq 0$ and $m \geq n+ 2$. Then, L$_n^{n+1} =$ L$_n^{m}$.
\end{theorem}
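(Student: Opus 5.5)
We show the two Hilbert calculi prove the same consequences by checking that each one derives the axioms of the other. By Definition~\ref{definicao geral}, both L$_n^{n+1}$ and L$_n^{m}$ extend L$_n^0$ + (dn) and differ only in their (ip$^j$) axioms. L$_n^{n+1}$ contains (ip$^j$) for $1 \leq j < n+1$, and L$_n^{m}$ contains (ip$^j$) for $1 \leq j < m$. Since $m \geq n+2 > n+1$, every axiom of L$_n^{n+1}$ is an axiom of L$_n^{m}$. Hence L$_n^{n+1} \subseteq$ L$_n^{m}$ with no further argument.

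For the converse we must show that, for every $j$ with $n+1 \leq j < m$, the schema (ip$^j$) $\neg{\circ}^j\neg\alpha \leftrightarrow \neg{\circ}^j\alpha$ is a theorem of L$_n^{n+1}$. Once this holds, the two calculi have the same theorems as axiom schemas and the same rule (MP). By the usual argument that an axiomatic extension by derivable schemas adds nothing, $\Gamma \vdash_{L_n^{m}} \alpha$ implies $\Gamma \vdash_{L_n^{n+1}} \alpha$.

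We split the range of $j$ into two cases.

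\emph{Case $j \geq n+2$.} This is exactly Corollary~\ref{coro:fixpoint}, applied with $m$ replaced by $j$. To justify that corollary from the preceding Lemma: $\neg{\circ}^j\alpha \vdash_{L_n^{n+1}} \neg{\circ}^j\neg\alpha$ and the reverse entailment both hold, since each antecedent is explosive. The Deduction Metatheorem, which is available because the calculi extend CPL$^+$ with MP as the only rule, then turns these two entailments into the biconditional.

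\emph{Case $j = n+1$.} This is the main obstacle, because the Lemma covers only exponents $\geq n+2$. We plan to use Theorem~\ref{theorem_savior} with $k = n+1$, which gives $\vdash_{L_n^{n+1}} \circ^{n+1}\alpha \leftrightarrow \circ^{n+1}\neg\alpha$. We then argue that $\neg{\circ}^{n+1}\beta$ is equivalent to ${\sim}{\circ}^{n+1}\beta$ in L$_n^{0}$:
\begin{itemize}
\item by (cc$^n$), $\circ^{n+2}\beta$ holds, and by (bc1) this makes $\circ^{n+1}\beta \land \neg{\circ}^{n+1}\beta$ explosive;
\item (TND) gives $\circ^{n+1}\beta \lor \neg{\circ}^{n+1}\beta$;
\item so $\neg{\circ}^{n+1}\beta$ is a classical complement of $\circ^{n+1}\beta$, and therefore equivalent to ${\sim}{\circ}^{n+1}\beta$, using the classical negation ${\sim}$ of mbCciw.
\end{itemize}
The positive classical reasoning with ${\sim}$ then carries the equivalence $\circ^{n+1}\alpha \leftrightarrow \circ^{n+1}\neg\alpha$ over to their negations. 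This yields (ip$^{n+1}$) in L$_n^{n+1}$.

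Combining the two cases, every axiom of L$_n^{m}$ is derivable in L$_n^{n+1}$, which gives the equality. As a sanity check, the same conclusion could be reached semantically via the RNmatrices $\mathcal{R}_n^k$ and the soundness and completeness results above. By (vCc$^n$) and the deterministic $\circ$ on t, $\vartheta(\circ^{j}\alpha) = $T for all $j \geq n+2$, so (vip$^j$) is automatic there. For $j = n+1$, the value $\vartheta(\circ^{n+1}\alpha)$ lies in $\{$T,F$\}$ and is determined by the designatedness that Theorem~\ref{theorem_savior} preserves. We would, however, present the syntactic argument.
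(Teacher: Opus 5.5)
Your proof is correct and follows the paper's own argument. You use the same split: $j \geq n+2$ is handled by Corollary~\ref{coro:fixpoint}, i.e.\ the explosiveness of $\neg{\circ}^j\alpha$, and $j = n+1$ by Theorem~\ref{theorem_savior} with $k=n+1$ together with (cc$^n$), which makes $\neg{\circ}^{n+1}$ a classical complement of ${\circ}^{n+1}$. The only difference is presentational: you do the $j=n+1$ step syntactically via $\neg{\circ}^{n+1}\beta \leftrightarrow {\sim}{\circ}^{n+1}\beta$, whereas the paper argues by contradiction in a maximal non-trivial set using $(\ast)$ and then the deduction metatheorem.
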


\begin{proof} 
We need to prove that (ip$^j$) $\neg$$\circ$$^{j}$$\neg\alpha$ $\leftrightarrow$ $\neg{\circ}^{j}$$\alpha$ holds in L$_n^{n+1}$ for $n+1 \leq j < m$. The case $n+2 \leq j < m$ follows from Corollary~\ref{coro:fixpoint}. Consider now the case $j=n+1$. Since $\circ\beta$ is equivalent to ${\sim}(\beta \land \neg \beta)$ in L$_n^{n+1}$ then, for every $\Delta$ maximal non-trivial  with respect to a formula $\gamma$ in  L$_n^{n+1}$, it holds that
\[(\ast) \ \ \ \circ\beta \in \Delta \ \mbox{ if and only if } \ \beta \not\in \Delta \ \mbox{ or } \ \neg\beta \not\in \Delta.\]
Suppose, by contradiction,  that $\neg{\circ}^{n+1} \alpha \nvdash_{L_n^{n+1}}  \neg{\circ}^{n+1} \neg\alpha$.
Then, there exists a set $\Delta$ such that  $\neg{\circ}^{n+1} \alpha \in \Delta$, and $\Delta$ is  maximal non-trivial  with respect to $\neg{\circ}^{n+1} \neg\alpha$ in L$_n^{n+1}$. Since $\neg{\circ}^{n+1} \neg\alpha \notin \Delta$ then ${\circ}^{n+1} \neg\alpha \in \Delta$ and so, by Theorem \ref{theorem_savior}, ${\circ}^{n+1} \alpha \in \Delta$. By~$(\ast)$, $\circ^{n+2}\alpha \notin \Delta$, which is a contradiction because of axiom   $(cc^n)$. From this,  $\neg{\circ}^{n+1} \alpha \vdash_{L_n^{n+1}}  \neg{\circ}^{n+1} \neg\alpha$ and so  $\vdash_{L_n^{n+1}}  \neg{\circ}^{n+1} \alpha  \to \neg{\circ}^{n+1} \neg\alpha$, since L$_n^{n+1}$ satisfies the deduction metatheorem. The proof of the converse implication is analogous. This shows that axiom (ip$^{n+1}$) is also derivable in  L$_n^{n+1}$. Since L$_n^{m}$ is the extension of  L$_n^{n+1}$ obtained by adding axioms  (ip$^j$) for $n+1 \leq j < m$, we conclude that both logics coincide.
\end{proof}

With this general presentation of the $L_n^k$ family of LCCs, we have successfully presented a tool for LFIs that allows for a refined control of consistency at any level of iteration $n$ of the $\circ$ operator. A unified picture for this hierarchy with RNmatrices makes them ready to use and opens new directions in the study of LFIs. 
The philosophical idea of controlling paraconsistent reasoning according to the context brings back to LFIs ideas of da Costa, whereas its hierarchy unifies many different LFIs such as mbCci, Cbr, Cie, and introducing many new LFIs at the same time. The case-study of the basic logics of controlled consistency shows how these different degrees of paraconsistent commitment may be applied in real-life examples of reasoning. Furthermore, the development of LFI3 represents a significant advance for paraconsistent logic in light of its historical connections with LFI1 and many-valued paraconsistent logics. 

\section*{Funding}
Marcelo Coniglio acknowledges support from the National Council for Scientific and Technological
Development (CNPq), research grant 309830/2023-0. Rafael Ongaratto was supported by the S\~ao Paulo Research Foundation (FAPESP), S\~ao Paulo, Brazil [grant numbers 2024/00807-6, 2025/01892-0]. Both authors were also supported by the S\~ao Paulo Research Foundation (FAPESP), through the thematic project {\em Rationality, logic and probability -- RatioLog}, grant number 2020/16353-3.


\end{document}